\newtheorem{theorem}{Theorem}[section]
\newtheorem*{repeatedtheorem}{Theorem \ref{LinDeFinetti}}
\newtheorem{lemma}{Lemma}[section]
\newtheorem*{repeatedlemma1}{Lemma \ref{lemmasym}}
\newtheorem*{repeatedlemma2}{Lemma \ref{lemma2}}
\newtheorem*{remark}{Remark}
\newtheorem{observation}{Observation}
\newtheorem{definition}{Definition}
\newtheorem{optimize}{Optimization problem}[section]
\DeclareMathOperator{\trace}{tr}
\DeclareMathOperator{\tr}{tr}
\DeclareMathOperator{\Tr}{tr}
\titleformat{\chapter}[block]
  {\normalfont\huge\bfseries}{\thechapter}{1em}{\Huge} 
\titlespacing*{\chapter}{0pt}{-20pt}{40pt}
\begin{document}
\begin{titlepage}
      
        \noindent
        \begin{minipage}[c]{0.5\textwidth}
\includegraphics[width=0.87\textwidth]{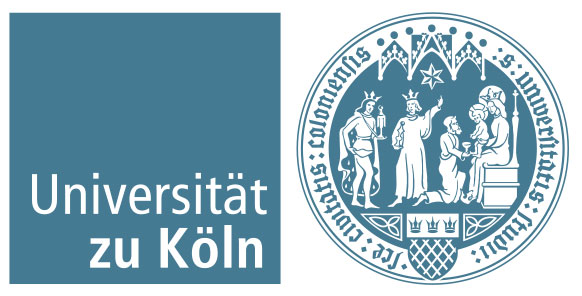}

        \Large{\textsc{University of Cologne }}
\end{minipage}    
\hspace{0.2cm}
\begin{minipage}[c]{0.5\textwidth}
\includegraphics[width=\textwidth]{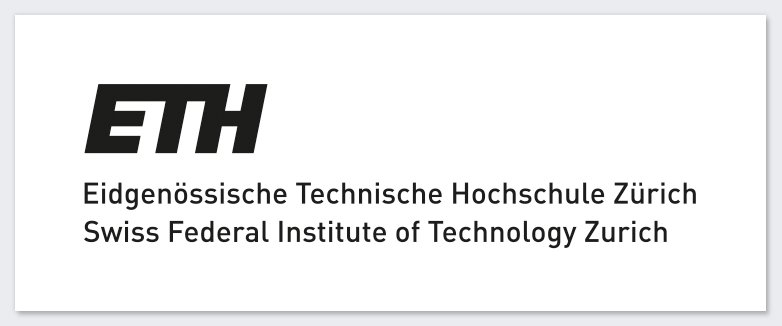}

        \flushright{ \Large{\textsc{ETH Zürich}}}
        
\end{minipage}

        \vspace{0.2cm}
       
                
           \vspace{1.5cm}

        \begin{center}
\textbf{\LARGE{Studying Stabilizer de Finetti Theorems
}}
\end{center}
\begin{center}
\textbf{\LARGE{and Possible Applications
}}
\end{center}
\begin{center}
\textbf{\LARGE{in Quantum Information Processing
}}
\end{center}

        \vspace{1cm}

        

        \begin{center}
        
        \Large{
    
Master Thesis by Paula Belzig

 Date: 30th of April 2020
 }
 \end{center}
         \vfill
        \begin{center}
\Large{
        First assessor: Prof. Dr. David Gross

        Second assessor: Prof. Dr. Renato Renner
        
        Co-supervisor: Dr. Joe Renes
        }
     
 \end{center}
        \begin{flushright}

    \end{flushright}
\end{titlepage}
\newpage
\thispagestyle{empty}
\cleardoublepage
\newpage
\pagenumbering{gobble}
\setlength{\parindent}{0.3cm} 
\section*{Abstract}

Symmetries are of fundamental interest in many areas of science. In quantum information theory, if a quantum state is invariant under permutations of its subsystems, it is a well-known and widely used result that its marginal can be approximated by a mixture of tensor powers of a state on a single subsystem. Applications of this \textit{quantum de Finetti theorem} range from quantum key distribution (QKD) to quantum state tomography and numerical separability tests.
Recently, it has been discovered by Gross, Nezami and Walter that a similar observation can be made for a larger symmetry group than permutations: states that are invariant under stochastic orthogonal symmetry are approximated by tensor powers of stabilizer states, with an exponentially smaller overhead than previously possible. This naturally raises the question if similar improvements could be found for applications where this symmetry appears (or can be enforced). Here, two such examples are investigated.

Using the postselection technique developed by Christandl, König and Renner and generalized by Leverrier, we show that the new version of the quantum de Finetti theorem leads to an improvement of known bounds on the diamond norm. Subsequently, these bounds can be used in the context of a QKD protocol to infer the security of general attacks (where an adversary can manipulate all signals in any way they want) from the security of collective attacks (an adversary is restricted to acting independently and identically on each signal) with a smaller overhead on the security parameter than previously possible. 

Moreover, quantum de Finetti theorems naturally give rise to a way of approximating separable quantum states by a hierarchy of semi-definite programs (SDP). This facilitates (for example) the approximation of the maximum fidelity of a quantum communication channel, which is an indicator for the success of a quantum error correction procedure.
Since the new version of the quantum de Finetti Theorem describes closeness to separable tensor powers of stabilizer states rather than arbitrary separable states, there is a clear motivation to study if it can also lead to a similar SDP hierarchy for optimal Clifford operations. Here, we find that it does, with a minimal change in convergence speed.

\vspace{5cm}
\noindent\textcolor{red}{Update [13 March 2024]: Our proposed fix for the proof from \cite{CKR} for the application of the postselection technique to QKD in Section \ref{cohtocoll} contains an error that changes the scaling of the security parameter. We comment on this at the appropriate places and refer to \cite{NTZLT} for a detailed discussion.}

\newpage
\thispagestyle{empty}
\cleardoublepage
\newpage

\tableofcontents
\thispagestyle{empty} 
\newpage
\pagenumbering{arabic}
\chapter{Introduction and Motivation}
\setcounter{page}{1}
\setcounter{section}{0}
\label{introduction}

As information processing devices are decreasing in size, the impact of quantum effects increases in relevance. In addition, the emergence of novel devices like the quantum computer increase the need for understanding of the rules and limits of information processing in a quantum setting, in particular pertaining to secure and correct data transfer.


This work focuses on two important aspects of quantum information processing - quantum key distribution (QKD) and quantum error correction (QEC) - and how they can be investigated using various mathematical tools, particularly quantum de Finetti theorems.

Quantum de Finetti theorems are an important result and a ubiquitous tool in quantum information connecting permutation invariant quantum states to independent and identically distributed quantum states (\textit{i.i.d.\ states}): If a quantum state that is spread over multiple subsystems is invariant under swappings of the subsystems, its marginal can be approximated by a convex mixture of i.i.d.\ states, where the approximation improves with increased number of traced out systems \cite{RennerPhD,Renner07,KM09,CKMR07}.
Thereby, this theorem can be used to justify one of the most basic assumptions in physics, namely that the properties of a large system can be inferred from experiments conducted on a small part of it: if a physical law holds true in one subsystem, it is justified to assume that it holds in all other subsystems and independently of the subsystem. Similarly, an assumption of an i.i.d.\ structure is also at the basis of many quantum information theoretical problems, like tomography \cite{ChrRen12} and cryptography \cite{RennerPhD,Renner07}. Inferring the structure of a state (or a key encoded in it) requires that its subsystems, which are measured separately, are in fact identically distributed and independent from one another.

Additionally to their use in justifying important basic assumptions, quantum de Finetti theorems also find application in many attempts at studying a system's quantum state - and here, we will study two such examples. 
\\
\\
On the one hand, there is a direct and most natural application of quantum de Finetti theorems to QKD security \cite{RennerPhD}. 

In this day and age, security of our communication systems has become more important than ever. Sharing personal data online always comes at the risk of revealing sensitive information to potentially bad actors. To safely share an encrypted message, a \textit{secret}, between two trusted parties (e.g. you and your friend/your bank/your boss), we use a protocol called key distribution - at the end of which both parties should have an identical string of bits (a key, a password), while an adversary has no information about this shared bitstring. Current classical cryptosystems usually rely on the fact that a very large computing time prohibits a potential adversary from decrypting messages \cite{crypto_book}. However, the physical laws of quantum mechanics can provide trusted parties with an advantage for communication, which can be exploited for QKD. 

While correlations between the trusted parties can become stronger in the quantum mechanical framework, the potential eavesdropper also obtains an advantage: entanglement can also be used to attack. There are two different categories of attack: A most powerful \textit{general attack}, where the adversary has all resources available to them, and a \textit{collective attack}, where the adversary can only act identically and independently on each separate signal. One strong result in QKD is the fact that the security of collective attacks implies the security of general attacks, where the security parameter (and thus the chance of information being revealed to the attacker) changes by a multiplicative factor \cite{RennerPhD,CKR,Lev17}. 

This is a direct result of quantum de Finetti theorems and leads to one of the main results of this thesis: If a QKD protocol has a certain symmetry, namely stochastic orthogonal symmetry introduced in \cite{Gross}, this factor becomes much smaller than for previous attempts relying on permutation invariance, and thus becomes much more achievable in a realistically small setup.
\\
\\
On the other hand, quantum de Finetti theorems find application in approximating separable quantum states \cite{DPS02,DPS04,BBFS18}.

The transfer of a secret message is not only susceptible to an enemy's meddling, but also disturbances in the communication channel, like faulty cables. Some such influences that corrupt the data can be counteracted by error correction (at its simplest: exchanging a broken cable). If one cannot pinpoint the exact instance where an error occurs, or there exist different ways of correcting it, the success of an error correcting procedure can be measured by determining the maximum success probability for transmitting a uniform message over the channel \cite{BF18}. Then, by comparing the maximum success probabilities of different error correcting procedures, the best one can be identified.

In a quantum setting, the safe transfer of data is threatened by classical noise as well as disturbances at the quantum level, like thermal fluctuations \cite{fern2009limitations}, which can be counteracted by QEC. Instead of maximum success probability, the maximum channel fidelity of a given noisy channel is the property that allows for comparison of the success of different QEC procedures \cite{LM15}. Maximum channel fidelity is the bilinear optimization problem of finding the best possible combination of encoder and decoder for a given (and possibly or partly corrected) noisy channel, and comparing input and output state of the whole transfer to see how faithfully the state was recovered.

Instead of the complicated task of optimizing the combination of encoder and decoder, the problem can be recast as a task for finding the best possible separable state, which can be approximated by a hiearchy of converging semidefinite programming (SDP) relaxations \cite{BBFS18}.
This hierarchy is chiefly possible because of a quantum de Finetti theorem approximating states with permutation invariance on one side (for example the decoder's) by states with separability between encoder and decoder.
 In this thesis, we show that an analogous de Finetti theorem derived from \cite{Gross} and a subsequent hierarchy can be found for a different symmetry, leading to a hierarchy for maximum channel fidelity with optimal Clifford decoder (or encoder, or both) instead of optimal arbitrary encoder and decoder, which is interesting for studying Clifford operations. 

\newpage

\chapter{Introducing Stabilizer de Finetti Theorems}
\label{Preliminaries}

Symmetries appear in various physics related contexts: When symmetries exist, systems often become much more straightforward to treat (e.g. crystal structure), and when symmetries cease to exist, it is a key sign of critical behaviour (e.g. phase transitions). Likewise, the same is true for many problems in quantum information theory, where symmetry considerations can lead to important information about the system's state (e.g. concerning bosonic or fermionic systems). One important and widely used result in quantum information theory is the so called quantum de Finetti theorem, which relates symmetric quantum states to i.i.d.\ states, which are often the desired and most well-understood states for the analysis of many applications \cite{RennerPhD,Renner07}.

This chapter contains a short introduction to symmetry groups in Section \ref{sec-symgroups}, focussing on permutations and stochastic orthogonal symmetry in particular, before outlining how the study of de Finetti theorems emerged and evolved in Section \ref{sec-deFinetti}, which ultimately lead to the discovery of a novel de Finetti theorem in \cite{Gross} which lies at the basis of this thesis.

\section{Symmetry Groups}
\label{sec-symgroups}

When there exists a set of operations or transformations which leave a mathematical object unchanged, this property is referred to as a \textit{symmetry} of that object. 
This mathematical property occurs in numerous mathematical contexts, including geometry, calculus and linear algebra. In this thesis, the focus lies on symmetry in the context of group theory and representation theory.

The \textit{symmetry group} $\mathcal{S}$ of an object is the group of all transfomations $s \in \mathcal{S}$ that leave the object invariant. The simplest example is a sphere, which will remain exactly the same under any kind of rotation about its center. Its symmetry group then consists of all these rotations.

Within the space on which $\mathcal{S}$ acts, there could be multiple objects that are left invariant by it. Such a set of objects, which are left invariant under one and the same set of symmetry transformations $s \in \mathcal{S}$ spans a subspace of the whole space, called the \textit{$\mathcal{S}$-invariant subspace}:

\begin{definition}[$\mathcal{S}$-invariant subspace]
Let $\mathcal{H}$ be a vector space, and let $\mathcal{S}$ be a symmetry group acting on $\mathcal{H}$. Then, the $\mathcal{S}$-invariant subspace $(\mathcal{H})^s \subseteq \mathcal{H}$ is defined as the vector space spanned by the projection $\frac{1}{|\mathcal{S}|} \sum_{s\in\mathcal{S}} s$ applied to $\mathcal{H}$.
\end{definition}


Two symmetries that are central to this project are permutation invariance (conventionally used in de Finetti type arguments) and stochastic orthogonal invariance (appearing in \cite{Gross}, and possibly providing an advantage over permutation invariance in de Finetti type considerations).

\subsection{Permutation Invariance}

Instead of working with a singular state $\rho$ on some Hilbert space $\mathcal{H}$, quantum information processing tasks often consider many copies of the same state, $\rho^{\otimes n}$ on a composite Hilbert space $\mathcal{H}^{\otimes n}$, as an input to a protocol (e.g. teleportation \cite{ChrRen12}, quantum key distribution \cite{Ekert91}).
States with this type of structure are generally referred to as i.i.d.\ states (as alluded to in Chapter \ref{introduction}). Because many results on different information processing tasks rely on assuming an i.i.d.\ state as input, it is of great importance and interest to analyse how an arbitrary state differs from it. For many cases, such an analysis comes in the form of de Finetti theorems, which show that a task's symmetry can be utilized to justify an approximation by a mixture of i.i.d.\ states (see Section \ref{sec-deFinetti}).

There are two main symmetry groups associated with such $n$-fold tensor powers: the symmetric group $\mathcal{P}_n$ and the unitary group $\mathcal{U}(d)$, which act on the $n$-fold copy of a $d$-dimensional Hilbert space $\mathcal{H}$ in the following ways:

\begin{definition}[Action of the symmetric group]
Let $\mathcal{H}$ be a $d$-dimensional complex vector space. Then, the action of the symmetric group $\mathcal{P}_n$ on objects in $\mathcal{H}^{\otimes n}$ is defined by 
the permutations
 $\pi \in\mathcal{P}_n$ with
\[{\pi}: \ket{\phi_1} \otimes \dots \otimes \ket{\phi_n} \mapsto\ket{\phi_{\pi_1}} \otimes \dots \otimes \ket{\phi_{\pi_n}}.\]
\end{definition}

\begin{definition}[Action of the tensor power unitary group]
Let $\mathcal{H}$ be a $d$-dimensional complex vector space. Then, the action of the unitary group $\mathcal{U}(d)$ on objects in $\mathcal{H}^{\otimes n}$ is defined by tensor powers of
the unitary matrices
 $U^{\otimes n}$ for $ U\in \mathcal{U}(d)$ with
\[U^{\otimes n}: \ket{\phi_1} \otimes \dots \otimes \ket{\phi_n}\mapsto U\ket{\phi_{1}} \otimes \dots \otimes U\ket{\phi_{n}}.\]
\end{definition}

When considering a $n$-fold tensor power of some state $\rho$, the resulting state $\rho^{\otimes n}$ is obviously invariant under permutation (i.e. switching) of the subsystems, and therefore invariant under the action of the symmetric group $\mathcal{P}_n$.
Furthermore, any problem that involves the eigenvalues of a state (like computing an entropy or a trace) will be invariant under unitary operations $U: \rho\mapsto U\rho U^{\dagger}$ on each subsystem $\rho\in\mathcal{H}$. Consequently, the $n$-fold tensor product $\rho^{\otimes n}$ of the state will also be invariant under tensor powers of unitaries, and thus invariant under the action of $U^{\otimes n}$.


There exists a special group theoretic duality between permutations of subsystems and $n$-fold tensor powers of unitary operators $U^{\otimes n}$, called \textit{Schur-Weyl-Duality} (see, for example \cite{reptheorybook}). This duality emerges from the fact that the two groups' irreducible representations are double commutants; the space of operators commuting with $n$-fold tensor powers $U^{\otimes n}$ is spanned by permutations of the $n$ tensor factors, i.e. the groups determine each other. Schur-Weyl duality is an important tool appearing with applications in various areas of quantum information theory and mathematics, for determining the spectrum of many copies of a density operator \cite{HM02,CG05}, studying the properties of the Haar-random state vector \cite{Harrow05} and (most importantly for this work) proving quantum de Finetti theorems \cite{KM09}.

One property of Schur-Weyl-duality is that the permutations and the unitaries act on a state in different ways, namely with transversality. While the permutations exchange the whole subsystem, a unitary $U$ acts on a single subsystem, which could also contain multiple qudits, for example $r$. This transversality is sketched in Figure \ref{transversality-perm}.

\begin{figure}[htbp]
   \centering
       \includegraphics[width=10cm]{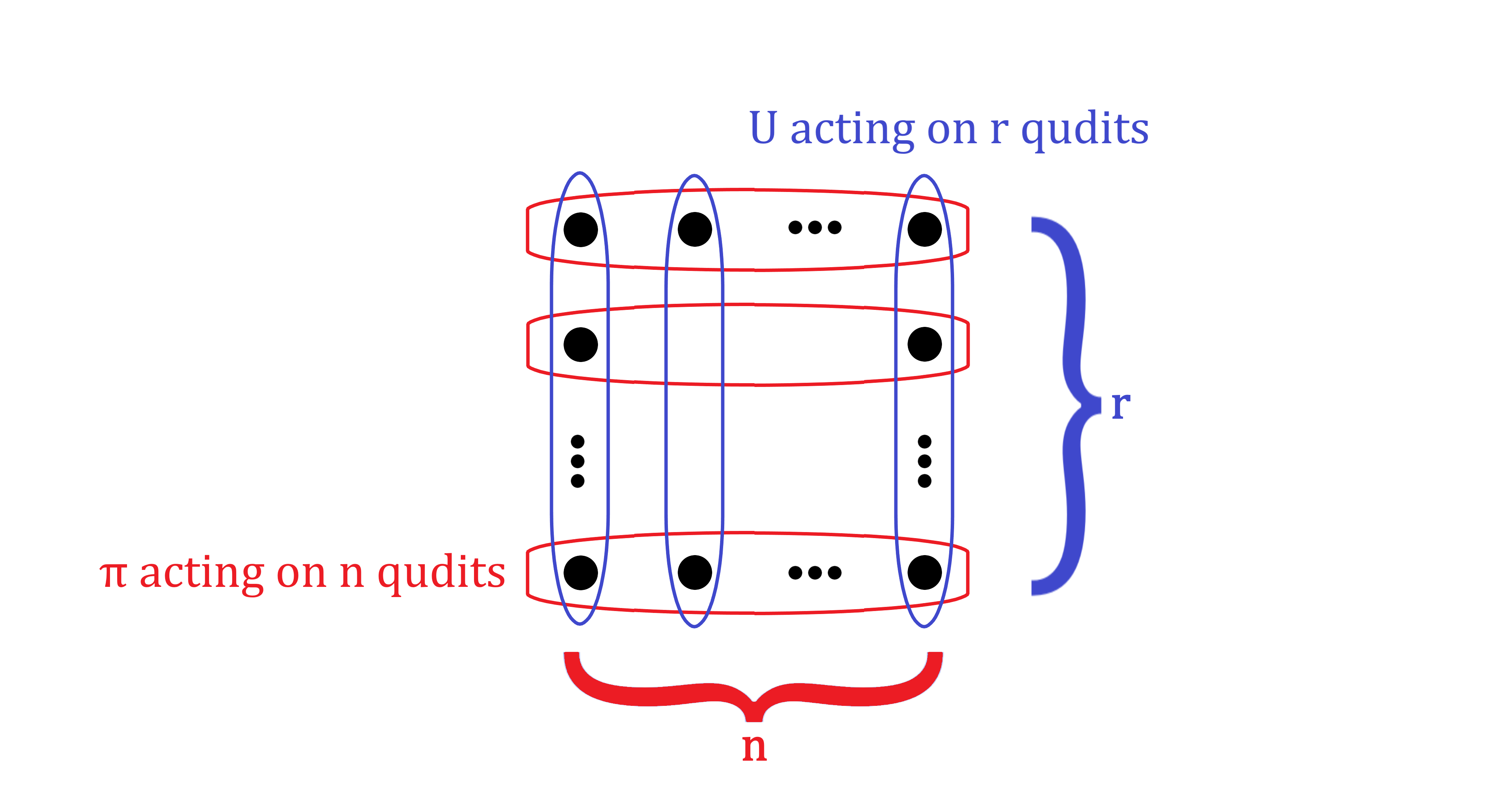}
\caption{Sketch to illustrate how unitaries $U$ and permutations $\pi$ affect the different subsystems of a state. Each black dot corresponds to a qudit on the Hilbert space $\mathcal{H} = \mathbbm{C}^d$. While the unitary operations act on a subsystem containing $r$ qudits, i.e. on a Hilbert space $\mathcal{H}^{\otimes r}$, the permutation operations permute $n$ subsystems, which can be regarded as an action on a Hilbert space $\big(\mathcal{H}^{\otimes r}\big)^{\otimes n}$, or as a separate permutation of all the first qudits in $\mathcal{H}^{\otimes r}$, all the second qubits in $\mathcal{H}^{\otimes r}$, and so on.}
\label{transversality-perm}
\end{figure}



\subsection{Stochastic Orthogonal Invariance}
\label{newsym}

Many aspects of quantum information theory make use of permutation symmetry and its Schur Weyl duality to the unitary group, and its intimate connection to $n$-fold copies of quantum states. However, one could consider a special, interesting subgroup of the unitary group, the \textit{Clifford group}, which is the group of unitary operations that map the Pauli group onto itself under conjugation. It appears in many subfields of quantum information science and is intimately connected to a special set of states, which are called \textit{stabilizer states}, as they can be generated by applying Clifford operations to the state $\ket{00\cdots 0}$ \cite{Gottesman}.
In fact, as it turns out, there is a close relation between restricting oneself to $n$ copies of Clifford unitaries $U_C^{\otimes n}$, which are a subset of unitary operations $U^{\otimes n}$, and considering the set of $n$-fold tensor powers of stabilizer states $\sigma^{\otimes n }$ instead of the set of $n$-fold tensor powers of arbitrary states.
Stabilizer states are a central object of quantum coding and are frequently used as input states of quantum information processing tasks, for example in entanglement based QKD \cite{Ekert91}, where one is interested in such $n$-fold copies.

To further study the subset of Clifford unitaries and applications connected to it, finding a version of Schur Weyl duality for the Clifford group by identifying the commutant of tensor powers of Clifford unitaries is of great interest, which was achieved by Gross, Nezami and Walter in \cite{Gross}. Since the group of Clifford unitaries is a subgroup of the whole unitary group, its commutant contains permutations, but is not restricted to them. Therefore, to construct the commutant, permutations were used as a basis and extended, which eventually lead to the appearance of a new group that leaves tensor powers of stabilizer states invariant: the \textit{stochastic orthogonal group}. For a Clifford unitary acting on $r$ qudits, the commutant of $n$-fold Clifford tensor powers contains $r$-fold tensor powers of the action of the stochastic orthogonal group, which acts on $n$ qudits. Details can be found in Chapter 4 of \cite{Gross}. 

It must be noted that the commutant of tensor power Clifford unitaries is not exclusively spanned by representations of tensor powers of the stochastic orthogonal group, but also contains tensor powers of orthogonal projections onto CSS codes. However, the additional basis elements are not unitary (and not even invertible), and all unitary basis elements correspond to tensor powers of the stochastic orthogonal group. Therefore, it is sufficient to consider this group in the context of the de Finetti theorem. 

Importantly, while the Schur-Weyl duality is not exact for all cases, the property of transversality is recovered in this theory: stochastic orthogonal transformations and Clifford unitaries act transversally, as sketched in Figure \ref{transversality-o}. This means that there are generally three parameters appearing: $d$, the dimension of a singular Hilbert space, $r$, the number of such Hilbert spaces affected by a singular Clifford unitary, and $n$, the number of copies of such Hilbert spaces that are transformed by stochastic orthogonal transformations.

\begin{figure}[htbp]
   \centering
       \includegraphics[width=10cm]{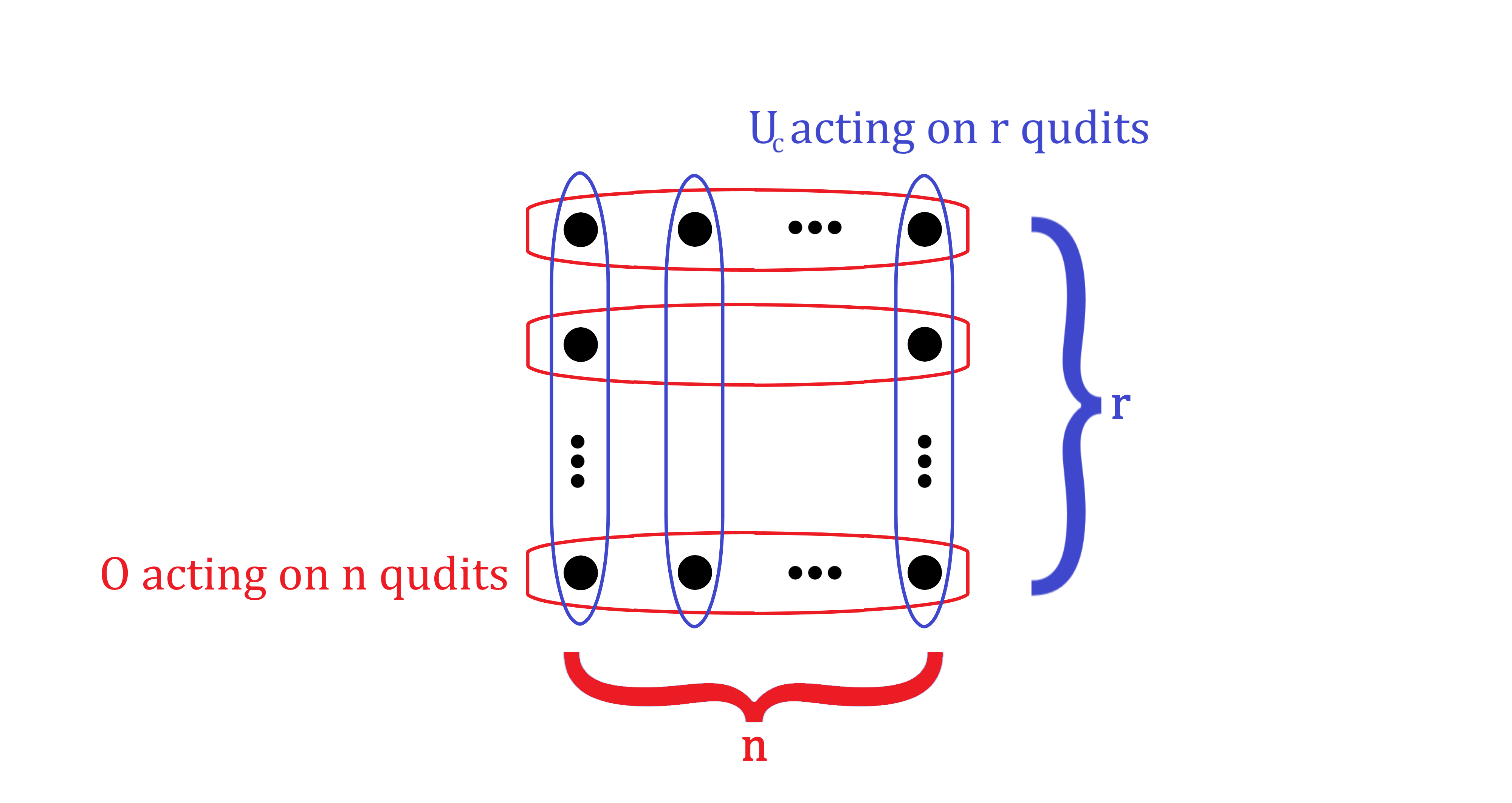}
\caption{Sketch to illustrate how Clifford unitaries $U_C$ and stochastic orthogonal transformations $O$ affect the different subsystems of a state. Each black dot corresponds to a qudit on the Hilbert space $\mathcal{H} = \mathbbm{C}^d$. While the Clifford unitaries act on a subsystem containing $r$ qudits, i.e. on a Hilbert space $\mathcal{H}^{\otimes r}$, the stochastic orthogonal operations act on $n$ such subsystems.
Note the similarities to Figure \ref{transversality-perm}.}
\label{transversality-o}
\end{figure}

The stochastic orthogonal group is defined in the following way:

\begin{definition}[Action of the stochastic orthogonal group] \label{newsym-stochastic}
The action of the stochastic orthogonal group $\mathcal{O}_n(d)$ is defined by the stochastic orthogonal $n\times n$-matrices $O$:
\begin{itemize}
\item The matrices $O$ are discrete orthogonal: $O^TO=\mathbbm{1} \mod d$
\item The matrices $O$ are stochastic: $O {v_1}={v_1} \mod d$ for the all-ones vector ${v_1}=(1,1,...,1)$ containing $n$ ones.
\end{itemize}
\end{definition}


For part of this project (namely, the results in Section \ref{orbitcountingresults}), we consider a slight relaxation of this definition, leading to the \textit{discrete orthogonal group}: 

\begin{definition}[Action of the discrete orthogonal group] 
\label{newsym-ortho}
 The action of the discrete orthogonal group $\tilde{\mathcal{O}}_n(d)$ is defined by the discrete orthogonal $n\times n$-matrices $\tilde{O}$, with
\begin{equation*}
\tilde{O}^T \tilde{O}=\mathbbm{1} \mod d.
\end{equation*}
\end{definition}

This relaxation is justified because we are interested in the particular task of counting orbits, i.e. basis elements which are distinct under the action of the stochastic orthogonal group. The stochastic orthogonal group is a subgroup of the discrete orthogonal group which leaves the all-ones vector invariant. As long as $n$ is not a multiple of the local dimension $d$, there is a direct relation between orbits of the stochastic orthogonal group $\mathcal{O}_n(d)$ and the discrete orthogonal group $\tilde{\mathcal{O}}_{n-1}(d)$. 
A more detailed justification of this relaxation is given in Section \ref{orbitcountingresults}. 

The stochastic orthogonal group always contains permutations. In some cases (for example for $(n,d)=(2,d)$, $(3,2)$ or $(3,3)$), the groups are actually equal. 

For qubits, some special group elements can be identified: In addition to the usual permutation matrices, because of the modulo constraint, the binary complement of any permutation will also be part of the group of stochastic orthogonal matrices. These kinds of operations are termed \textit{anti-permutations} in \cite{Gross}. 
For example, the $r$-qubit anti-identity $O_{\tilde{I}} \in \mathcal{O}_6 (2)$ on $n=6$ subsystem copies is the following $n\times n$-matrix:

\begin{equation}
O_{\tilde{I}}=
\begin{pmatrix}
0 & 1 & 1 & 1 & 1 &1  \\
1 & 0 & 1 & 1 & 1 &1  \\
1 & 1 & 0 & 1 & 1 &1  \\
1 & 1 & 1 & 0 & 1 &1  \\
1 & 1 & 1 & 1 & 0 &1  \\
1 & 1 & 1 & 1 & 1 &0  
\end{pmatrix}
\end{equation}

For $n$ copies, the $r$-qubit anti-identity representation (acting on $\big((\mathbbm{C}^d)^{\otimes r}\big)^{\otimes n}$) is given by:

\begin{equation} \label{anti-identity}
R(O_{\tilde{I}})=\frac{1}{2^r} \big( \mathbbm{1}^{\otimes n} + X^{\otimes n} + Y^{\otimes n} + Z^{\otimes n} \big)^{\otimes r}
\end{equation}

There are some particularities connected to stabilizer states which lead to some constraints on $n$ and $d$ for all results in \cite{Gross}, including the de Finetti theorem which is the basis of this project. To counteract this restriction, instead of considering invariance under the stochastic orthogonal group, investigating a subset of non-trivial operations that leave tensor powers of stabilizer states invariant could lead to analogous results for combinations of $n$ and $d$ that were excluded before.
In particular, these restrictions encompass that the theorem does not hold for qubits ($d=2$), which are interesting for numerical studies and easy examples. 
As we will observe in the next section, considering a state's invariance under permutations plus anti-identity leads to a de Finetti statement for qubits.

More details and examples describing the commutant of Clifford unitaries can be found in Chapter 4.3 of \cite{Gross}.

\section{Using Symmetries to Approximate States: de Finetti Theorems}
\label{sec-deFinetti}

Almost anything we could want to know about a quantum system is intimately connected to the system's quantum state. Studying states, and classifying them and their correlations, is therefore a central objective in quantum theory, for example in quantum state tomography \cite{Blume_Kohout_2010} or entanglement certification \cite{Friis_2018}. In quantum information, where one is frequently interested in multiple copies of one state, there is a class of states that is of particular importance: i.i.d.\ states. Given a quantum state $\rho$ on a Hilbert space $\mathcal{H}=\mathbbm{C}^d$, a tensor product of $n$ copies of the state $\rho$, the state $\rho \otimes \rho\otimes \cdots\otimes\rho=\rho^{\otimes n}$ on the Hilbert space $\mathcal{H}^{\otimes n}=\big(\mathbbm{C}^d\big)^{\otimes n}$, would be an example of an i.i.d.\ state, as it is independently and identically distributed over the $n$ subsystems. Any mixture of i.i.d.\ states is also an i.i.d.\ state.

Many quantum information processing tasks assume this structure, and most mathematical framework and results are built around it. This assumption is connected to a very important result and tool in quantum information processing: quantum de Finetti theorems. On its own, a quantum de Finetti theorem describes the closeness of a class of states (usually: permutation invariant states) to an i.i.d.\ state, which can be used to justify an i.i.d. assumption and give an error on it. Thereby, this theorem can also be employed as a mathematical tool for approximating states, for example in terms of a numerical hierarchy.

The classical de Finetti theorem was first introduced in \cite{deFinetti28} and \cite{deFinetti37}, the first of which was translated in \cite{deFinetti_translation}; further details about de Finetti's work on probability theory and statistics can be found in \cite{Cifarelli96}. It is a statement relating symmetric probability distributions to i.i.d.\ probability distributions. More specifically, it states that a marginal distribution (of a potentially small subset of variables) of a symmetric probability distribution is close (with an error $\leq\epsilon$) to a mixture of i.i.d.\ probability distributions. 

Clearly, a quantum analogue of such a statement, where a marginal of a large symmetric state could be related to an i.i.d.\ state, is of interest for many problems in quantum information processing. First attempts at generalizing the classical de Finetti theorem to a quantum context can be found in \cite{HM76,CFS02}, and subsequently garnered significant interest following \cite{RennerPhD}, where this idea was first explored in the context of its most immediate and obvious application, the security of a QKD protocol with a given symmetry. One important result is the fact that the permutation invariance of a given protocol can be used to prove that its security against general attacks (where an adversary may act on all signals at once and even be entangled with the system) can be inferred from its security against collective attacks (where the adversary acts i.i.d.ly on each signal).

However, this is by far not the only situation where quantum de Finetti theorems have found application. Many quantum information theory problems previously relied on the assumption that the resources are independent and ideally distributed, which can now be scrutinized and often justified via de Finetti type arguments, for example in the study of quantum tomography \cite{ChrRen12}. Furthermore, de Finetti theorems are useful for bounding the diamond norm of a permutation invariant channel \cite{CKR}, and can be employed to provide an alternative proof of quantum Shannon reverse coding theorem \cite{Berta11}. In addition, there is a close connection to the approximation of separable states by a hierarchy of symmetric extensions \cite{DPS04}. Studying the set of separable states is a difficult but ubiquitous problem with application to countless aspects of quantum information theory, and of great importance for improving our understanding of entanglement in general \cite{DPS02}.

Since the earlier versions of the theorem require the number of traced out systems to be rather large (which is especially problematic considering the size of the devices that are currently being developed), an improvement in the form of the exponential de Finetti theorem \cite{Renner07} was proposed. In this version, the state must only be exponentially close to the uniform state. However, the resulting bounds are still largely unattainable in practical implementations.
Several more attempts have been made to generalize and explore the possibilities of this theorem \cite{NOP09,KM09,BH2017,LevCerf09,Lev16,BCS12}.

The most recent and currently best known version of a finite quantum de Finetti theorem is the following:

\begin{theorem}[Quantum de Finetti Theorem, see \cite{CKMR07}]
\label{thm-DeFinetti}    
Let $\rho_{B_1^n}$ be a quantum state on ${(\mathcal{H}_B)}^{\otimes n}={\big({(\mathbbm{C}^d)}\big)}^{\otimes n}$ that commutes with the action of $\mathcal{P}_n$. Let $k\in[1,n]$. Then, there exists a probability distribution $p$ on the set of mixed states on $\mathbbm{C}^d$, such that 
\[  \Big\|\rho_{B_1^k} - \int p(\rho_B) \rho_B^{\otimes k}\Big\|_{\tr} \leq 2d^{2}\frac{k}{n}.  \]
\end{theorem}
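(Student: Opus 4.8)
The plan is to reduce the mixed-state statement to a \emph{pure}-state de Finetti bound that is \emph{linear} in the dimension, and to pay for the mixedness by doubling the single-site dimension exactly once. First I would invoke the symmetric-purification lemma: any $\rho_{B_1^n}$ commuting with $\mathcal{P}_n$ admits a purification $\ket{\psi}\in\mathrm{Sym}^n(\mathcal{H}_B\otimes\mathcal{H}_R)$ with $\mathcal{H}_R\cong\mathcal{H}_B$, i.e.\ the purification can be chosen inside the symmetric subspace of the doubled, $d^2$-dimensional site. Since the partial trace over the $k$ reference copies $R$ sends a pure product state $\ket{\Phi}\bra{\Phi}^{\otimes k}$ (with $\ket{\Phi}\in\mathcal{H}_B\otimes\mathcal{H}_R$) to a mixed i.i.d.\ state $\rho_B^{\otimes k}$, and since the trace norm does not increase under partial trace, it suffices to approximate the pure symmetric state $\ket{\psi}$ by a mixture of pure product states with an error controlled \emph{linearly} by the single-site dimension $D:=d^2$. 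Substituting $D=d^2$ at the very end is precisely what converts a linear-in-$D$ estimate into the advertised $d^2$, rather than a quadratic $d^4$.

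For the pure symmetric state I would run a measure-and-prepare strategy built on the resolution of identity of the symmetric subspace,
\[ P_{\mathrm{Sym}^m}=\binom{m+D-1}{D-1}\int \ket{\phi}\bra{\phi}^{\otimes m}\, d\phi, \]
where $d\phi$ is the unitarily invariant measure on pure states of $\mathbbm{C}^D$. Because every marginal of a symmetric state is again symmetric, the reduced state of $\ket{\psi}$ on the last $n-k$ sites is supported on $\mathrm{Sym}^{n-k}$, so the covariant POVM $\{M_\phi=\binom{n-k+D-1}{D-1}\ket{\phi}\bra{\phi}^{\otimes(n-k)}\, d\phi\}$ resolves the identity there. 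Measuring those sites and recording the outcome defines both the probability measure $p$ and a family of conditional states $\rho_\phi$ on the first $k$ sites. Since $\int M_\phi$ acts as the identity on the relevant marginal, the averaged conditional state reproduces the true marginal $\psi_{1\ldots k}$, i.e.\ $\int p(\phi)\,\rho_\phi\, d\phi=\psi_{1\ldots k}$, and the natural de Finetti approximant is $\int p(\phi)\,\ket{\phi}\bra{\phi}^{\otimes k}\, d\phi$.

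By convexity of the trace norm the error then collapses to the \emph{average} distance $\int p(\phi)\,\|\rho_\phi-\ket{\phi}\bra{\phi}^{\otimes k}\|_{\tr}\, d\phi$ between each conditional state and the product state labelled by its own outcome. Estimating this quantity is the heart — and the main obstacle — of the argument: I would bound it by comparing the symmetric-subspace dimensions $\binom{n+D-1}{D-1}$ and $\binom{n-k+D-1}{D-1}$, equivalently via a gentle-measurement / fidelity argument showing that finding all $n-k$ measured sites in the state $\ket{\phi}$ forces the remaining $k$ sites to lie close to $\ket{\phi}^{\otimes k}$, up to corrections of order $Dk/n$. The delicate part is to carry out this combinatorial estimate so that the dependence on $D$ stays \emph{linear}; any slack here would yield $d^4$ after the substitution $D=d^2$. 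Combining the triangle inequality with the linear-in-$D$ bound and then tracing out the references transports the pure-state approximation down to the mixed i.i.d.\ form and delivers $\big\|\rho_{B_1^k}-\int p(\rho_B)\rho_B^{\otimes k}\big\|_{\tr}\leq 2d^2\frac{k}{n}$, with everything outside the central estimate being routine.
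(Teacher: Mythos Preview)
The paper does not prove this theorem; it merely states it with a citation to \cite{CKMR07} and uses it as a known background result. There is therefore no ``paper's own proof'' to compare against.

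That said, your proposal is a faithful outline of the actual argument in \cite{CKMR07}: purify the permutation-invariant $\rho_{B_1^n}$ into $\mathrm{Sym}^n(\mathbbm{C}^d\otimes\mathbbm{C}^d)$, apply the pure-state de Finetti bound with single-site dimension $D=d^2$, and trace out the references. The covariant measure-and-prepare scheme on the last $n-k$ copies and the ratio-of-symmetric-dimensions estimate are exactly the ingredients used there to obtain the linear-in-$D$ error $2Dk/n$. You correctly flag the one genuinely nontrivial step: keeping the dimension dependence linear rather than quadratic in the combinatorial/fidelity estimate. Your sketch leaves that step at the level of ``delicate'' rather than executed, but the structure is right.
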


In \cite{Gross}, a new version with promisingly low error values was proposed, which is the basis of our analysis in this work.
In contrast to first versions, this new de Finetti theorem takes into account a new symmetry beyond permutation symmetry. Instead, it considers protocols that are invariant under stochastic orthogonal symmetry, as introduced in Definition \ref{newsym}. Because of particularities of the stabilizer formalism, this theorem will only hold for some specific cases, namely for odd prime dimensions $d$.

\begin{theorem}[Stabilizer de Finetti Theorem, see \cite{Gross}, Theorem 7.6]
\label{StabDeFinetti}    
Let $\rho_{B_1^n}$ be a quantum state on ${(\mathcal{H}_B)}^{\otimes n}={\big({(\mathbbm{C}^d)}^{\otimes r}\big)}^{\otimes n}$ that commutes with the action of $\mathcal{O}_n(d)$, with $d$ being an odd prime. Let $k\in[1,n]$. Then, there exists a probability distribution $p_S$ on the set of mixed stabilizer states of $r$ qudits, such that
\[ \Big\|\rho_{B_1^k} - \sum_{\sigma} p_S(\sigma_B) \sigma_B^{\otimes k}\Big\|_{\tr} \leq 2d^{2(r+1)^2}d^{-\frac{1}{2} (n-k)}  .\]
\end{theorem}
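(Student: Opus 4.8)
The plan is to follow the template of the standard finite quantum de Finetti theorem (Theorem~\ref{thm-DeFinetti}), replacing the symmetric group $\mathcal{P}_n$ by the stochastic orthogonal group $\mathcal{O}_n(d)$, the symmetric subspace by the $\mathcal{O}_n(d)$-invariant subspace of $\big((\mathbbm{C}^d)^{\otimes r}\big)^{\otimes n}$, and the coherent tensor powers $\ket{\psi}\bra{\psi}^{\otimes n}$ by stabilizer tensor powers $\sigma^{\otimes n}$. The object that realizes the approximation is a measure-and-prepare channel: on the $n-k$ discarded subsystems we apply a POVM whose elements are proportional to stabilizer tensor powers $\sigma^{\otimes(n-k)}$, and conditioned on the outcome $\sigma$ we re-prepare $\sigma^{\otimes k}$ on the $k$ retained subsystems. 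Any output of such a channel is automatically of the claimed form $\sum_\sigma p_S(\sigma_B)\,\sigma_B^{\otimes k}$ with $p_S$ a probability distribution over mixed stabilizer states, so the whole content of the theorem is the quantitative estimate that this output is trace-distance close to the genuine marginal $\rho_{B_1^k}$.

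The first ingredient I would import is the Clifford--Weyl duality established in \cite{Gross}: for odd prime $d$, the commutant of the Clifford tensor powers $U_C^{\otimes n}$ is spanned, on its unitary part, by the representations $R(O)$ with $O\in\mathcal{O}_n(d)$, so that the $\mathcal{O}_n(d)$-invariant subspace is exactly the span of the stabilizer powers $\{\sigma^{\otimes n}\}$. A direct consequence is that the uniform stabilizer average $\tfrac{1}{N}\sum_\sigma \sigma^{\otimes m}$ (with $N$ the number of stabilizer states) is a full-rank positive operator on the $\mathcal{O}_m(d)$-invariant subspace, and hence, after inverting its frame operator on that subspace, defines an informationally complete stabilizer POVM. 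This is the replacement for the coherent-state resolution of the identity used in the symmetric case, and it is what lets the invariance hypothesis on $\rho$ be converted into the measure-and-prepare form above.

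With this in place, the trace-distance error reduces, through the standard measure-and-prepare estimate, to a comparison of the stabilizer frame structures at orders $n$ and $k$, i.e.\ to how much the order-$n$ invariant subspace refines the order-$k$ one. Here lies the decisive difference from the permutation case: the symmetric-subspace dimension $\binom{m+d-1}{d-1}$ grows polynomially in $m$, which is what produces the mild $k/n$ decay of Theorem~\ref{thm-DeFinetti}; by contrast, the dimension $\Tr(P_m)$ of the $\mathcal{O}_m(d)$-invariant subspace \emph{saturates} at a constant value once $m$ is large, while each additional discarded copy contributes a multiplicative suppression of $d^{-1/2}$. It is this saturation, together with the squaring inherent in the trace-distance estimate, that produces the constant prefactor $2d^{2(r+1)^2}$, and the per-copy suppression that produces the exponential decay $d^{-\frac{1}{2}(n-k)}$ in the number of discarded systems.

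I expect the main obstacle to be the representation-theoretic bookkeeping behind these dimension estimates rather than the soft measure-and-prepare argument. Two points demand care. First, one must handle the fact, noted already in the discussion of the commutant above, that it also contains the non-unitary CSS-code projectors, so one has to argue that only the $\mathcal{O}_n(d)$ part contributes to the stabilizer averages and to the invariant-subspace projector. Second, one must establish the sharp counting that $\Tr(P_m)$ saturates at the value encoded by $d^{2(r+1)^2}$ with per-copy suppression $d^{-1/2}$; the restriction to odd prime $d$ enters precisely here, since it is what guarantees the clean Clifford--Weyl duality and the orbit structure of $\mathcal{O}_n(d)$ underlying the count. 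The exponential-in-$n$ improvement over the permutation-invariant theorem is entirely a consequence of this saturation, so pinning it down quantitatively is the heart of the argument.
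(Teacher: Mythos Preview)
The paper does not prove Theorem~\ref{StabDeFinetti}; it is quoted verbatim from \cite{Gross} (Theorem~7.6 there) and used as a black box throughout. So there is no ``paper's own proof'' to compare against---your sketch should be measured against the argument in \cite{Gross} itself.

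That said, your outline is broadly on the right track for the proof in \cite{Gross}: the measure-and-prepare channel built from a stabilizer POVM on the discarded $n-k$ copies, with Clifford--Weyl duality supplying the role that Schur--Weyl duality plays in the permutation case. One point where your heuristic is slightly off: the exponential decay does not come from the invariant-subspace dimension $\Tr(P_m)$ ``saturating at a constant''---it continues to grow with $m$. The $d^{-\frac{1}{2}(n-k)}$ suppression instead arises from overlap estimates between distinct stabilizer tensor powers $\sigma^{\otimes m}$ and $\tau^{\otimes m}$, which decay as $d^{-m/2}$ per copy when $\sigma\neq\tau$; this is what makes the stabilizer frame operator on the invariant subspace nearly diagonal for large $m$, and hence makes the measure-and-prepare approximation exponentially good. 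The prefactor $2d^{2(r+1)^2}$ comes from bounding the number of stabilizer states and the norm of the inverse frame operator, not from a saturated dimension. Your two flagged obstacles (the non-unitary CSS projectors in the commutant and the odd-prime restriction) are genuine and are handled in \cite{Gross}, but the quantitative core is the overlap estimate rather than a dimension count.
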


However, knowledge about the commutant of tensor powers of the Clifford group can also be used to infer a version of the stabilizer de Finetti theorem for a simpler case, dimension $d=2$. In general, a state being invariant under something more than permutations can lead to an alternative version - so a special case to regard is the case of invariance under permutation and one additional group action, the anti-identity introduced in \eqref{anti-identity}. This leads to the following stabilizer de Finetti theorem for qubits:

\begin{theorem}[Stabilizer de Finetti Theorem for Qubits, see \cite{Gross}, Theorem 7.7]
\label{StabDeFinettiQubit}    
Let $\rho_{B_1^n}$ be a quantum state on ${(\mathcal{H}_B)}^{\otimes n}={\big({(\mathbbm{C}^2)}^{\otimes r}\big)}^{\otimes n}$ that commutes with all permutations and the action of the anti-identity on a subsystem consisting of six $r$-qubit blocks. Let $k\in[1,n]$ be a multiple of six. Then, there exists a probability distribution $p_S$ on the set of mixed stabilizer states of $r$ qubits, such that
\[  \Big\|\rho_{B_1^k} - \sum_{\sigma} p_S(\sigma_B) \sigma_B^{\otimes k}\Big\|_{\tr} \leq 6 \sqrt{2}\ 2^r \sqrt{\frac{k}{n}}  .\]
\end{theorem}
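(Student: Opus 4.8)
The plan is to build the de Finetti approximation directly out of stabilizer states rather than to post-process the generic permutation de Finetti theorem (Theorem \ref{thm-DeFinetti}); a naive post-processing would only yield a dimension factor $\sim 4^{r}$ and mixtures of arbitrary states, whereas the target bound has the much better factor $2^{r}$ and genuinely restricts to stabilizer states, so the stabilizer structure must enter the construction itself. First I would record the group $G=\langle \mathcal{P}_n, R(O_{\tilde I})\rangle$ generated by all permutations together with the six-block anti-identity of \eqref{anti-identity}, and twirl: since $\rho_{B_1^n}$ commutes with $G$ there is no loss in replacing it by its $G$-average, so the whole argument takes place on the $G$-invariant subspace of $\big((\mathbbm{C}^2)^{\otimes r}\big)^{\otimes n}$. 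The hypothesis that $k$ is a multiple of six is what lets me group the retained systems into $k/6$ blocks on each of which the anti-identity acts in the canonical form, so that invariance can be applied uniformly.

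The heart of the construction is a measure-and-prepare channel built from the $r$-qubit stabilizer states. I would use the fact that the single-qubit stabilizer states form a complex-projective $3$-design (the six octahedron states, which is the source of the factor $6$), so that a suitably rescaled family $\{c\,\sigma\}$ is a genuine POVM resolving the identity and, more importantly, so that the low tensor-power averages $\tfrac1{|S|}\sum_\sigma \sigma^{\otimes t}$ (the average over all $r$-qubit stabilizer states $\sigma$) agree with Haar averages for $t\le 3$. The operator $T_n=\tfrac1{|S|}\sum_\sigma \sigma^{\otimes n}$ lies in the commutant of the Clifford tensor powers and is therefore expressible through $G$; I would show it is proportional to the projector onto the relevant $G$-invariant subspace and track its normalization, since this is precisely where the $2^r$ enters. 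Measuring the $n-k$ discarded systems with the stabilizer POVM then defines the probability distribution $p_S(\sigma)$, and the prepared state on the retained $k$ systems is $\sigma^{\otimes k}$ by construction.

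It remains to bound $\big\|\rho_{B_1^k}-\sum_\sigma p_S(\sigma)\,\sigma^{\otimes k}\big\|_{\tr}$, and this is the step I expect to be the main obstacle. The strategy is to pass from trace distance to a second moment: using the $G$-symmetry to reduce the comparison to the overlap statistics of the measurement outcomes, one bounds the trace distance by a constant times the standard deviation of the empirical stabilizer frequencies, and the sampling fluctuation of reading off $n-k$ out of $n$ exchangeable systems produces the characteristic $\sqrt{k/n}$ scaling rather than the $k/n$ of the generic theorem. Concretely I would combine a Cauchy--Schwarz estimate with the $3$-design identity from the previous step to control the relevant variance; the constant $6\sqrt2\,2^r$ should then assemble from the six single-qubit stabilizer states, the factor $\sqrt2$ of the Cauchy--Schwarz/gentle-measurement step, and the dimensional normalization $2^r$ of $T_n$.

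The principal conceptual difficulty, beyond the error estimate, is justifying that invariance under the \emph{smaller} group $G$ (permutations plus a single anti-identity) --- rather than the full stochastic orthogonal group, which is unavailable here because $d=2$ is not an odd prime and so Theorem \ref{StabDeFinetti} does not apply --- already forces the marginal into stabilizer-mixture form. I would establish this by identifying the $G$-invariant subspace with the span of the $\sigma^{\otimes n}$ and verifying that the non-unitary CSS-projector elements of the Clifford commutant do not enlarge it, so that $T_n$ really does act as the identity there up to the normalization computed above.
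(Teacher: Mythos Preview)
The paper does not prove Theorem \ref{StabDeFinettiQubit}; it is quoted verbatim from \cite{Gross} (their Theorem~7.7) as a preliminary input to the later results, alongside Theorems \ref{thm-DeFinetti} and \ref{StabDeFinetti}, and no proof or proof sketch is given anywhere in the text. There is therefore nothing in this paper to compare your proposal against.

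For what it is worth, your sketch is pointed in the right direction for the argument in \cite{Gross}: the six-block anti-identity enters precisely because qubit stabilizer states form a projective $3$-design (so that the Clifford commutant structure stabilises at $t=3$, and six copies are needed to house the anti-identity), and the $\sqrt{k/n}$ rather than $k/n$ scaling does come from a second-moment/Cauchy--Schwarz estimate rather than the information-theoretic argument behind Theorem \ref{thm-DeFinetti}. Where your outline is weakest is the step you already flag as ``the main obstacle'': you do not actually specify which quantity you take the variance of, nor how the $G$-invariance collapses the second moment to something computable, so as written this remains an outline rather than a proof. If you want to fill the gap, the place to look is the explicit identification of the $G$-invariant projector with a weighted sum of $\sigma^{\otimes n}$ over stabilizer states and the resulting overlap formula; the constants $6\sqrt{2}\,2^r$ then drop out of that computation rather than being assembled heuristically from separate pieces as you suggest.
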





For a true comparison between Theorem \ref{thm-DeFinetti} and Theorems \ref{StabDeFinetti} and \ref{StabDeFinettiQubit}, it must be noted that the subspaces which are permuted or orthogonally transformed differ slightly. In the stabilizer de Finetti theorems, each subspace contains $r$ qudits (or qubits) that are transformed by stochastic orthogonal transformations (or permutations and anti-identity). To compare the bounds to the bound of the de Finetti theorem with permutation invariance, one therefore needs to consider a subspace containing $r$ qudits, of which there are $n$ copies, which are permuted. Then, for a state on ${(\mathcal{H}_B)}^{\otimes n}={\big({(\mathbbm{C}^2)}^{\otimes r}\big)}^{\otimes n}$ that commutes with all permutations of the $n$ subsystems, the permutation-based de Finetti theorem in \ref{thm-DeFinetti} holds if the dimension $d$ in the bound is replaced by $d^r$.

Therefore, the bounds which should be compared in the case where $d=2$ and $k\leq n$ is a multiple of 6 are the following:
\[\epsilon_{\text{perm}}=2^{2r+1}\frac{k}{n}\text{ and } \epsilon_{\text{anti-identity}}= 12 \sqrt{2}\ 2^r \sqrt{\frac{k}{n}}\] 

It can be noted that, for qubits, approximating an orthogonally invariant state by a convex combination of stabilizer states is more costly in the limit of large $n$ than an approximation of permutation invariant states by a convex combination of i.i.d.\ states. But while the stabilizer de Finetti theorem for qubits in \ref{StabDeFinettiQubit} leads to no improvement in the convergence (and therefore e.g. error rate for QKD security proofs), it is nonetheless interesting in cases where one is interested in studying stabilizer states specifically (like, for example, in Chapter \ref{chapter-hierarchy}).

For $d$ an odd prime, the following bounds are eligible for comparison:

\[\epsilon_{\text{perm}}=2d^{2r}\frac{k}{n}\text{ and } \epsilon_{\text{ortho}}=2d^{2(r+1)^2}d^{-\frac{1}{2} (n-k)} \] 

As the bound for stochastic orthogonal invariance is exponential in the number of copies $n$, this bound provides a significant improvement in the limit of large $n$. Therefore, using this de Finetti theorem has two advantages, which both motivate this project: On the one hand, it shows improved convergence in the limit of large $n$, which is interesting for QKD error rates (Chapter \ref{chapter-postselec}). On the other hand, it is interesting for problems that could benefit from using stabilizer states as input (like entanglement based QKD, or quantum error correction related problems, Chapter \ref{chapter-hierarchy}).

\newpage

\chapter{The Postselection Technique Based on the Stabilizer de Finetti Theorem}

\label{chapter-postselec}
The postselection technique as introduced in \cite{CKR} is a mathematical tool to bound the diamond norm without performing an optimization over the state space, which will be motivated and described in detail in Section \ref{postselec-previous}.
In this chapter, the technique is generalized to accommodate different symmetry groups, which includes developing the necessary mathematical framework in Section \ref{postselectiongeneralized} and investigating its usefulness in QKD settings in Section \ref{horriblecalculation}, before it can be applied to the stochastic orthogonal group via Section \ref{orbitcountingresults} and \ref{StochasticOrbitCounting}.

\section{Postselection Technique and its Relation to QKD}
\label{postselec-previous}

QKD is the task of generating a string of bits (a \textit{key}) that is only known to two trusted distant parties, Alice and Bob, whilst being completely unknown to an additional party, the adversary Eve. It is assumed that Alice and Bob are linked by an authentic classical communication channel and a potentially insecure quantum channel.

The setting can be described as follows: between them, Alice and Bob ideally share $n$ copies of a quantum state on a $d$-dimensional Hilbert space $\mathcal{H}$, on which they perform measurements to obtain a secret key. In total, they thereby have access to a state on the \textit{trusted} Hilbert space $\mathcal{H}^{\otimes n} \equiv \mathcal{H}_T$, which decomposes into the singular Hilbert spaces $\mathcal{H}$. 
However, the adversary Eve also has access to a Hilbert space of her own, denoted by $\mathcal{H}_E$, and Alice and Bob's input state could be correlated with Eve's state, therefore giving her indirect access to the states that encode Alice and Bob's secret key. Therefore, proving security of a given QKD protocol essentially revolves around bounding Eve's influence on the state on the whole, combined Hilbert space $\mathcal{H}_T\otimes\mathcal{H}_E$, see Figure \ref{Hilbertspaces}.

A QKD protocol for two parties is described by a quantum channel, which is a completely positive, trace preserving (CPTP) map $\mathcal{E}$, that transforms Alice and Bob's shared input state into two keys. 
The security of a QKD protocol $\mathcal{E}$ is defined through a comparison between such a quantum channel $\mathcal{E}$ and an ideal version of the same protocol $\mathcal{F}$, which transforms the same input state into two identical keys, with no information about those keys leaking to the eavesdropper Eve. The closer the actual protocol $\mathcal{E}$ is to the ideal protocol $\mathcal{F}$, the more secure it is. In other words, if the distance between the two CPTP maps $\mathcal{E}$ and $\mathcal{F}$ is very small, while taking Eve's access into account, the protocols are approximately equal and $\mathcal{E}$ is approximately secure.

Mathematically, a natural measure of security is therefore given by a difference between two CPTP maps in terms of the diamond distance.

\begin{figure}[H]
   \centering
       \includegraphics[width=7.5cm]{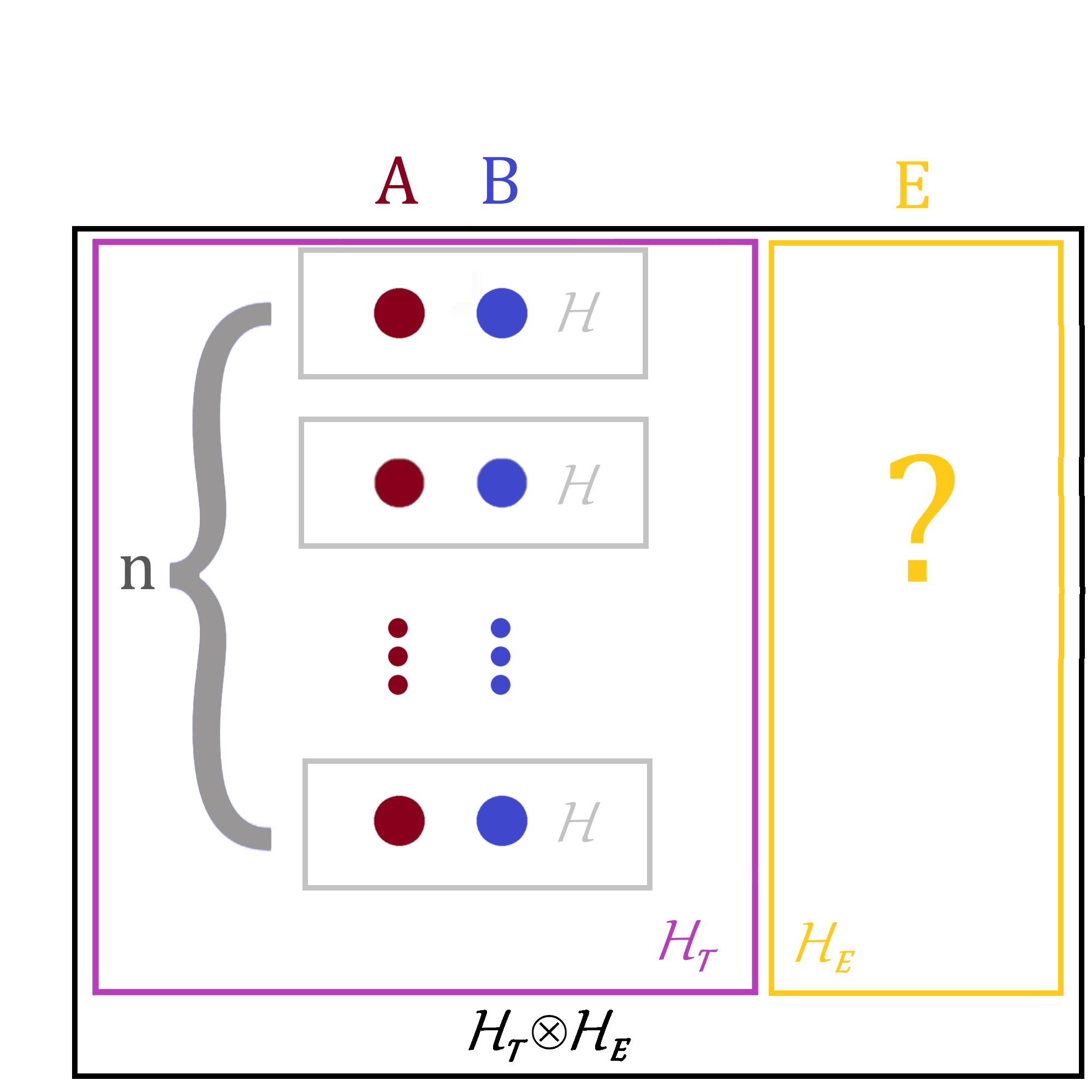}
\caption{Sketch of the parts making up the total Hilbert space of Alice, Bob and Eve in a QKD scheme. The pair of Alice and Bob shares $n$ copies of a quantum state on the $d$-dimensional Hilbert space $\mathcal{H}$, thereby having access to the entire trusted Hilbert space $\mathcal{H}_T=\mathcal{H}^{\otimes n}$. Eve, the untrusted party and potential eavesdropper, also has access to some Hilbert space $\mathcal{H}_E$, which the trusted parties know nothing about. In total, the entire Hilbert space therefore consists of the combination of Alice and Bob's and Eve's space: $\mathcal{H}_T\otimes \mathcal{H}_E$. Usually, it is assumed that the worst case scenario applies, where Eve has access to a complete copy of Alice and Bob's Hilbert space, i.e. $\mathcal{H}_E=\mathcal{H}_T=\mathcal{H}^{\otimes n}$. (With less, she could not properly entangle herself with each of the trusted qudits.)}
\label{Hilbertspaces}
\end{figure}

\begin{definition}[Diamond distance between two CPTP maps]
\label{diamondnorm}
Let $\Delta=\mathcal{E}-\mathcal{F}$ be a difference between CPTP maps $\mathcal{E}$ and $\mathcal{F}$ acting on the Hilbert space $\mathcal{H}_T$, let $\mathcal{H}_E$ be a Hilbert space, and $\rho_{TE} \in \mathfrak{S}(\mathcal{H}_T \otimes \mathcal{H}_E)$ be a quantum state. Then, the diamond distance between the maps, i.e. the diamond norm of $\Delta$, is given by
\[\big\| \Delta\big\|_{\diamond}= \sup_{\rho_{TE} \in \mathfrak{S}(\mathcal{H}_T \otimes \mathcal{H}_E)} \big\| (\Delta \otimes \mathbbm{1}_E) \rho_{TE}\big\|_{\tr} \]
\end{definition}

The trace norm in the above definition is defined by: $ \big\| \rho \big\|_{\tr}= \frac{1}{2} \big\| \rho \big\|_{1}= \frac{1}{2}\tr \big( \sqrt{\rho^{\dagger} \rho}\big)$.

In principle, the diamond norm constitutes taking two suprema, one over the input state, and one over the dimension of the space $\mathcal{H}_E$ (Eve's space) that the identity acts on; however, for positive quantum states, the suprema are reached for $\mathcal{H}_E$ having equal dimension to $\mathcal{H}_T$ \cite{Kitaev97}, which we suppose for the QKD analysis.
Using this distance measure, security of a protocol is then defined by comparing the diamond distance of a protocol $\mathcal{E}$ and a perfect version of the protocol $\mathcal{F}$ to some small parameter $\epsilon$ that bounds the probability of not obtaining perfectly identical, secret keys for Alice and Bob.
\begin{definition}[$\epsilon$-security] \label{security}
A protocol $\mathcal{E}$ is $\epsilon$-secure if \[\big\|\mathcal{E}-\mathcal{F}\big\|_{\diamond} \leq \epsilon.\]
\end{definition}

Clearly, this kind of comparison between an actual protocol and a perfect version (where Alice and Bob share i.i.d.\ states that are decoupled from an adversary's state) is closely related to quantum de Finetti theorems, where a marginal of a large symmetric state can be approximated by an i.i.d.\ state (more precisely, a mixture of i.i.d.\ states). In fact, this has been a key motivation for studying quantum de Finetti theorems in the first place \cite{RennerPhD}. Thereby, it can be proven that the security of a protocol against a collective attack implies its security against a much more powerful general attack, at the cost of an overhead factor. In most quantum de Finetti theorems, tracing out a small number of systems leads to unattainable security parameters (large $\epsilon$ and impossible key lengths for practical purposes)  \cite{Hayashi06,SR08}. 
Subsequent improvements on the bound of the theorem resulted in improvements of the corresponding security parameters, but are in general still far from useful for current applications.

Another alternative way of improving security bounds, in particular the additional factor between collective and general attacks, emerged in the form of the postselection technique \cite{CKR} discovered by Christandl, König and Renner.
By definition, computing the diamond norm in principle entails an optimization over a large number of states. However, this can be circumvented by the postselection technique, which showed that it is sufficient to consider a single input state $\tau_{TR}$. Namely, if the map $\Delta$ is invariant under the group of permutations $\mathcal{P}_n$, the diamond norm of $\Delta$ is bound in the following way:

\begin{equation}
\big\| \Delta\big\|_{\diamond} \leq g_{n,d} \big\| (\Delta \otimes \mathbbm{1}_R) \tau_{T R} \big\|_{\tr}
\end{equation}

with

\begin{equation} \label{permutationsgtd}
g_{n,d}= \left( \begin{array}{c} d^2-1+n \\\ n \end{array}\right) \leq (n+1)^{d^2-1}.
\end{equation}

The state $\tau_{T {R}}$ is the purification of a particular input state, which is called the {\textit{de Finetti state}}, with a specific form:
\begin{equation}\label{Tau-T-Perm}
\tau_{T}= \int \rho^{\otimes n} \mu(\rho)
\end{equation}
with $\rho \in \mathfrak{S}({\mathcal{H}})$, where $\mathcal{H}$ is a $d$-dimensional Hilbert space.  $\mu$ is the measure induced by the Hilbert-Schmidt metric on a single subsystem $\operatorname{End}(\mathcal{H})$.

Apart from its application to QKD, this is an interesting mathematical bound on the diamond norm which is useful in any scenario where the distinguishability of quantum operators is of interest \cite{7464342,Cao_2015}.
Note that an alternative version and proof of this bound can be found in \cite{Fawzi15}.
The technique was generalized and adapted to continuous variable schemes by Leverrier \cite{Lev17}. In the case of continuous variables, there is an additional step to replace the total Hilbert space with a finite dimensional Hilbert space (``energy test").

In the next section, we will show that the technique can also be generalized to accommodate symmetry groups beyond permutation invariance, leading to an analogous bound on the diamond norm that depends on the dimension of an invariant subspace.


\section{Generalized Postselection Technique for a Symmetry Group $\mathcal{S}$}
\label{postselectiongeneralized}

Generalizing de Finetti type arguments for symmetries beyond permutation is not a novel concept \cite{Lev17,BCS12}. In fact, there is a short comment in the outlook and appendix of \cite{CKR} itself which outlines how the postselection technique can be generalized to arbitrary symmetries. Nonetheless, it can be considered interesting to analyze this in more detail and scrutinize the necessary steps and assumptions. There is one main assumption, the resolution of identity, which we will investigate at the beginning of this section and in Appendix \ref{appendix-resolutionofidentity}. All necessary lemmata for the proof of the bound on the diamond norm for arbitrary symmetry groups $\mathcal{S}$ can be found in  Appendix \ref{appendix-PostselecLemma}.

As a first step of extending the postselection technique to a symmetry group $\mathcal{S}$, a generalization of the de Finetti state must be considered. In analogy to \eqref{Tau-T-Perm}, such a (symmetry-dependent) \textit{de Finetti state of $\mathcal{S}$} is given by $\tau_T\in\mathfrak{S}(\mathcal{H}_T)=\mathfrak{S}(\mathcal{H}^{\otimes n})$ :
\begin{equation}
\label{deFinettistate}
\tau_{T}= \int \rho^{\otimes n} D_{\mathcal{S}} (\rho)
\end{equation}

with states $\rho \in \mathfrak{S}({\mathcal{H}} )$ and a symmetry-dependent integration measure $D_{\mathcal{S}}$. 

The assumption that justifies applying the postselection technique is not tied to $\tau_T$ directly, but to a purification of it, where each of the $n$ subsystems has been purified separately. This defines the state $\tau_{TE}\in\mathfrak{S}(\mathcal{H}_T\otimes \mathcal{H}_E)=\mathfrak{S}(\mathcal{H}^{\otimes n} \otimes \mathcal{H}^{\otimes n})$:
\begin{equation}\label{tau-TE}
\tau_{TE}= \int \tilde{\rho}^{\otimes n} {d}_{\mathcal{S}} (\tilde{\rho})
\end{equation}
with the pure states $\tilde{\rho}\in\mathfrak{S}(\mathcal{H}\otimes \mathcal{H})$.

Then, for the postselection technique to be applicable, this state has to fulfill the following relation, called {\textit{resolution of identity}}:
\begin{equation}
\label{resolutionofid-eq}
\tau_{TE}= \frac{1}{g_{n,d}} \mathbbm{1}_{  (\mathcal{H}^{\otimes n} \otimes \mathcal{H}^{\otimes n})^{(s\otimes\overline{s})} } =\frac{1}{\dim(\mathcal{N})} \mathbbm{1}_{{N}}
\end{equation}
with
\begin{equation}
g_{n,d}= dim ( (\mathcal{H}^{\otimes n} \otimes \mathcal{H}^{\otimes n})^{(s\otimes\overline{s})} ) =\dim(\mathcal{N}).
\end{equation}

In other words, the state $\tau_{TE}$ must be maximally mixed on the $\mathcal{S}$-invariant subspace $(\mathcal{H}_T \otimes \mathcal{H}_E)^{(s\otimes\overline{s})}\equiv \mathcal{N}$. This implies that the symmetry-dependent integration measure must be invariant under the symmetry $\mathcal{S}$.

Since the resolution of identity is a key ingredient to proving a bound on the diamond norm using the postselection technique, it must therefore be assumed that an integration measure $d_{\mathcal{S}}(\cdot)$ exists such that (\ref{resolutionofid-eq}) holds and $D_{\mathcal{S}}(\cdot)$ exists.
Since all states $\tilde{\rho}$ are pure, $d_{\mathcal{S}}(\cdot)$ is a measure on pure states. Then, the existence of an integration measure $D_{\mathcal{S}}(\cdot)$ can be inferred from the existence of $d_{\mathcal{S}}(\cdot)$. 
To allow for a resolution of identity, the integration measure $d_S(\cdot)$ must be invariant under the symmetry $\mathcal{S}$.

Rephrased in mathematical terms, the postselection technique can only be applied for a symmetry group $\mathcal{S}$ if the following condition holds: 

 \begin{equation} \label{Cond0}
\operatorname{Cond0}=\Big\{ \exists d_{\mathcal{S}} (\cdot)\  \operatorname{s.t.} \ \int \tilde{\rho}^{\otimes n} {d}_{\mathcal{S}} (\tilde{\rho})= \frac{1}{g_{n,d}} \mathbbm{1}_{  {N} } \Big\} 
\end{equation}

For the symmetric group $\mathcal{P}_n$ with permutations as its representation, 
 the required integration measures are the one induced by the Hilbert-Schmidt metric as $D_{\mathcal{P}_n}(\cdot)$, and the one induced by the Haar measure on the unitary group acting on $\mathcal{H} \otimes \mathcal{H}$ as ${d}_{\mathcal{P}_n}(\cdot)$. Then, since the measure for $\tau_{TE}$ is invariant under permutations and under unitary group (its dual under Schur-Weyl-duality), a resolution of identity holds because of Schur's lemma \cite{CKR}.


In the case of continuous variables \cite{Lev17}, another integration measure is needed. Instead of independent and ideally distributed states, the states of interest are the general coherent states $\rho^{\otimes n}=\ket{\Lambda,n} \bra{\Lambda,n}=(\ket{\Lambda,1}\bra{\Lambda,1})^{\otimes n}$.
For such states, an invariant measure on the corresponding space 
is established in \cite{Per86}, and their resolution of identity relies on a version of Schur's lemma for general unimodular groups with a square-integrable representation (such as $SU(p,q)$) \cite{Lev16}. In addition, since a truncation of the Hilbert space is performed, it has to be shown that the finite dimensional truncated space also incorporates an (approximate) resolution of identity \cite{Lev17} to make bounding the diamond norm possible.


However, when extending the postselection technique to other symmetry groups, this condition must also be met. Therefore, it may be instructive and helpful to rephrase the assumption using conditions in linear algebra. In Appendix \ref{appendix-resolutionofidentity}, two necessary, but not sufficient conditions are given.

Thus, if the resolution of identity holds for a given symmetry, it can be shown that it is sufficient to consider the particular (symmetry-dependent) state $\tau_{T}$ when computing the diamond norm of an $\mathcal{S}$-invariant map, instead of performing an optimization over a large number of states. For the purpose of this section, we assume that the symmetry group $\mathcal{S}$ fulfills the condition; later, when we apply the postselection technique to the stochastic orthogonal group, we will find that resolution of identity holds for this case.

Given the following preliminary lemmata, which are generalizations of lemmata found in \cite{RennerPhD} and \cite{CKR}, the proof of the bound on the diamond norm becomes rather concise. The first lemma shows that it is sufficient to consider states with support on the invariant subspace instead of arbitrary states for the diamond norm of an invariant map.

\begin{lemma}
\label{lemmasym}

Let $\Delta$ be a linear map from $\operatorname{End}(\mathcal{H}^{\otimes n})$ to $\operatorname{End}(\mathcal{H}')$ that is invariant under the symmetry $\mathcal{S}$.
For any finite-dimensional space $\mathcal{M}$ and any (arbitrary) density operator $\sigma_{TM}$, the following holds:

\begin{equation*}
\big\| \  (\Delta \otimes \mathbbm{1})  \sigma_{TM} \big\|_{\tr}
 \leq
 \big\| (\Delta \otimes \mathbbm{1}) \rho_{TE} \big\|_{\tr}
\end{equation*}

where $ \rho_{TE}$  is a state with support on $\mathcal{N}=(\mathcal{H}^{\otimes n} \otimes \mathcal{H}^{\otimes n})^{(s\otimes \overline{s})}$.
\end{lemma}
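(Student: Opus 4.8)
The plan is to reduce the diamond-norm optimisation to pure inputs, to observe that the quantity to be bounded depends only on the reduced state on $\mathcal{H}_T$, to symmetrise that reduced state, and finally to exhibit a canonical purification of the symmetrised marginal that is genuinely supported on $\mathcal{N}$.

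First I would dispose of the ancilla $\mathcal{M}$ and of the mixedness of $\sigma_{TM}$. Purifying $\sigma_{TM}$ to a pure state $\ket{\psi}_{TE}$ with a reference $\mathcal{H}_E \cong \mathcal{H}^{\otimes n}$, and using that the trace norm is contractive under the partial trace (a CPTP map applied after $\Delta\otimes\mathbbm{1}$), one obtains $\| (\Delta\otimes\mathbbm{1})\sigma_{TM} \|_{\tr} \le \| (\Delta\otimes\mathbbm{1}_E)\ket{\psi}\bra{\psi} \|_{\tr}$. The conceptually central observation is then that this right-hand side depends only on the marginal $\rho_T = \tr_E \ket{\psi}\bra{\psi}$: any two purifications of $\rho_T$ are related by an isometry on $E$ alone, and conjugating by $\mathbbm{1}_T\otimes V$ leaves the output trace norm unchanged because $\Delta$ acts only on $T$. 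I write $F(\rho_T)$ for this common value.

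Next I would symmetrise the marginal. The $\mathcal{S}$-invariance of $\Delta$ (covariance with a unitary representation $V_s$ on the output), together with unitary invariance of the trace norm, gives $F(U_s\rho_T U_s^\dagger) = F(\rho_T)$ for every $s$. To upgrade this to the twirl $\bar\rho_T = \tfrac{1}{|\mathcal{S}|}\sum_s U_s\rho_T U_s^\dagger$, I would build a \emph{flagged} purification $\ket{\Psi}_{TEG} = |\mathcal{S}|^{-1/2}\sum_s (U_s\otimes\mathbbm{1}_E)\ket{\psi}\otimes\ket{s}_G$, whose $T$-marginal is exactly $\bar\rho_T$, and then apply the pinching (dephasing) channel on the flag register $G$. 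Since pinching is trace-norm non-increasing and returns a block-diagonal operator whose norm is the average of the $s$-blocks, this yields $F(\bar\rho_T) \ge \tfrac{1}{|\mathcal{S}|}\sum_s F(U_s\rho_T U_s^\dagger) = F(\rho_T)$. Hence it suffices to bound $F$ on $\mathcal{S}$-invariant marginals.

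Finally, for an $\mathcal{S}$-invariant marginal I would take the canonical purification $\rho_{TE} = (\sqrt{\bar\rho_T}\otimes\mathbbm{1})\ket{\Phi}\bra{\Phi}(\sqrt{\bar\rho_T}\otimes\mathbbm{1})^\dagger$ built on the unnormalised maximally entangled vector $\ket{\Phi} = \sum_i \ket{i}_T\ket{i}_E$, and check that it lies in $\mathcal{N}$: using the ricochet identity $(A\otimes\mathbbm{1})\ket{\Phi} = (\mathbbm{1}\otimes A^T)\ket{\Phi}$ one computes $(U_s\otimes\bar{U}_s)(\sqrt{\bar\rho_T}\otimes\mathbbm{1})\ket{\Phi} = (U_s\sqrt{\bar\rho_T}U_s^\dagger\otimes\mathbbm{1})\ket{\Phi} = (\sqrt{\bar\rho_T}\otimes\mathbbm{1})\ket{\Phi}$, the last equality being exactly the $\mathcal{S}$-invariance of $\bar\rho_T$. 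Chaining the three inequalities gives the claim. I expect this last step to be the main obstacle: it is where the conjugate representation $\bar{U}_s$ defining $\mathcal{N} = (\mathcal{H}^{\otimes n}\otimes\mathcal{H}^{\otimes n})^{(s\otimes\overline{s})}$ and the identification $\mathcal{H}_E\cong\mathcal{H}^{\otimes n}$ are essential, and where one must ensure that invariance of the \emph{marginal} translates into genuine \emph{support} on the invariant subspace rather than mere invariance under conjugation (which the twirl alone would give). The covariance-plus-pinching step is the engine driving the reduction to invariant marginals, but the ricochet computation is the part that must be executed with care for a general symmetry group $\mathcal{S}$.
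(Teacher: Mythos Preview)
Your proof is correct and follows essentially the same route as the paper: both introduce a classical flag register indexed by $\mathcal{S}$ to symmetrise the $T$-marginal (your pinching on $G$ is exactly the block-diagonal argument the paper uses on its register $\mathcal{L}$), and both then invoke the canonical/eigenbasis purification together with the ricochet identity $(A\otimes\mathbbm{1})\ket{\Phi}=(\mathbbm{1}\otimes A^T)\ket{\Phi}$ to land in $\mathcal{N}$ (this is the paper's Lemma~\ref{purif}). Your abstraction $F(\rho_T)$ and the explicit pinching make the structure a bit cleaner, but the proof is the same.
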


Then, the second lemma establishes a connection between states with support on the invariant subspace and the de Finetti state in \eqref{tau-TE}.

\begin{lemma}
\label{lemma2}
Suppose we have a state $\rho_{TE}$ with support on the subspace $\mathcal{N}=  (\mathcal{H}^{\otimes n} \otimes \mathcal{H}^{\otimes n})^{(s\otimes\overline{s})}\subseteq \mathcal{H}^{\otimes n} \otimes \mathcal{H}^{\otimes n}$. For any such state, there exists a linear completely positive trace-nonincreasing map $\mathcal{C}: \operatorname{End}(\mathcal{N}) \to \mathbbm{C}$ such that
\begin{equation*}
\rho_{TE}= g_{n,d} (\mathbbm{1}_{TE} \otimes \mathcal{C}) (\tau_{TEN})
\end{equation*}
with $\operatorname{tr}_{\mathcal{N}} \tau_{TEN} = \tau_{TE}=\frac{1}{g_{n,d}} \mathbbm{1}_{ N}$  \eqref{tau-TE}, and $g_{n,d}=\dim(\mathcal{N}) $.
\end{lemma}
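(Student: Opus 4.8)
The plan is to recognise this statement as an instance of the Choi--Jamio{\l}kowski correspondence, exactly as in the permutation-invariant case of \cite{CKR}, and to construct the map $\mathcal{C}$ explicitly. First I would fix an orthonormal basis $\{\ket{i}\}_{i=1}^{g_{n,d}}$ of the invariant subspace $\mathcal{N}\subseteq \mathcal{H}^{\otimes n}\otimes\mathcal{H}^{\otimes n}$, and let $N$ be a reference system isomorphic to $\mathcal{N}$ with basis $\{\ket{i}_N\}$. I would then take $\tau_{TEN}$ to be the (pure) maximally entangled state
\[
\tau_{TEN}=\ket{\Phi}\bra{\Phi}, \qquad \ket{\Phi}=\frac{1}{\sqrt{g_{n,d}}}\sum_{i=1}^{g_{n,d}}\ket{i}_{TE}\otimes\ket{i}_N .
\]
Tracing out $N$ gives $\operatorname{tr}_N\tau_{TEN}=\frac{1}{g_{n,d}}\sum_i\ket{i}\bra{i}=\frac{1}{g_{n,d}}\mathbbm{1}_{N}=\tau_{TE}$, where the middle equality is the resolution of identity \eqref{resolutionofid-eq}; so $\tau_{TEN}$ is indeed the required extension of the de Finetti state.

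Next I would parametrise the candidate maps. Any completely positive $\mathcal{C}:\operatorname{End}(N)\to\mathbbm{C}$ is of the form $\mathcal{C}(\omega)=\operatorname{tr}(M\omega)$ for a unique $M\geq 0$ on $N$, and it is trace-nonincreasing precisely when $M\leq\mathbbm{1}$. Applying such a map to the $N$-half of $\tau_{TEN}$ and using $\mathcal{C}(\ket{i}\bra{j}_N)=\langle j|M|i\rangle$ yields
\[
g_{n,d}\,(\mathbbm{1}_{TE}\otimes\mathcal{C})(\tau_{TEN})=\sum_{i,j}\langle j|M|i\rangle\,\ket{i}\bra{j}_{TE},
\]
which is the operator on $\mathcal{N}$ whose matrix in the basis $\{\ket{i}\}$ is the transpose of $M$. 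Hence, choosing $M=\rho_{TE}^{\mathsf T}$ (transpose taken in the fixed basis of $\mathcal{N}$) makes the right-hand side equal to $\rho_{TE}$, proving the claimed identity. Here I use crucially that $\rho_{TE}$ is supported on $\mathcal{N}$, so it has a matrix representation in $\{\ket{i}\}$ with no components outside the subspace that the entangled state can reach.

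Finally I would verify the properties of this $\mathcal{C}$. Positivity of $M=\rho_{TE}^{\mathsf T}$ follows from $\rho_{TE}\geq 0$, since transposition preserves the spectrum, so $\mathcal{C}$ is completely positive; and $M\leq\mathbbm{1}$ holds because $\rho_{TE}$ is a density operator, whose eigenvalues lie in $[0,1]$ and are unchanged by transposition, so $\mathcal{C}$ is trace-nonincreasing. I expect the only genuinely delicate points to be bookkeeping ones: tracking the transpose and its basis dependence (resolved by defining $\mathcal{C}$ through the concrete operator $M$ rather than through coordinates), and confirming that the hypothesis that $\rho_{TE}$ is supported on $\mathcal{N}$ is exactly what guarantees the Choi map reaches the relevant operator. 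The existence of $\tau_{TEN}$ with the stated marginal, by contrast, is immediate once the resolution of identity is granted.
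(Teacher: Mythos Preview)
Your proposal is correct and follows essentially the same route as the paper: construct the maximally entangled purification $\ket{\Phi}=\frac{1}{\sqrt{g_{n,d}}}\sum_i\ket{i}_{TE}\otimes\ket{i}_N$ and take $\mathcal{C}(\omega)=\operatorname{tr}(\rho_{TE}^{\mathsf T}\omega)$, then read off the identity from the Choi/teleportation calculation. If anything, your version is slightly more thorough, since you explicitly verify the trace-nonincreasing condition $M\leq\mathbbm{1}$, which the paper's proof leaves implicit.
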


A more precise description and the proofs of these lemmata can be found in Appendix \ref{appendix-PostselecLemma}.
Using these two lemmata, the main theorem can be proven, which constitutes a mathematical bound on the diamond norm of an invariant map:

\begin{theorem}[Bound on the Diamond Norm]
\label{ImportantThm}
For a linear map $\Delta: \operatorname{End}( \mathcal{H}_T) \to \operatorname{End}(\mathcal{H}')$ that is invariant under symmetry group $\mathcal{S}$ for which \eqref{Cond0} holds, and a purification $\tau_{TR}$ of $\tau_{T}$ as given in \eqref{deFinettistate},
\begin{equation}
\big\| \Delta\big\|_{\diamond} \leq g_{n,d} \big\| (\Delta \otimes \mathbbm{1}_{{R}}) \tau_{TR} \big\|_{\tr}
\end{equation}
with $g_{n,d}=\dim (\mathcal{N})$ being the dimension of the invariant subspace $ \mathcal{N} \equiv (\mathcal{H}^{\otimes n}\otimes \mathcal{H}^{\otimes n})^{(s\otimes \overline{s})} \subseteq  \mathcal{H}^{\otimes n}\otimes \mathcal{H}^{\otimes n}$.
\end{theorem}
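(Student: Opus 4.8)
The plan is to begin from the definition of the diamond norm (Definition \ref{diamondnorm}), namely $\|\Delta\|_{\diamond} = \sup_{\sigma_{TM}} \|(\Delta \otimes \mathbbm{1})\sigma_{TM}\|_{\tr}$, and to contract this supremum step by step until only the single de Finetti state remains. The first move is to invoke Lemma \ref{lemmasym}: since $\Delta$ is $\mathcal{S}$-invariant, every competitor $\sigma_{TM}$ can be replaced, without decreasing the trace norm, by a state $\rho_{TE}$ whose support lies entirely in the invariant subspace $\mathcal{N} = (\mathcal{H}^{\otimes n}\otimes\mathcal{H}^{\otimes n})^{(s\otimes\overline{s})}$. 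This already yields $\|\Delta\|_{\diamond} \leq \sup_{\rho_{TE}} \|(\Delta \otimes \mathbbm{1}_E)\rho_{TE}\|_{\tr}$, where the supremum now ranges only over $\mathcal{N}$-supported states.

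Next I would feed each such $\rho_{TE}$ into Lemma \ref{lemma2}, which is where the resolution of identity \eqref{resolutionofid-eq} does its work: it produces a completely positive trace-nonincreasing map $\mathcal{C}\colon\operatorname{End}(\mathcal{N})\to\mathbbm{C}$ with $\rho_{TE} = g_{n,d}\,(\mathbbm{1}_{TE}\otimes\mathcal{C})(\tau_{TEN})$, where $\tau_{TEN}$ purifies $\tau_{TE}$ onto the ancillary register $\mathcal{N}$. Because $\Delta\otimes\mathbbm{1}_E$ acts on the output of the $TE$ registers while $\mathcal{C}$ acts on the disjoint register $\mathcal{N}$, the two maps commute, so that $(\Delta\otimes\mathbbm{1}_E)\rho_{TE} = g_{n,d}\,(\mathbbm{1}\otimes\mathcal{C})\big((\Delta\otimes\mathbbm{1}_E\otimes\mathbbm{1}_{\mathcal{N}})\tau_{TEN}\big)$. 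I would then apply monotonicity of the trace norm under completely positive trace-nonincreasing maps to peel off $\mathcal{C}$, obtaining $\|(\Delta\otimes\mathbbm{1}_E)\rho_{TE}\|_{\tr} \leq g_{n,d}\,\|(\Delta\otimes\mathbbm{1}_{EN})\tau_{TEN}\|_{\tr}$; crucially, the right-hand side no longer depends on $\rho_{TE}$.

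To finish I would identify $\tau_{TEN}$ with the purification $\tau_{TR}$ of the theorem statement. Since $\operatorname{tr}_{\mathcal{N}}\tau_{TEN} = \tau_{TE}$ and $\operatorname{tr}_{E}\tau_{TE} = \tau_T$ by the construction of \eqref{deFinettistate} and \eqref{tau-TE}, the state $\tau_{TEN}$ is a purification of $\tau_T$ onto the combined register $R=EN$. Any two purifications of $\tau_T$ differ by an isometry acting only on the purifying register, which commutes with $\Delta\otimes\mathbbm{1}$ and preserves the trace norm; hence $\|(\Delta\otimes\mathbbm{1}_{EN})\tau_{TEN}\|_{\tr} = \|(\Delta\otimes\mathbbm{1}_R)\tau_{TR}\|_{\tr}$, and chaining the three inequalities produces the claimed bound $\|\Delta\|_{\diamond} \leq g_{n,d}\,\|(\Delta\otimes\mathbbm{1}_R)\tau_{TR}\|_{\tr}$.

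The conceptual weight has been front-loaded into the two lemmata, so the theorem itself is largely careful bookkeeping. The step I expect to require the most care is the register accounting in the middle paragraph: verifying that $\mathcal{C}$ genuinely acts on a register disjoint from the image of $\Delta\otimes\mathbbm{1}_E$, so that the two commute, and confirming that peeling it off costs nothing in trace norm precisely because $\mathcal{C}$ is trace-nonincreasing rather than merely bounded. The purification-independence invoked at the end is standard but worth stating explicitly, since it is what licenses replacing the double purification $\tau_{TEN}$ by the single purification $\tau_{TR}$ named in the theorem.
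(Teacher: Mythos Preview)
Your proof is correct and follows essentially the same route as the paper: invoke Lemma~\ref{lemmasym} to pass to an $\mathcal{N}$-supported state, apply Lemma~\ref{lemma2} to express that state via $\tau_{TEN}$, then use the trace-nonincreasing property of $\mathcal{C}$ to peel it off and identify $EN$ with the purifying register $R$. The paper simply sets $\mathcal{R}\equiv\mathcal{H}^{\otimes n}\otimes\mathcal{N}$ directly rather than invoking purification uniqueness, but the content is the same.
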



\begin{proof}[Proof of Theorem \ref{ImportantThm}]

We refer to the definition of the diamond distance (\ref{diamondnorm}) of a linear map $\Delta$.
Let $\sigma_{TM}\in\mathfrak{S}(\mathcal{H}_T \otimes \mathcal{M})$ be an arbitrary state, associated to a finite space $\mathcal{M}$. Let $\rho_{TE}$ denote a state with support on the $\mathcal{S}$-invariant subspace $\mathcal{N}$, $\mathcal{C}$ be a CPTP map, and $g_{n,d}=\dim(\mathcal{N})$. Using the fact that $\Delta$ is invariant under $\mathcal{S}$, we find:
\begin{equation*}
\big\| \  (\Delta \otimes \mathbbm{1}_{{M}})  \sigma_{TM} \big\|
\overset{\mathrm{Lemma\ \ref{lemmasym}}}{=} \big\| (\Delta \otimes \mathbbm{1}_{E}) \rho_{TE} \big\|
\end{equation*}
\begin{equation*}
\overset{\mathrm{Lemma\ \ref{lemma2}}}{=} \big\| (\Delta \otimes \mathbbm{1}_{E} \otimes \mathbbm{1}_{{N}}) (g_{n,d} (\mathbbm{1}_{TE} \otimes \mathcal{C}) (\tau_{TEN})) \big\|
\end{equation*}
\begin{equation*}
\overset{\mathrm{g_{n,d} \geq 0}}{=} g_{n,d} \big\|  (\Delta \otimes \mathbbm{1}_{E}\otimes \mathcal{C}) (\tau_{TEN})) \big\|
\end{equation*}
\begin{equation*}
\leq g_{n,d} \big\|  (\Delta \otimes  \mathbbm{1}_{EN}) (\tau_{TEN})) \big\|
\end{equation*}
\begin{equation*}
\overset{\mathrm{\mathcal{H}^{\otimes n} \otimes \mathcal{N} \equiv \mathcal{R}}}{=}g_{n,d} \big\|  (\Delta \otimes  \mathbbm{1}_{{R}}) (\tau_{AR})) \big\|
\end{equation*}

where the second-to-last step is possible because $\mathcal{C}$ is trace-nonincreasing (by construction).
\end{proof}

In summary, as long as the assumption of the existence of a resolution of identity holds, the bound on the diamond norm for an arbitrary symmetry $\mathcal{S}$ is directly given by the dimension $g_{n,d}=\dim (\mathcal{H}^{\otimes n} \otimes \mathcal{H}^{\otimes n})^{(s\otimes\overline{s})}=\dim(\mathcal{N})$ of an invariant subspace, and the symmetry dependent state state $\tau_T$.


\section{Generalized Postselection Technique Applied to QKD}
\label{horriblecalculation}

As mentioned in Section \ref{postselec-previous}, there have been various attempts in quantum cryptography to use the security of a protocol against collective attacks to infer security against general, more powerful attacks via quantum de Finetti theorems by exploiting the permutation symmetry of the protocol \cite{RennerPhD,Renner07}. The postselection technique can be applied to the same problem, and significantly improved previously known security bounds for permutation-invariant protocols. Here, it is shown that the generalized postselection theorem (Theorem \ref{ImportantThm}) may be used in the same manner, to derive new (and, as we will see in Section \ref{orbitcountingresults} for discrete/stochastic orthogonal symmetry, tighter) bounds for a protocol with a different symmetry.

First, in Section \ref{qkdprotocols}, the general steps of a QKD protocol (for two parties and $N$ parties) will be described, with special focus on the step of privacy amplification and its relation to the length of the final, secret key shared by Alice and Bob. Then, in Section \ref{cohtocoll}, we will show how the symmetry of a protocol can be used to infer security against general attacks from security against collective attacks via the postselection theorem.

\subsection{QKD Protocols, Min-entropy and Privacy Amplification}
\label{qkdprotocols} 

A QKD protocol refers to the task of establishing a common secret key between two parties, Alice and Bob, while ensuring that a potential adversary Eve has no knowledge about it.
A typical two-party QKD protocol consists of the following steps:

\begin{itemize}
\item[1)] \textbf{Key exchange.} The two parties exchange qubits and perform measurements to generate the \textit{raw keys}.
\item[2)] \textbf{Key sifting.} Only certain cases out of the raw key are selected and kept (e.g. the cases where some particular measurements were made), the rest is discarded of. The resulting bit sequence is called \textit{sifted keys}.
\item[3)] \textbf{Key distillation.}
\begin{itemize}
\item[a)] \textbf{Parameter Estimation.} Some random bits are selected and announced publicly via the classical communication channel to estimate the error rate. If the error rate exceeds some limit, the protocol will abort.
\item[b)] \textbf{Information Reconciliation.} {Error correction} is performed to transform the keys into identical bitstrings. An adversary might still have some information about the resulting key.
\item[c)] \textbf{Privacy Amplification.} The two parties use two-universal hashing to get a shorter, but secret key.
\end{itemize}
\end{itemize}

In an $N$ party protocol (for example $N$-six-state protocol \cite{EKMB17} and $N$-BB84 \cite{Grasseli18}), the goal is to establish a secret key known to all $N$ trusted parties, but unknown to Eve. The $N$ parties consist of one Alice, and $N-1$ Bobs. Here, genuinly multipartite entangled states are shared between the parties, and all parties perform local measurements to collect a raw key. Similarly to the two party protocol, the parties reveal some random bits to estimate the error rate of their channel. Then, Alice performs an information reconciliation procedure with each Bob to ensure that they have identical bit sequences, before each party applies the same randomly chosen hashing function during the privacy amplification step to ensure their key's secrecy.

Each of the steps can contribute to the overall error rate of the protocol, and can separately be bounded, but because of composability \cite{RennerPhD} the different bounds can be added together. For the application treated here, the most notable and important step is \textit{privacy amplification}, introduced in Chapter 5 of \cite{RennerPhD}. After obtaining two identical, perfectly correlated bit strings in the information reconciliation step, the two parties perform a series of operations which produces two shorter, perfectly correlated and perfectly secret keys. With this step, it is ensured that an eavesdropper will have no information about the key shared between the two trusted parties if it was shortened by some set amount (or more). This amount is determined by the Leftover Hashing Lemma (originally introduced as Theorem 5.5.1. in \cite{RennerPhD}, and stated using more up-to-date definitions in \cite{TomLev17}).

To arrive at such a bound, the amount of information about the key that is available to an eavesdropper has to be determined, typically in terms of conditional entropy. In general, the adversary Eve may have gained access to correlated side information during previous steps of the QKD protocol. This information may include access to a quantum state holding memory of interfering with the quantum communication in previous steps. Then, conditional entropy measures the amount of uncertainty Eve perceives in the key, while taking into account the side information available to her.

Different assumptions about Eve's knowledge can entail different measures of entropy, which can lead to different bounds on the security of a protocol. The most commonly used measure of entropy is von Neumann entropy - however, this would relate to Eve attempting to guess the key from taking a classical average, which is not optimal.
The first attempt to eliminate Eve's knowledge by reducing key size used Renyi collision entropy \cite{RennerPhD}, where Eve could use the quantum state of her subsystem to construct good measurements. Although carrying out these measurements will provide her with a good guess, it is not optimal.
A natural generalization of conditional Renyi entropy is conditional min-entropy, first proposed in \cite{RennerPhD} and expanded upon in \cite{KR09}, where Eve makes use of the best possible measurements, giving her the best possible probability of guessing the key. This is optimal for her, and the ``worst case scenario" for us.

It is important to note that this bound on Eve's information is relevant at a certain point during the protocol, namely directly before privacy amplification is carried out. Therefore, while we take $\mathcal{H}_T$ to be the space of the input state to the overall protocol, we use $\mathcal{H}_X$ to refer to the space that the state is on after all previous steps and before the privacy amplification step.

\begin{definition}[Min-entropy] \label{minentropy}
Let $\rho_{XE}$ be a state on $\mathcal{H}_X \otimes \mathcal{H}_E$. The min-entropy of $\rho_{XE}$ given $E$ is
\[H_{min}(X|E)_{\rho}= \sup_{\sigma_E \in \mathfrak{S}(\mathcal{H}_E)} \big( \min_{\lambda \in \mathbbm{R}} \{ -\log(\lambda) | {\rho}_{XE} \leq \lambda (\mathbbm{1}_{X}\otimes \sigma_E)\}  \big).\]
\end{definition}

Notably, the computation of the min-entropy can be transformed into a SDP, which can be solved efficiently numerically \cite{Tom16}. Firstly, note that the above definition can be rewritten to read:

\[ H_{min}(X|E)_{\rho}= \min_{\sigma_E \in \mathfrak{S}(\mathcal{H}_E)} \{ -\log(\operatorname{tr} \sigma_E ) | \rho_{XE} \leq \mathbbm{1}_X \otimes \sigma_E \}\]
 
which can directly be translated to the following primal problem:

\begin{optimize} \label{SDP}
\begin{equation*} \begin{split}
\text{minimize } & \operatorname{tr} \sigma_E   \\
\text{subject to }& \rho_{XE}\leq \mathbbm{1}_X \otimes\sigma_E \\
& \sigma_E\geq 0\\
\end{split}
\end{equation*}
\end{optimize}

and the corresponding dual problem:

\begin{optimize}
\begin{equation*} \begin{split}
\text{maximize } &  \operatorname{tr} (\Lambda_{XE}  \rho_{XE})   \\
\text{subject to }& \operatorname{tr}_{X} \Lambda_{XE}\leq \mathbbm{1}_E \\
& \Lambda_{XE} \geq 0\\
\end{split}
\end{equation*}
\end{optimize}

Both of the objective functions of these SDP characterizations evaluate to $2^{-H_{min}(X|E)_{\rho}}$ and can thus be used to compute conditional min-entropy (which we will make use of later).
However, min-entropy is very sensitive to changes of the system's state. Since most applications entail some small error probability, this makes min-entropy a less desireable measure. Instead, one introduces another generalized entropy measure, smooth min-entropy. This entropy measure takes into account a ball of states that are close to $\rho_{ABE}$ in terms of purifying distance, thereby accounting for some deviations in the system's state.

\begin{definition}[Smooth min-entropy] \label{smoothminentropy}
Let $\rho_{XE}$ be a state on $\mathcal{H}_X \otimes \mathcal{H}_E$ and let $\mathcal{B}^{\tilde{\epsilon}}(\rho_{XE})\subseteq \mathfrak{S}(\mathcal{H}_X  \otimes \mathcal{H}_E)$ be a ball of states with $d(\rho_{XE}, \tilde{\rho}_{XE})\leq \tilde{\epsilon}\ \forall \tilde{\rho}_{XE} \in \mathcal{B}^{\tilde{\epsilon}}(\rho_{XE})$. Then, smooth min-entropy of $\rho_{XE}$ given $E$ is
\[H_{min}^{\tilde{\epsilon}}(X|E)_{\rho}= \sup_{\sigma_E \in \mathfrak{S}(\mathcal{H}_E)} \Big( \sup_{\tilde{\rho}_{XE} \in \mathcal{B}^{\tilde{\epsilon}}(\rho_{XE})} \big(\min_{\lambda \in \mathbbm{R}} \{ -\log(\lambda) | \tilde{\rho}_{XE} \leq \lambda (\mathbbm{1}_{X}\otimes \sigma_E) \}\big)\Big) .\]
\end{definition}

Although a similar lemma for privacy amplification can be stated with other entropy measures, the version using smooth min-entropy is most widely used and most appropriate for potential applications. In the Leftover Hashing Lemma, security of a protocol $\mathcal{E}$ (expressed by its distance from a perfectly secure protocol $\mathcal{F}$) is related to the achievable key length $l$ and smooth min-entropy.

\begin{theorem}[Leftover Hashing Lemma]
\label{privacyamplification}
For an input state $\rho_{XE}$ under privacy amplification with hashing output of length $l$, the following holds: 
\[ \big\|(\mathcal{PA} \otimes \mathbbm{1})( \rho_{XE}) \big\|_{\tr}\leq\frac{1}{2} 2^{-\frac{1}{2}(H_{min}^{\tilde{\epsilon}} ({X}|E)_{\rho} - l)} +2\tilde{\epsilon}.\]
\end{theorem}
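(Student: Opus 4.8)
The plan is to follow the two standard stages of the quantum Leftover Hashing Lemma: a \emph{smoothing} step that trades $\rho_{XE}$ for a nearby state realizing the smooth min-entropy, followed by a \emph{non-smooth} hashing estimate that combines two-universality of the hash family with a norm inequality adapted to quantum side information. I read the left-hand side as the trace distance between the real output of privacy amplification and the ideal output $\tau_K\otimes\rho_E$, where $K$ is the $l$-bit key register (dimension $2^l$), $\tau_K=2^{-l}\mathbbm{1}_K$ is uniform, and the random choice of hash function is averaged over inside $\mathcal{PA}$.

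First I would carry out the smoothing. By Definition~\ref{smoothminentropy} there is a state $\tilde\rho_{XE}\in\mathcal{B}^{\tilde\epsilon}(\rho_{XE})$ whose ordinary min-entropy equals $H_{min}^{\tilde\epsilon}(X|E)_\rho$. Since the trace distance is bounded by the purifying distance and is contractive under the CPTP maps defining the real and ideal privacy amplification, two applications of the triangle inequality give
\[ \big\|(\mathcal{PA}\otimes\mathbbm{1})(\rho_{XE})\big\|_{\tr} \le \big\|(\mathcal{PA}\otimes\mathbbm{1})(\tilde\rho_{XE})\big\|_{\tr} + 2\tilde\epsilon, \]
because the two extra terms, $\|\rho_E-\tilde\rho_E\|_{\tr}$ and $\|\rho_{XE}-\tilde\rho_{XE}\|_{\tr}$, are each at most $\tilde\epsilon$. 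This isolates the additive $2\tilde\epsilon$ and reduces the claim to the non-smooth statement for $\tilde\rho$ with $h:=H_{min}(X|E)_{\tilde\rho}$.

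Next I would prove the non-smooth bound
\[ \mathbb{E}_f\,\big\|\tilde\rho^{(f)}_{KE}-\tau_K\otimes\tilde\rho_E\big\|_{\tr} \le \tfrac{1}{2}\,2^{-\frac12(h-l)}, \]
where $\tilde\rho^{(f)}_{KE}$ is the output under hash $f$. Let $\sigma_E$ be the optimizer of Definition~\ref{minentropy}, so that $\tilde\rho_{XE}\le 2^{-h}\,\mathbbm{1}_X\otimes\sigma_E$. The chain of estimates is: (i) pass from the (normalized) trace norm to a $\sigma_E$-weighted Hilbert--Schmidt norm via Cauchy--Schwarz, picking up a factor $\tfrac12\sqrt{2^l}$ from the prefactor and the classical key register, with the $E$-dependence absorbed into the weighting; (ii) pull $\mathbb{E}_f$ inside the square root by concavity (Jensen); (iii) expand the weighted squared norm and apply two-universality, $\Pr_f[f(x)=f(x')]\le 2^{-l}$ for $x\neq x'$, which cancels the off-diagonal ``collision'' contributions against the $2^{-l}$-weight of $\tau_K$; (iv) control the surviving diagonal term by the operator inequality $\tilde\rho_{XE}\le 2^{-h}\,\mathbbm{1}_X\otimes\sigma_E$, contributing a factor $2^{-h}$. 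Collecting $\tfrac12$, $\sqrt{2^l}$ and $\sqrt{2^{-h}}$ yields $\tfrac12\,2^{-\frac12(h-l)}$, and substituting $h=H_{min}^{\tilde\epsilon}(X|E)_\rho$ finishes the proof.

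I expect the main obstacle to be steps (iii)--(iv): setting up the $\sigma_E$-weighted two-norm so that two-universality precisely eliminates the cross terms while the diagonal term is governed by the min-entropy operator inequality. This is exactly where Eve's quantum memory makes the argument harder than the classical leftover hash lemma, since decoupling must be certified relative to the operator $\sigma_E$ rather than a scalar. By comparison the smoothing step and the dimension/Jensen bookkeeping are routine, so the cleanest route is to invoke the established quantum collision-entropy estimates, e.g.\ as developed in \cite{RennerPhD} and restated in \cite{TomLev17}, rather than re-deriving the weighted-norm inequality in full.
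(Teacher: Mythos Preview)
Your sketch is correct and follows the standard Renner-style argument (smoothing via the triangle inequality, then the non-smooth bound via a $\sigma_E$-weighted $L_2$ estimate combined with two-universality). The paper does not actually prove this theorem: immediately after stating it, it simply writes ``The proof can be found in \cite{RennerPhD, TomLev17}.'' Since you end up invoking precisely those references for the delicate weighted-norm step, your proposal is in full agreement with what the paper does, only more detailed.
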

The proof can be found in \cite{RennerPhD, TomLev17}. 

Importantly, this directly links the key length $l$ for which the protocol becomes secure to the smooth min-entropy via the following security criterion (first found in \cite{RennerPhD}):

\begin{equation}
l \leq H_{min}^{\tilde{\epsilon}}(X|E)_{\rho}
\end{equation}

For a protocol to be at least $\epsilon$-secure, the key length must be chosen to be at least $l=H_{min}^{\tilde{\epsilon}}(X|E)_{\rho} - 2\log \frac{1}{2(\epsilon-2\tilde{\epsilon})}$. Clearly, there is a tradeoff between key length and security: increasing the key length leads to a decrease in the error $\tilde{\epsilon}$, which leads an increasingly secure protocol. 

\subsection{From Collective to General Attacks}
\label{cohtocoll}

As found in \cite{CKR}, the postselection technique can be applied to prove that security of a protocol against collective attacks implies security against general attacks, with an improved multiplicative factor security bound in comparison to previous studies. Similarly, the generalized postselection theorem (Theorem \ref{ImportantThm}) can be employed to derive such a relation for a protocol with a more general symmetry $\mathcal{S}$. In one sentence, the result can be summarized as follows:
\\
\\
\textit{If a protocol $\mathcal{E}$ is $\epsilon$-secure against collective attacks, performing an additional privacy amplification $\mathcal{PA}$ whereby the key is shortened by $2\log(\dim(\mathcal{N}))=2\log(g_{n,d})$, then the protocol $\mathcal{E}'=\mathcal{PA} \circ \mathcal{E}$ is $\epsilon'$-secure against general attacks with $\epsilon'=g_{n,d}\epsilon$.}
\\
\\
\textcolor{red}{Update [13 March 2024]: This statement is not known to be true with $\epsilon'=g_{n,d}\epsilon$, but rather with $\epsilon'=4g_{n,d}\sqrt{2\epsilon}$. For details, we refer to \cite{NTZLT}.}
\\
\\
The remainder of this section will be spent justifying the above statement. No actual changes need to be made to the original argument in \cite{CKR} to accommodate general symmetry groups. However, the original argument is given mostly in terms of intuition rather than mathematical terms, whereas we will attempt to describe and justify it in more detail.
Thereafter, applying this result to a protocol with the symmetry of tensor powers of stabilizer states from \cite{Gross}, an even better overhead for general attacks can be found.
\\
\\
For collective attacks, the adversary Eve would act on each signal independently and identically - the input of the protocol would be a pure state $\sigma
 \in \mathfrak{S}(\mathcal{H}\mathcal{H})$ taken to the $n$-fold tensor power: $\sigma_{TE}=\sigma
^{\otimes n} \in \mathfrak{S}(\mathcal{H}_{T}\otimes\mathcal{H}_{E})$ 
 with the Hilbert space $\mathcal{H}_{T}=\mathcal{H}^{\otimes n}$ associated to the trusted pair of Alice and Bob and the Hilbert space $\mathcal{H}_{E}=\mathcal{H}^{\otimes n}$ associated to Eve (see Figure \ref{Hilbertspaces}). A protocol $\mathcal{E}$ would therefore be called \textit{$\epsilon$-secure against collective attacks} if, for any such tensor product state $\sigma_{TE}=\sigma^{\otimes n}$,
\begin{equation}
 \big\| \mathcal{E}-\mathcal{F} \big\|_{\diamond, coll}= \big\| ((\mathcal{E}-\mathcal{F})\otimes \mathbbm{1}_{E}) \sigma^{\otimes n} \big\|_{\tr} \leq \epsilon.
\end{equation}

Because of the structure of $\tau_{TE}\in\mathfrak{S}(\mathcal{H}_T \otimes \mathcal{H}_E)$ in \eqref{tau-TE}, which is a mixture of such tensor products of pure states, this implies the following statement:
\begin{equation}
 \big\| ((\mathcal{E}-\mathcal{F})\otimes \mathbbm{1}_{E}) \tau_{TE}\big\|_{\tr} \leq \sup_{\sigma \in \mathfrak{S}(\mathcal{H}\otimes  \mathcal{H})}  \big\| ((\mathcal{E}-\mathcal{F})\otimes \mathbbm{1}_{E}) \sigma^{\otimes n} \big\|_{\tr} \leq \epsilon.
\end{equation}

From this, we now want to infer a statement about the security against general attacks, which is related to bounding another protocol $\mathcal{E}'$ with a bigger input state, giving Eve access to an additional (quantum) system $N$ associated to the Hilbert space $\mathcal{N}=\mathcal{N}=(\mathcal{H}^{\otimes n} \otimes \mathcal{H}^{\otimes n})^{s\otimes\overline{s}}$. By the previously stated definition of security in \eqref{security}, a protocol $\mathcal{E}'$ is said to be \textit{$\epsilon'$-secure against general attacks} if

 \begin{equation}
 \big\| \mathcal{E}'-\mathcal{F}' \big\|_{\diamond}= \sup_{\sigma_{TEN}\in\mathfrak{S}(\mathcal{H}_T\otimes \mathcal{H}_E\otimes  \mathcal{N})} \big\| ((\mathcal{E}'-\mathcal{F}')\otimes \mathbbm{1}_{EN}) \sigma_{TEN} \big\|_{\tr} \leq \epsilon' 
\end{equation}

Here, the postselection theorem (\ref{ImportantThm}) that was established in Section \ref{postselectiongeneralized} can be applied, which implies:

\begin{equation}
\big\| \mathcal{E}'-\mathcal{F}' \big\|_{\diamond} \leq g_{n,d} \big\| ((\mathcal{E}'-\mathcal{F}')\otimes \mathbbm{1}_{ EN}) \tau_{T EN}\big\|_{\tr} \leq \epsilon' 
\end{equation}

This relates the notion of collective $\epsilon$-security with the state $\tau_{TE}$, while the notion of general $\epsilon$-security is connected to the state $\tau_{TEN}$, which is a purification of $\tau_{TE}$. Now, we want to exploit this relation, in combination with the Leftover Hashing Lemma \cite{RennerPhD, TomLev17}, to relate the protocol $\mathcal{E}'$ to $\mathcal{E}$ via an additional privacy amplification $\mathcal{PA}$:

\begin{equation}
\mathcal{E}'=\mathcal{PA} \circ \mathcal{E}
\end{equation}

Let $\omega_{X EN}=\mathcal{E}(\tau_{TEN})$ be the state that the input $\tau_{T EN}$ was transformed into in the previous steps of the protocol. The system $X$ denotes the system that the trusted system $T$ of the input state $\tau_{TEN}$ was transformed to during the protocol so far. Of course, the same additional privacy amplification step is performed in $\mathcal{F}'$. It must be noted that the dimension of the spaces did not change; in particular, $X$ now contains Alice and Bob's perfectly correlated but not yet perfectly secret keys. According to the Privacy Amplification Theorem (\ref{privacyamplification}), the following holds for the additional privacy amplification step:
\begin{equation} \label{privamp}
 \big\| ((\mathcal{E'}-\mathcal{F'})\otimes \mathbbm{1}_{EN}) \tau_{T EN} \big\|_{\tr}=\big\| (\mathcal{PA}\otimes \mathbbm{1}_{EN}) \omega_{X EN} \big\|_{\tr} \leq \frac{1}{2} 2^{-\frac{1}{2}(H_{min}^{\tilde{\epsilon } }(X| EN)_{\omega}-l')} +2\tilde{\epsilon }
\end{equation}
where $l'$ is the length of the key and $H_{min}^{\tilde{\epsilon } }(X| EN)_{\omega}$ is the smooth min-entropy of the state $\omega_{XEN}$ with side information $EN$.

To establish a relation between the key length $l$ of $\mathcal{E}$ and $l'$ of $\mathcal{E}'$, we therefore need to find the relation between their entropies. Note that these entropies are related to a state on different spaces, because protocol $\mathcal{E}'$ acts on a purification of the space of $\mathcal{E}$, with an additional system ${N}$. In fact, the act of shortening the key can be interpreted as a compensation of the additional information available to the adversary in the general scenario in form of the space $\mathcal{N}$.

The process of deriving how much the key need to be shortened has three steps: firstly, a relation between the quantum state $\omega_{XEN}$ and its partial trace $\omega_{XE}=\operatorname{tr}_N\omega_{XEN}$ is established (Lemma \ref{twirling}). Secondly, it is investigated how min-entropy changes under this purification (Lemma \ref{damnthiswassohard}), before `'smoothing" the relation (using \eqref{smoothing}) to insert smooth min-entropy in the Leftover Hashing Lemma in \eqref{finally-leftoverhashing}.
\\
\\
One key technique that this process relies on is \textit{twirling}, which is the operation of taking the average of a channel under unitary operations \cite{Wilde12,Magesan08}.
For a given set $\{U_k\}_{k=1,...,K}$ of unitary operators on the Hilbert space $\mathcal{H}$, applying the associated twirling channel $\mathcal{T}$ to a quantum state $\rho\in\mathfrak{S}(\mathcal{H})$ yields
\begin{equation*}
\mathcal{T}(\rho) = \frac{1}{\dim(\mathcal{H})^2} \sum_{k=1}^{K} U_k \rho U_k^{\dagger}.
\end{equation*}

This kind of map can be defined for different sets of unitary operators, e.g. Clifford group or Pauli group. This technique is useful in the context of various quantum information processing tasks, such as entanglement purification \cite{BDSW96}, randomized benchmarking \cite{KLRBBLOSW08}, and the simulation of noise in quantum error correction codes \cite{CB19}.

If the set of unitaries is chosen to be the set of Heisenberg-Weyl operators $\{W_k\}_{k=1,...,\dim(\mathcal{H})^2} =\{X_i Z_j\}_{i,j=0,...,\dim(\mathcal{H})-1}$ (with $X$ and $Z$ being Pauli operators, and the index indicating which space they are applied to), it can be shown that applying them with uniform probability to any qudit density operator yields the maximally mixed state \cite{Wilde12}:
\begin{equation}
\label{twirlingeq}
\mathcal{T}_{HW}(\rho)=\frac{1}{\dim(\mathcal{H})^2} \sum_{k=1}^{\dim(\mathcal{H})^2} W_k \rho W_k^{\dagger}=  \frac{\mathbbm{1}}{\dim(\mathcal{H})}
\end{equation}
For the qubit case, i.e. for Pauli operators, the above relation can easily be checked by writing the density operator as a Bloch vector and using commutation relations \cite{Wilde12}.

Now, we introduce the following preliminary lemma about the relation of $\omega_{XEN}$ and its partial trace $\omega_{XE}=\operatorname{tr}_N\omega_{XEN}$, which uses twirling in its proof.
\begin{lemma} \label{twirling}
For a quantum state $\omega_{XEN}\in \mathfrak{S}(\mathcal{H}_X \otimes \mathcal{H}_E \otimes \mathcal{N})$, the following bound holds:
\[\omega_{X EN} \leq \dim(\mathcal{N}) (\omega_{XE} \otimes \mathbbm{1}_{N}) \]
where $\omega_{XE}=\operatorname{tr}_N \omega_{XEN}$.
\end{lemma}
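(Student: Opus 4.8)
The plan is to apply the Heisenberg--Weyl twirl from \eqref{twirlingeq} to the $N$ system alone, and then extract the operator inequality by discarding all but one term of the twirling sum. First I would introduce the partial twirling channel $\mathbbm{1}_{XE} \otimes \mathcal{T}_N$, where $\mathcal{T}_N$ denotes the Heisenberg--Weyl twirl $\mathcal{T}_{HW}$ acting on $\mathcal{N}$, so that
\[
(\mathbbm{1}_{XE} \otimes \mathcal{T}_N)(\omega_{XEN}) = \frac{1}{\dim(\mathcal{N})^2} \sum_{k=1}^{\dim(\mathcal{N})^2} (\mathbbm{1}_{XE} \otimes W_k)\, \omega_{XEN}\, (\mathbbm{1}_{XE} \otimes W_k^{\dagger}).
\]

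Next I would establish that this partial twirl produces a tensor product. Since $\mathcal{T}_{HW}$ is the complete depolarizing channel on $\mathcal{N}$ (it sends every operator $\sigma$ to $\operatorname{tr}(\sigma)\,\mathbbm{1}_N/\dim(\mathcal{N})$), expanding $\omega_{XEN}$ in an operator basis for $\mathcal{N}$ and invoking linearity gives
\[
(\mathbbm{1}_{XE} \otimes \mathcal{T}_N)(\omega_{XEN}) = \omega_{XE} \otimes \frac{\mathbbm{1}_N}{\dim(\mathcal{N})},
\]
where $\omega_{XE} = \operatorname{tr}_N \omega_{XEN}$ appears because the partial trace over $\mathcal{N}$ retains precisely the diagonal coefficients of that expansion.

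The inequality then follows from positivity. Each summand $(\mathbbm{1}_{XE} \otimes W_k)\,\omega_{XEN}\,(\mathbbm{1}_{XE} \otimes W_k^{\dagger})$ is positive semidefinite, being a unitary conjugate of the state $\omega_{XEN} \geq 0$; moreover the term indexed by $W_k = \mathbbm{1}$ (which belongs to the Heisenberg--Weyl set, as $X_0 Z_0 = \mathbbm{1}$) equals $\omega_{XEN}$ itself. Dropping every other nonnegative summand and comparing with the tensor-product form above yields
\[
\omega_{XE} \otimes \frac{\mathbbm{1}_N}{\dim(\mathcal{N})} \geq \frac{1}{\dim(\mathcal{N})^2}\, \omega_{XEN},
\]
and multiplying through by $\dim(\mathcal{N})^2$ gives the claimed bound $\omega_{XEN} \leq \dim(\mathcal{N})\,(\omega_{XE} \otimes \mathbbm{1}_N)$.

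The main obstacle I anticipate is the second step: justifying that twirling a single factor of a correlated tripartite state decouples that factor into the maximally mixed state while reproducing the correct marginal on the remaining systems. The identity \eqref{twirlingeq} is stated only for a density operator, so I would verify that it lifts to the partial-twirl setting, which follows from linearity of the channel together with the fact that $\mathcal{T}_{HW}(\sigma) = \operatorname{tr}(\sigma)\,\mathbbm{1}_N/\dim(\mathcal{N})$ holds for an \emph{arbitrary} operator $\sigma$, not merely a normalized or positive one.
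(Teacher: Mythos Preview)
Your proposal is correct and follows essentially the same argument as the paper: apply the Heisenberg--Weyl twirl on the $N$ factor to obtain $\omega_{XE}\otimes\mathbbm{1}_N/\dim(\mathcal{N})$, then isolate the identity term in the twirling sum and use positivity of the remaining conjugates to get the operator inequality. The paper packages the ``drop the other summands'' step as $\mathcal{T}_{HW}^N(\omega_{XEN})=\frac{1}{\dim(\mathcal{N})^2}(\omega_{XEN}+\delta_{XEN})$ with $\delta_{XEN}\geq 0$, which is exactly your observation in different notation.
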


\begin{proof}[Proof of Lemma \ref{twirling}]
Inserting the state of interest $\omega_{X EN}$ and applying the twirling channel $T_{HW}$ associated to such Heisenberg-Weyl operators \eqref{twirlingeq} partially, i.e. only to the subsystem $N$, yields

\begin{equation} \begin{split}
\label{twirling-eq}
\mathcal{T}_{HW}^N (\omega_{X EN})& = \frac{1}{\dim(\mathcal{N})^2} \sum_k (\frac{\mathbbm{1}}{\dim(\mathcal{H}_{XE})} \otimes W_k)\omega_{X EN} (\frac{\mathbbm{1}_{N}}{\dim(\mathcal{H}_{XE})} \otimes W_k)^{\dagger}\\& = \omega_{XE} \otimes \frac{\mathbbm{1}}{\dim(\mathcal{N})} .
\end{split}
\end{equation}

Some of the Heisenberg Weyl operators are tensor powers of the identity matrix, and the sum can be separated:

\begin{equation}
\label{identitypart-eq}
\mathcal{T}_{HW}^N (\omega_{X EN})=\frac{1}{\dim(\mathcal{N})^2}( \omega_{X EN}+\delta_{X EN})
\end{equation}
with some positive $\delta_{X EN} \geq 0$.

Combining \eqref{identitypart-eq} and \eqref{twirling-eq} yields
\begin{equation*}\begin{split}
\omega_{X EN} &=\dim(\mathcal{N})^2 \mathcal{C}_{HW}^N (\omega_{XEN}) -\delta_{XEN} \\ &= \dim(\mathcal{N})^2 \omega_{XE} \otimes \frac{\mathbbm{1}_{N}}{\dim(\mathcal{N})} - \delta_{X EN}\\& \leq \dim(\mathcal{N}) (\omega_{XE} \otimes \mathbbm{1}_{N}).
\end{split}
\end{equation*}
\end{proof}

\begin{remark}
Note that this bound is tight for the maximally entangled state. 
\end{remark}

Knowing how these pre-privacy amplification states appearing in the protocol for a collective and a general attack are related to one another directly leads to a relation between the states' min-entropies: 

\begin{lemma}[Min-entropy and purification] \label{damnthiswassohard}
The following relation holds for the min-entropy $H_{min} (X| E)_{\operatorname{tr}_{N} \omega}$ of a state $\omega_{XE}=\operatorname{tr}_N \omega_{XEN}$  and the min-entropy $H_{min} (X| EN)_{\omega}$ of the state's Stinespring dilation $\omega_{XEN}\in \mathfrak{S}(\mathcal{H}_X \otimes \mathcal{H}_E \otimes \mathcal{N})$:
\[ H_{min} (X| EN)_{\omega} \geq H_{min} (X| E)_{\operatorname{tr}_{N} \omega} - 2\log \dim (\mathcal{N}).\]
\end{lemma}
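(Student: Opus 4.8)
The plan is to prove the bound directly from the operator-inequality characterization of the min-entropy, combined with Lemma \ref{twirling}. Recall from the primal SDP in Optimization problem \ref{SDP} that
\[ 2^{-H_{min}(X|E)_\rho} = \min\{\operatorname{tr}\sigma_E : \sigma_E \geq 0,\ \rho_{XE} \leq \mathbbm{1}_X\otimes\sigma_E\}, \]
so equivalently $H_{min}(X|E)_\rho = \max\{-\log\operatorname{tr}\sigma_E : \sigma_E\geq 0,\ \rho_{XE}\leq\mathbbm{1}_X\otimes\sigma_E\}$. The idea is to take an operator $\sigma_E^\ast$ that is optimal for $\omega_{XE}$ on the right-hand side of the claimed inequality, inflate it by the identity on $\mathcal{N}$, and verify that the resulting operator is feasible for the $H_{min}(X|EN)_\omega$ problem. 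Since that problem is a maximization, exhibiting any single feasible point produces a lower bound on its value, which is exactly the direction we need.

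First I would fix $\sigma_E^\ast \geq 0$ achieving the optimum for $\omega_{XE}$, that is, satisfying $\omega_{XE}\leq\mathbbm{1}_X\otimes\sigma_E^\ast$ with $\operatorname{tr}\sigma_E^\ast = 2^{-H_{min}(X|E)_{\omega_{XE}}}$. Then I would invoke Lemma \ref{twirling}, which gives $\omega_{XEN}\leq\dim(\mathcal{N})\,(\omega_{XE}\otimes\mathbbm{1}_N)$. Chaining the two operator inequalities, and using that tensoring with $\mathbbm{1}_N$ preserves the positive-semidefinite order, yields
\[ \omega_{XEN} \leq \dim(\mathcal{N})\,(\mathbbm{1}_X\otimes\sigma_E^\ast\otimes\mathbbm{1}_N) = \mathbbm{1}_X\otimes\big(\dim(\mathcal{N})\,\sigma_E^\ast\otimes\mathbbm{1}_N\big). \]
Hence $\sigma_{EN}:=\dim(\mathcal{N})\,\sigma_E^\ast\otimes\mathbbm{1}_N$ is feasible for the $H_{min}(X|EN)_\omega$ problem, so $H_{min}(X|EN)_\omega\geq -\log\operatorname{tr}\sigma_{EN}$. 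It then remains to compute $\operatorname{tr}\sigma_{EN}=\dim(\mathcal{N})\cdot\operatorname{tr}\sigma_E^\ast\cdot\dim(\mathcal{N})=\dim(\mathcal{N})^2\,\operatorname{tr}\sigma_E^\ast$, whence $-\log\operatorname{tr}\sigma_{EN}=-\log\operatorname{tr}\sigma_E^\ast-2\log\dim(\mathcal{N})=H_{min}(X|E)_{\omega_{XE}}-2\log\dim(\mathcal{N})$, which is the claim.

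I do not expect a serious obstacle here; the work is essentially bookkeeping, but two points deserve care. The first is keeping the optimization direction straight: because $H_{min}$ is a maximum of $-\log\operatorname{tr}\sigma$ over feasible $\sigma$, a single feasible $\sigma_{EN}$ can only ever lower-bound $H_{min}(X|EN)_\omega$, which is fortunately the inequality asked for, and one cannot run the argument in reverse to obtain equality. The second is the appearance of the factor $2\log\dim(\mathcal{N})$ rather than $\log\dim(\mathcal{N})$: one power of $\dim(\mathcal{N})$ comes from the prefactor in Lemma \ref{twirling}, while a second, equal power comes from $\operatorname{tr}(\sigma_E^\ast\otimes\mathbbm{1}_N)=\dim(\mathcal{N})\operatorname{tr}\sigma_E^\ast$, and it is worth flagging that both contributions are genuinely present so that the constant is not accidentally halved. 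Finally, if one prefers not to assume the optimum is attained, the same argument goes through with an $\varepsilon$-optimal $\sigma_E^\ast$ followed by a limit, though in finite dimension attainment is automatic.
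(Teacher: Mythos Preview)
Your proof is correct. It differs from the paper's in which SDP formulation it uses: the paper works with the \emph{dual} characterization $2^{-H_{\min}(X|EN)_\omega}=\max\{\operatorname{tr}(\Lambda_{XEN}\omega_{XEN}):\Lambda_{XEN}\geq 0,\ \operatorname{tr}_X\Lambda_{XEN}\leq\mathbbm{1}_{EN}\}$, takes an optimal $\Lambda_{XEN}$, applies Lemma~\ref{twirling} inside the trace, and then verifies that $\Gamma_{XE}:=\tfrac{1}{\dim(\mathcal{N})}\operatorname{tr}_N\Lambda_{XEN}$ is dual-feasible for the $(X|E)$ problem. You instead stay with the \emph{primal} operator-inequality characterization, take an optimal $\sigma_E^\ast$ for the $(X|E)$ problem, and inflate it to a primal-feasible $\sigma_{EN}$ for the $(X|EN)$ problem. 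The two arguments are mirror images under SDP duality and both hinge on Lemma~\ref{twirling}; your route is arguably slightly cleaner because the primal feasibility check is immediate (positivity and the chained operator inequality), whereas the paper must separately verify $\operatorname{tr}_X\Gamma_{XE}\leq\mathbbm{1}_E$. Either way the two factors of $\dim(\mathcal{N})$ arise in exactly the places you identify.
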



\begin{proof}[Proof of Lemma \ref{damnthiswassohard}]
Recall the definition of the min-entropy $H_{min} (X| EN)_{\omega}$ in terms of a semi-definite optimization problem (\ref{SDP}). Suppose we have found a feasible solution with a certain $\Lambda_{XEN}$ - then the min-entropy is defined by
\begin{equation} \label{sohard-eq1}
2^{H_{min} (X|EN)_{\omega}} =\operatorname{tr} \Lambda_{XEN}\omega_{XEN}.
\end{equation}

Then, Lemma \ref{twirling} ensures the following bound:

\begin{equation}  \label{sohard-eq2}
\operatorname{tr} \Lambda_{XEN}\omega_{XEN} \leq \dim{(\mathcal{N})} \operatorname{tr} \Lambda_{XEN}\omega_{XE} \otimes \mathbbm{1} =(\dim \mathcal{N})^2 \operatorname{tr} \Gamma_{XE}\omega_{XE}
\end{equation}

Here, a renaming of $\Gamma_{XE} = \frac{1}{\dim (\mathcal{N})} \operatorname{tr}_{N} \Lambda_{XEN}$ has taken place in the last step.
Now it remains to be shown that $\Gamma_{XE}$ is in itself a solution (if not the best solution) to the semi-definite optimization problem defining $H_{min} (X|E)_{\operatorname{tr}_{N} \omega}$.

To show this, we check the feasibility criteria. The second criterion, $\Gamma_{XE}\geq 0$, can be directly inferred from the feasibility of $\Lambda_{XEN}$, which entails $\Lambda_{XEN}\geq 0$.
Similarly, the feasibility of $\Lambda_{XEN}$ implies $\operatorname{tr}_{X} \Lambda_{XEN}=\Lambda_{EN}\leq\mathbbm{1}_{EN}$, which can be used to show that $\Gamma_{XE}$ fulfills the first criterion:
\[ \Gamma_{E}=\operatorname{tr}_{X} \Gamma_{XE}=\dim(\mathcal{N}) \operatorname{tr}_{X} \operatorname{tr}_{N} \Lambda_{XEN}=\operatorname{tr}_{N} \Lambda_{EN}\leq \operatorname{tr}_{N} \mathbbm{1}_{EN} \leq \mathbbm{1}_{E} \]

In conclusion, $\Gamma_{XE}$ fullfills the criteria, and is thus a feasible solution for the SDP defining $H_{min} (X|E)_{\operatorname{tr}_{N} \omega}$. Since the SDP still entails a maximization, this implies 
\begin{equation}\label{sohard-eq3}
\operatorname{tr} \Gamma_{XE} \omega_{XE} \leq 2^{-H_{min}(X|E)_{\operatorname{tr}_{N} \omega}}
\end{equation}
where equality would hold if $\Gamma_{XE}$ was optimal.
Inserting \eqref{sohard-eq1} and \eqref{sohard-eq3} into the bound \eqref{sohard-eq2}, we obtain:
\[2^{-H_{min} (X|EN)_{\omega}} \leq \dim(\mathcal{N})^2 2^{-H_{min}(X|E)_{\operatorname{tr}_{N} \omega}} \]
which directly implies
\[H_{min} (X|EN)_{\omega} \geq H_{min} (X|E)_{\operatorname{tr}_{N} \omega} - 2\log \dim (\mathcal{N}).\]
\end{proof}


\textcolor{red}{Update [13 March 2024]: The corresponding statement is not known to hold for smooth entropy wrt to trace distance with the same smoothing parameter $\tilde{\epsilon}$ on both smoothed entropies.}

With this established, the next step is to extend this statement to smooth min-entropy. Smooth min-entropy is related to min-entropy via the following smoothing relation:
\begin{equation} \label{smoothing}
\sup_{\tilde{\omega}_{XEN} \in B^{\tilde{\epsilon}}(\omega_{XEN})}  H_{min} (X|EN)_{\tilde{\omega}}= H_{min}^{\tilde{\epsilon}} (X|EN)_{{\omega}}.
\end{equation}
Since the smoothing parameters $\epsilon$ and $\tilde{\epsilon}$ are defined via the trace distance, which is trace preserving, $\tilde{\omega}_{XEN}\in B^{\tilde{\epsilon}}(\omega_{XEN})$ directly implies $\operatorname{tr}_{N} \tilde{\omega}_{XEN}\in B^{\tilde{\epsilon}}(\operatorname{tr}_{N}\omega_{XEN})$, which we can use to write:
\[\sup_{\tilde{\operatorname{tr}_{N} \omega}_{XEN} \in B^{\tilde{\epsilon}}(\operatorname{tr}_{N} (\omega_{XEN}))}  H_{min} (X|E)_{\operatorname{tr}_{N} \tilde{\omega}}= H_{min}^{\tilde{\epsilon}} (X|E)_{{\operatorname{tr}_{N} \omega}}\]

\textcolor{red}{Update [13 March 2024]: For a given state $\operatorname{tr}_{N} \tilde{\omega}_{XEN}\in B^{\tilde{\epsilon}}(\operatorname{tr}_{N}\omega_{XEN})$, does there always exist a state $\tilde{\omega}_{XEN}\in B^{\tilde{\epsilon}}(\omega_{XEN})$? Using $\tilde{\epsilon}$-balls with respect to purified distance like in \cite{TomLev17}, this is indeed true. 
For the trace distance, the argument can be fixed at the cost of a worse smoothing parameter (going via purified distance, in fact). 
For details, we refer to \cite{NTZLT}.
}

Assuming that we have found a state $\tilde{\omega}_{XEN}$ such that $H_{min} (X|E)_{\operatorname{tr}_{N} \tilde{\omega}}= H_{min}^{\tilde{\epsilon}} (X|E)_{{\operatorname{tr}_{N} \omega}}$, this immediately implies

\begin{equation}\begin{split} \label{smoothminentropy-purification}
 H_{min}^{\tilde{\epsilon}} (X|E)_{{\operatorname{tr}_{N} \omega}} -2\log(\dim(\mathcal{N})) &= H_{min} (X|E)_{\operatorname{tr}_{N} \tilde{\omega}} -2\log(\dim(\mathcal{N})) \\ &
\leq H_{min} (X|EN)_{\tilde{\omega}} \leq H_{min}^{\tilde{\epsilon}} (X|EN)_{\omega}
\end{split} \end{equation}

where the last inequality follows from the definition of smooth min-entropy as a supremum. This directly implies that a smoothed version of Lemma \ref{damnthiswassohard} is true, and thus tells us how smooth min-entropy transforms under purification.
\\
\\
Now, the relation between the two smooth min-entropies can be used to connect the security against general attacks of the protocol $\mathcal{E}'$ with the security parameter $\epsilon$ of the protocol $\mathcal{E}$ against collective attacks because of privacy amplification, by inserting the relation between the smooth min-entropies \eqref{smoothminentropy-purification} into the Leftover Hashing Lemma associated to the additional privacy amplification step \eqref{privamp}:

\begin{equation}\begin{split} \label{finally-leftoverhashing}
 \big\| \big((\mathcal{E}-\mathcal{F})\otimes \mathbbm{1}_{EN}\big) \tau_{TEN} \big\|_{\tr}& \leq \frac{1}{2} 2^{-\frac{1}{2}(H_{min}^{\tilde{\epsilon } }(X|EN)_{\omega}-l')} +2\tilde{\epsilon }\\& \leq \frac{1}{2} 2^{-\frac{1}{2}(H_{min}^{\tilde{\epsilon } }(X|E)_{\operatorname{tr}_{N} \omega}- 2\log (\dim(\mathcal{N})) -l')} +2\tilde{\epsilon } \\&
\leq \frac{1}{2} 2^{-\frac{1}{2}(l-2\log \frac{1}{2(\epsilon-2\tilde{\epsilon})} - 2\log \dim(\mathcal{N})-l')} +2\tilde{\epsilon } := \epsilon
\end{split} \end{equation}

Thus it is found that Eve gaining more knowledge from a bigger system can be counter-acted by reducing the length of the key by exactly the system's dimension. For the system $\mathcal{N}=(\mathcal{H}_T\otimes \mathcal{H}_E)^{s\otimes \overline{s}}=(\mathcal{H}^{\otimes n} \otimes \mathcal{H}^{\otimes n})^{s\otimes\overline{s}}$, 
this dimension is $g_{n,d}$, as appears in Section \ref{postselectiongeneralized}. This implies the following relation between the key lengths:

\begin{equation} \label{thekeylengths}
l'\geq l-2 \log \dim (\mathcal{N})=l-2 \log (g_{n,d})
\end{equation}

Now, equation \eqref{thekeylengths} can be related back to a statement about the general security of $\mathcal{E}'$ in the following way: If $\mathcal{E}'$ is obtained from $\mathcal{E}$ by shortening the output of the hashing function by $2\log(g_{n,d})$, i.e. shortening the key by this amount, then
\begin{equation}
\big\| \mathcal{E}'-\mathcal{F}' \big\|_{\diamond} \leq g_{n,d} \big\| ((\mathcal{E}'-F')\otimes \mathbbm{1}_{EN}) \tau_{TEN}\big\|_{\tr} \leq {g_{n,d}} \epsilon \coloneqq \epsilon' .
\end{equation}

Therefore, if a protocol $\mathcal{E}$ is $\epsilon$-secure against collective attacks, and we obtain $\mathcal{E}'$ from $\mathcal{E}$ by shortening the hashing function's output by $2\log (g_{n,d})$, then we can infer that $\mathcal{E}'$ is $\epsilon'$-secure against general attacks with $\epsilon'=g_{n,d} \epsilon$. 


In summary (similarly to the mathematical bound on the diamond norm), the comparison between security against general and collective attacks of a protocol with invariance under an arbitrary symmetry $\mathcal{S}$ is directly related to the dimension $g_{n,d}=\dim (\mathcal{H}^{\otimes n} \otimes \mathcal{H}^{\otimes n})^{(s\otimes\overline{s})}=\dim(\mathcal{N})$ of an invariant subspace. For this reason, the next section is occupied with computing this number for the symmetry group of interest in this project, the stochastic orthogonal group.


\section{Orbit Counting for Discrete Orthogonal Matrices}
\label{orbitcountingresults}

As established in Section \ref{postselectiongeneralized}, the bound on the diamond norm is related to the coefficients $g_{n,d}=\dim (\mathcal{H}^{\otimes n} \otimes \mathcal{H}^{\otimes n})^{(s\otimes\overline{s})}=\dim(\mathcal{N})$, which are equal to the dimension of an invariant subspace associated to the general symmetry group $\mathcal{S}$. Subsequently, as described in Section \ref{horriblecalculation}, these coefficients directly determine the difference in security parameter between collective and general attacks. In this section, we will introduce Witt's Lemma (described in Section \ref{orbitcounting}) and use it to compute this dimension by counting orbits for the symmetry group of discrete orthogonal matrices, as introduced in \cite{Gross} and Definition \ref{newsym-ortho}. Firstly, we will describe how the dimension was computed in the context of two party QKD protocols in Section \ref{orbitcountingresults1}, before showing how this can be generalized to the $N$ party case in Section \ref{orbitcountingresultsN}. Note that this does not directly have implications relevant to QKD - instead, the dimension for discrete orthogonal group is a stepping stone for getting the dimension for stochastic orthogonal group introduced in Definition \ref{newsym-stochastic}, which is the group that is relevant to the postselection theorem and subsequent QKD results, as will be explained in more detail in Section \ref{StochasticOrbitCounting}.

\subsection{Orbit Counting Using Witt's Lemma}
\label{orbitcounting}

In a system with a given symmetry, some elements of the phase space may become equivalent under transformations with that symmetry (called: being in the same orbit). Many problems can then be simplified because the behaviour of elements in the same orbit can be inferred from one another. As established in previous sections, we are highly interested in the dimension of a certain invariant subspace, $\mathcal{N}$. Computing the dimension of the space of elements that are invariant under a given symmetry group is achieved by counting the dimension of the space that the symmetry group projects to. In case of the stochastic orthogonal group and discrete orthgonal group (and also permutation group), the computational basis is preserved under their action. In such cases, the invariant subspace is spanned by mixtures of states within the same orbit, and each distinct orbit constitutes one basis vector of the invariant subspace. Therefore, the dimension of the invariant subspace is given by the number of distinct orbits of the group, and computing the dimension is achieved by counting the orbits of the symmetry group. 

\begin{definition}[Orbit of a group]
Let $G$ be a group acting on a set $X$. For each element $x\in X$ of the set, let 
$\operatorname{orb} G(x) =\{g x|g \in G\}$. The set 
$\operatorname{orb}  G(x)$ is a subset of $X$ that is called the orbit of $x$ under $G$.
\end{definition}

The orbit of an element $x\in X$ is given by all the elements $y\in X$ that are connected to $x$ via the group elements $g\in G$. Therefore, objects in a given orbit can be considered isomorphic in the sense that they will map to the same object under application of some $g\in G$. The dimension of the space that the $g\in G$ map to is thus equal to the number of distinct orbits of $G$.

One important tool for orbit counting is Witt's Lemma \cite{Witt} (sometimes called Witt's Theorem, stated here \cite{BoltRoomWall,MontealegreThesis} for symplectic groups). This lemma relates two sets of vectors, $\{v_i\}$ and $\{w_i\}$, with the same linear product relations to a mapping $M$ (with one particular property) between these two vectors.

\begin{theorem}[Witt's Lemma] \label{WittsLemma}
Let $V$ be a vector space with a non-degenerate bilinear product $\beta(\cdot,\cdot)$. Let $\{v_i\}$ and $\{w_i\}$ be two sets of $k$ linearly independent elements (vectors) of $V$ satisfying
\begin{equation*}
\beta(v_i,v_j)=\beta(w_i,w_j)\ \forall i,j=1,...,k.
\end{equation*}
Then, there exists a map $M: V\mapsto V$ satisfying
\begin{equation*}
\beta(Mv,Mw)=\beta(v,w)\ \forall v,w\in V
\end{equation*}
for which
\begin{equation*}
Mv_i =w_i\ \forall i=1,...,k.
\end{equation*}
\end{theorem}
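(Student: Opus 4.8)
The plan is to recast the statement as an extension theorem for isometries of subspaces and then prove it by induction on $k$, realising the global map $M$ as a composition of reflections (or, in the alternating case, transvections). \emph{Reduction.} Set $U=\operatorname{span}\{v_1,\dots,v_k\}$ and $W=\operatorname{span}\{w_1,\dots,w_k\}$ and define $\phi\colon U\to W$ by $\phi(v_i)=w_i$. Since both families consist of $k$ linearly independent vectors, $\phi$ is a well-defined linear isomorphism, and bilinearity together with the hypothesis $\beta(v_i,v_j)=\beta(w_i,w_j)$ makes it an isometry on $U$: for $u=\sum_i a_i v_i$ and $u'=\sum_j b_j v_j$,
\begin{equation*}
\beta(\phi u,\phi u')=\sum_{i,j}a_i b_j\,\beta(w_i,w_j)=\sum_{i,j}a_i b_j\,\beta(v_i,v_j)=\beta(u,u').
\end{equation*}
It therefore suffices to show that any isometry between subspaces of $(V,\beta)$ extends to an isometry of all of $V$; the required $M$ is any such extension, since then $Mv_i=\phi(v_i)=w_i$.

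\emph{Induction.} I would induct on $\dim U=k$. Applying the inductive hypothesis to $v_1,\dots,v_{k-1}$ and $w_1,\dots,w_{k-1}$ gives an isometry $M'$ of $V$ with $M'v_i=w_i$ for $i<k$. Writing $v_k'=M'v_k$, the isometry property of $M'$ yields $\beta(v_k',w_i)=\beta(v_k,v_i)=\beta(w_k,w_i)$ for $i<k$ and $\beta(v_k',v_k')=\beta(w_k,w_k)$. Hence $a:=v_k'-w_k$ is orthogonal to $W'=\operatorname{span}\{w_1,\dots,w_{k-1}\}$, and $v_k',w_k$ have equal norm. It remains to produce an isometry $N$ fixing $W'$ pointwise with $N v_k'=w_k$; the composite $M=N M'$ is then the sought extension.

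\emph{The one-vector move, and the main obstacle.} In the non-degenerate symmetric case in characteristic $\neq 2$ (which covers the odd-prime setting used later), take the reflection
\begin{equation*}
\tau_a(x)=x-\frac{2\beta(x,a)}{\beta(a,a)}\,a,
\end{equation*}
an isometry fixing $a^{\perp}\supseteq W'$; a short computation using $\beta(v_k',v_k')=\beta(w_k,w_k)$ shows $\tau_a(v_k')=w_k$, so $N=\tau_a$ works. \textbf{The delicate point} is the isotropic subcase $\beta(a,a)=0$, where $\tau_a$ is undefined: here I would use non-degeneracy of $\beta$ to choose an auxiliary vector and realise the move as a product of two well-defined reflections. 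For an alternating form, where \emph{every} vector is isotropic and reflections do not exist, reflections must be replaced throughout by symplectic transvections $x\mapsto x+\lambda\,\beta(x,a)\,a$. Controlling these isotropic exceptions (and the characteristic-$2$ pathologies in the symmetric case) is the only genuinely subtle part; everything else is bookkeeping forced by bilinearity and the non-degeneracy of $\beta$.
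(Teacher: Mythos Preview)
The paper does not prove Witt's Lemma at all: it simply states the result and cites the literature (\cite{Witt}, \cite{BoltRoomWall}, \cite{MontealegreThesis}), using it as a black-box tool for the orbit-counting arguments that follow. So there is no ``paper's own proof'' to compare against.

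That said, your outline is the classical reflection/transvection proof of Witt's extension theorem and is essentially correct. One small caution: in the inductive step you need the auxiliary isometry $N$ not only to send $v_k'$ to $w_k$ but also to fix $W'=\operatorname{span}\{w_1,\dots,w_{k-1}\}$ pointwise. For a single reflection $\tau_a$ this follows because $a\in (W')^{\perp}$, but in the isotropic fallback (product of two reflections, or transvections) you must check that \emph{both} factors can be chosen inside the pointwise stabiliser of $W'$. This is exactly where the standard proofs invoke non-degeneracy more carefully---e.g.\ by working in $(W')^{\perp}$ or by first handling the radical of $U$ separately when $U$ is degenerate. Your sketch flags the isotropic case as delicate, which is right; just be aware that keeping $W'$ fixed through that case is the part that actually requires the extra argument.
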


The converse of this statement is also true and comparatively easy to see: If such a mapping exists (that conserves linear products and maps $v_i$ to $w_i$ $\forall i$), then the two vector sets $\{v_i\}$ and $\{w_i\}$ obey the same linear product relations.

This theorem directly relates to orbits, since by definition, phase space elements that can be mapped to one another are in the same orbit.
Therefore, to compute the number of distinguishable orbits in a discrete space (which we aim to do in Sections \ref{orbitcountingresults1} and \ref{orbitcountingresultsN}), one needs to count how many distinct values there are for the linear product $\beta(\cdot,\cdot)$ associated to the space. 

It can be noted that orbit counting can easily be employed to reproduce the known dimension for the permutation group found in \cite{CKR}. To compute the dimension of the space that is invariant unter permutations, i.e. the orbits of the permutation group, the number of distinct basis elements has to be counted. Firstly, consider a single permutation matrix applied to a Hilbert space vector in the computational basis $\ket{ x_1, x_2, ...x_n}$, where each $x_i$ is some discrete number between $0$ and $d-1$. Applying a permutation will switch the numbers $x_i$ with each other while conserving the number of times each number between $0$ and $d-1$ appears. For example, applying a permutation to the vector $\ket{1,0,...,0}$ will change the position of the $1$, but the number ``0" will always appear $n-1$ times, and the number ``1" will always appear one single time. In total, the basis elements that are distinct with respect to permutations are thus characterized by occupational numbers $n_j$ for $j=1,...,d$ with $\sum_j n_j=n$. This is a simple and well known combinatorics problem (``Stars and Bars" \cite{starsandbars}, boson statistics): How many distinct ways are there to put $n$ stars into $d$ boxes? The solution is exactly $\binom{n+d-1}{n}= (n+1)^{d-1}$, which would be the dimension of the Hilbert space $(\mathcal{H}^{\otimes n})^{\pi}$.
However, the postselection theorem is related to the space $(\mathcal{H}^{\otimes n} \otimes \mathcal{H}^{\otimes n} )^{\pi\otimes\pi}$. In this case, there is the same permutation acting on a vector $\ket{ x_1, x_2, ...x_n}$ in the first Hilbert space and a vector $\ket{ y_1, y_2, ...y_n}$ in the second Hilbert space, which can be understood as a permutation of the rows of the following matrix: \[
\left( \begin{array}{r r}
x_1 & y_1  \\                                              
 \vdots & \vdots \\
x_n & y_n  \\                                              
\end{array}\right).\]
Since the same permutation acts on each column, elements $x_i$ and $y_i$ always stay together and can thus be considered as a pair. Applying the same logic as before, the occupation numbers of such pairs $(x_i,y_i)$ (of which there exist $d^2$ possibilities) now define discrete orbits, so there exist $\binom{n+d^2-1}{n}= (n+1)^{d^2-1}$ orbits, which is exactly the number in equation (\ref{permutationsgtd}), found in \cite{CKR}.

Using the same argument, this problem can also easily be extended to the $N$ party case: then, the number of orbits is $\binom{n+d^{2N}-1}{n}= (n+1)^{d^{2N}-1}$, as it appears in \cite{Grasseli18}.

For the symmetry group of discrete orthogonal matrices, Witt's Lemma will be employed to recast the task of orbit counting as a combinatorics problem.


\subsection{Orbits of Discrete Orthogonal Group for Two Parties}
\label{orbitcountingresults1}


To apply the postselection theorem using the new symmetry group discovered in \cite{Gross}, the dimension of the subspace that is invariant under its action has to be computed. This dimension for the stochastic orthogonal group $\mathcal{O}_n$ can be computed via a relaxation to the discrete orthogonal group $\tilde{\mathcal{O}}_n$, which is described by discrete orthogonal matrices $\tilde{O}\in\tilde{\mathcal{O}}_n$, which are introduced in Definition \ref{newsym-ortho} (see Section \ref{newsym}).
Therefore, our aim is to compute the dimension of the space $(\mathcal{H}^{\otimes n} \otimes \mathcal{H}^{\otimes n})^{(\tilde{O}\otimes \tilde{O})}$ (since $\tilde{O}$ is real, note that $\overline{\tilde{O}}=\tilde{O}$) which is invariant under the action $\tilde{O}\otimes \tilde{O}$. As established in Section \ref{orbitcounting}, the dimension of that space is equal to the number of distinct orbits of that group.

As a precursor, we will consider the group of discrete orthogonal matrices acting on $\mathcal{H}^{\otimes n}$, thus calculating the dimension ${D_1}$ of the invariant subspace $(\mathcal{H}^{\otimes n})^{\tilde{O}} \subseteq \mathcal{H}^{\otimes n}$. Note that this space is only a potentially helpful construct with no direct relation to the relevant subspace. However, on the one side, this will be helpful to describe the counting strategy - and on the other side, this will become important when treating the $N$ party generalization, where a recursion relation will be proposed.

The representation of the discrete orthogonal transformations act on computational basis elements of the Hilbert space in the following way:
\[R(\tilde{O}) \ket{x} =\ket{\tilde{O}x}\]
where $\ket{x}$ is a vector on the Hilbert space $\mathcal{H}^{\otimes n}$, and the symbols  $x$ live on the discrete Hilbert space $\mathbb{F}_d^n$, on which the $n\times n$ matrix $\tilde{O}$ acts.
Thereby, $R(\tilde{O})$ acts by preserving a finite set of basis elements, which justifies that the number of orbits is equal to the dimension of the invariant subspace.


According to Witt's Lemma (\ref{WittsLemma}) \cite{Witt}, orbits of the symmetry group $\tilde{\mathcal{O}}_n$ are defined in the following way: 
\begin{equation}
\ket{x'} \in \operatorname{orb}\tilde{\mathcal{O}}(\ket{x}) \Leftrightarrow \beta({x},{x})=\beta({x}',{x}') \mod d 
\end{equation}
with a linear product $\beta(\cdot,\cdot)$ associated to the discrete vector space $\mathbb{F}_d^{\otimes n}$, for $\ket{x},\ket{x'}$ be vectors on $ \mathcal{H}^{\otimes n}$ and $x,x'$ on $\mathbb{F}_d^n$.

The number of distinct orbits is thus given by the number of distinct numbers $\beta(x,x) \mod d$ that define each orbit. Thereby, the task of counting orbits becomes a combinatorics problem where two distinct cases have to be considered. On the one hand, let $x\neq 0$. Because of the modulo on the condition, there are $d$ distinct possibilities to choose from: $\beta(x,x)=0,1,...,d-1$, and alll of these possibilities are realized if $n$ is large enough. On the other hand, let ${x}=0$; then, it is immediate that $\beta(x,x)=0$ by linearity. This case cannot be transformed into the case where $x\neq 0$, $\beta(x,x)=0$ and must thus be considered a separate orbit.

In summary, there are $d$ ways to choose distinct $\beta(x,x)$ for $x\neq 0$, and only one way to choose $\beta(x,x)$ for $x=0$. In conclusion, we obtain the following number of orbits, and thus dimension of $(\mathcal{H}^{\otimes n})^{\tilde{O}}$:

\begin{equation}
{D_1}=d+1
\label{onepartyorbit}
\end{equation}

Now, this can be expanded to the case of interest, where the dimension of interest is $D_2$, the dimension of the invariant subspace $(\mathcal{H}^{\otimes n} \otimes \mathcal{H}^{\otimes n})^{(\tilde{O}\otimes \tilde{O})} \subseteq \mathcal{H}^{\otimes n}\otimes \mathcal{H}^{\otimes n}$.
In this case, the space that $R(\tilde{O})\otimes R(\tilde{O})$ maps to has to be considered:

\[R(\tilde{O})\otimes R(\tilde{O}) \ket{x_1}\otimes\ket{x_2}=\ket{\tilde{O} x_1}\otimes\ket{\tilde{O}x_2}\]

Using Witt's Lemma (\ref{WittsLemma}), orbits of the group acting by $(\tilde{O}\otimes\tilde{O})\ket{x_1,x_2}$ can then be defined by the following set of equations for $x_1$ and $x_2$ being linearly independent:

\[  \ket{x_1',x_2'} \in \operatorname{orb} (\tilde{\mathcal{O}}\otimes\tilde{\mathcal{O}})  (\ket{x_1,x_2}) \Leftrightarrow \begin{cases} \beta(x_1,x_1)=\beta(x_1',x_1') \mod d \\\beta(x_2,x_2)=\beta(x_2',x_2') \mod d\\\beta(x_1,x_2)=\beta(x_1',x_2') \mod d \end{cases} \]
where $\ket{x_i}$ are a vector on the Hilbert space $\mathcal{H}^{\otimes n}$, and the symbol $x_i$ are a discrete vector on $\mathbb{F}_d^n$ $\forall i$.

Now, combinatorics can be employed to upper bound how many different discrete values can be chosen for $\beta(x_1,x_1), \ \beta(x_2,x_2)$ and $\beta(x_1,x_2)$.

\begin{figure}[H]
   \centering
       \includegraphics[width=14cm]{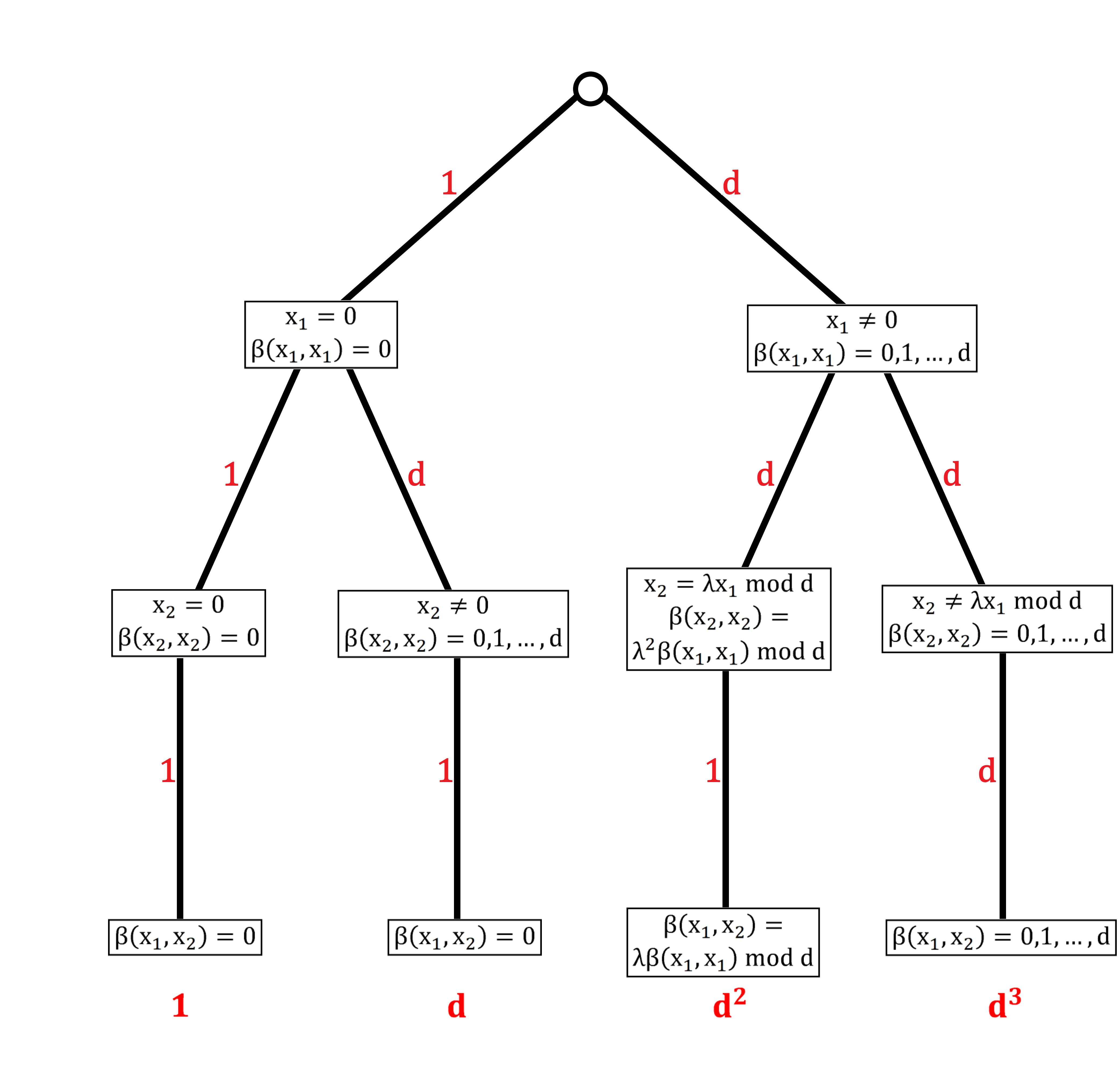}
\caption{Tree diagram to illustrate the different cases and how the choices influence each other. In red, the number of paths between knots have been indicated along each branch of the tree diagram. (1 path connected after d paths is short-hand notation for d paths each having a knot, and 1 path pertruding from this knot.) At the bottom, the total number of possible choices for each of the cases is indicated in bold and red. These numbers have to be added up to sum to the total number of possibilites, which is equal to the dimension $D_2$ we are looking for.}
\label{baumdiagramm}
\end{figure}

Added difficulty stems from the fact that $\beta(x_1,x_2)$ can depend on the previous choices, and that the two vectors $x_1$ and $x_2$ could be linearly dependent on one another. The latter could in principle prohibit the application of Witt's Lemma. However, we are actually not using Witt's Lemma directly, but employing it to derive an upper bound which can be improved upon by considering the different cases, and treating the possibility of linearly dependent vectors separately. In fact, the vectors \textit{must} be linearly dependent when the number of subsystems $n$ (corresponding to the entries of the computational basis vector which are transformed via discrete orthgonal matrices in $\tilde{O}x_i$) is small compared to the number of vectors (here: 2, namely $x_1$ and $x_2$). Therefore, for $n<2$, there will always be linear dependency between the $x_i$ for any computational basis vector. For $n\geq 2$, i.e. $n$ large enough, there may be linearly dependent and linearly independent vectors, and both cases must be taken into account. In an $N$ party scenario, there are $N$ different vectors $x_i$ with $i=1,...,N$ - then, $n<N$ will lead to definitive linear dependency, and both cases are possible for $n\geq N$. Since $n$ is usually assumed to be large (since larger $n$ means larger key length, or smaller error in the de Finetti bound), it can usually be assumed that both cases occur. But even for small $n$, linear dependency would fix some $\beta(x_i,x_j)$, and the assumption that these linear products can be chosen freely can therefore lead to an overcounting, making our result an upper bound, which is sufficient for our purposes.

In total, linear dependency between $x$-vectors and the differentiation between $x_i=0$ and $x_i\neq 0$ for $i=1,2$ all have to be taken into account. 
We will now list all the different cases and how they influence the available choices for the linear products $\beta(x_1,x_1), \ \beta(x_2,x_2)$ and $\beta(x_1,x_2)$. All of the cases and the number of choices for each are sketched in Figure \ref{baumdiagramm} in a tree diagram.

The simplest case that must be differentiated is the case where both vectors are $x_1=x_2=0$. This fully determines $\beta(x_1,x_1)=\beta(x_2,x_2)=\beta(x_1,x_2)=0$ by linearity, and thus only has one path associated to it.

If the first vector $x_1=0$, and $x_2\neq 0$, $\beta(x_2,x_2)$ can still be chosen out of the $d$ members of the set $\{0,1,...d-1\}$ - which constitutes to $d$ choices, while $\beta(x_1,x_1)=\beta(x_1,x_2)=0$ by linearity (1 choice each).

If $x_1\neq 0$, there are $d$ possibilities to fix $\beta(x_1,x_1)=0,1,...,d-1$. Then, there are two possibilites for the vector $x_2$: it can either be linearly independent of $x_1$, or not. If it is \textit{linearly independent}, i.e. $x_1\neq \lambda x_2$ for any $\lambda\in\{0,1,...,d-1\}$, $\beta(x_2,x_2)$ can be chosen from the set $\{0,1,...d-1\}$. Furthermore, there are $d$ possible choices for $\beta(x_1,x_2)$. Overall, this contributes $d^3$ possible paths to the number of orbits. For small $n$, these choices might be restricted, but as this would only lead to a smalller number of orbits, it can be assumed that it can be chosen freely, leading to an upper bound.
However, if $x_2$ is {linearly dependent} on $x_1$, i.e. $x_2=\lambda x_1$, then $\beta(x_2,x_2)=\lambda^2 \beta(x_1,x_1) \mod d$ and $\beta(x_1,x_2)=\lambda \beta(x_1,x_1) \mod d$ are fixed by the choice of $\beta(x_1,x_1)$, contributing one path for each possibility of $\lambda\in\{0,1,...,d-1\}$. There are thus $d$ distinct possibilities of choosing $\beta(x_1,x_1)$ and $d$ distinct choices of $\lambda$ - constituting to an overall addition of $d^2$ possible paths.

In total, summing up all the possible choices for each case results in
\begin{equation} \label{twopartyorbit}
D_2= d^3+d^2+d+1
\end{equation}
as an upper bound to the number of orbits of $\tilde{O}\otimes \tilde{O}$, where $\tilde{O}\in\tilde{\mathcal{O}}_n$, and thus the dimension of $(\mathcal{H}^{\otimes n} \otimes \mathcal{H}^{\otimes n})^{(\tilde{O}\otimes \tilde{O})}$.



\subsection{Orbits of Discrete Orthogonal Group for N Parties}
\label{orbitcountingresultsN}

From the computation of the two party case \ref{twopartyorbit}, it can already be suspected that there may be a recursion relation associated to the number of tensor products of $\tilde{O}$: if the first choice $x_1=0$, for which there is only one choice, the situation is mapped back to the situation where there is only one vector (\ref{onepartyorbit}), which corresponds to having $D_1=d+1$ choices for $\beta(x_2,x_2)$. Similarly, when $x_1$ and $x_2$ are non-zero and linearly independent, it is clear that there will be an additional factor of $d$ for the choice of each new party ($\beta(x_1,x_1)$) and for each cross term ($\beta(x_1,x_2)$). For linear dependency, each possibility of having two vectors linearly dependent contributes a factor of $d$, while fixing cross terms.
Here, this argument will be generalized to find a recursion relation for the dimension $D_K$ that is associated to the subspace that is invariant under $\tilde{O}^{\otimes K}$.

We thus assume that the number of orbits of $K-1$ applications of $\tilde{O}$, $D_{K-1}$, is known, and an additional $K$-th $\tilde{O}$ is brought in. 

For a more instructive argument, imagine adding the additional $x_K$ at the top of the tree diagram, choosing it first (since the order of choosing should not matter). If $x_K=0$, this branch will reproduce the tree diagram for the $K-1$ case. If $x_K\neq 0$, there are $d$ choices for $\beta(x_K,x_K)$. Then, all subsequent choices for $x_i$ will either linearly depend on $x_K$ (introducing a factor $d$ for $\lambda_K$ each time) or be linearly independent (introducing a factor of $d$ because of the cross term $\beta(x_K,x_i)$. In total, each $x_i$ introduces a factor of $d$, making it a total of $d^{K-1}$ for all subsequent choices. In total, introducing $x_K\neq 0$ will contribute $dd^{K-1}=d^K$. By assuming that all other choices are the same as in the $K-1$ case, we may be overcounting again, because new linear dependencies could emerge, but our result is an upper bound in any case. Thus, there must be a factor of $d^K D_{K-1}$.

In combining the $x_K=0$ case and the $x_K\neq 0$ case, we obtain the following recursion relation:
\begin{equation}
D_K = (d^K +1)D_{K-1} = \prod_{k=0}^K (d^k +1)
\end{equation}

This reproduces all the countings made by hand for $K=1$ (see equation (\ref{onepartyorbit})), $K=2$ (see equation (\ref{twopartyorbit})), and $K=3$ and $K=4$.


This does not correspond to $K$ party protocols. It is important to keep in mind that one copy of the space $\mathcal{H}_E=\mathcal{H}^{\otimes n}$ belongs to Eve, while Alice and Bob share the other one, $\mathcal{H}_{T}=\mathcal{H}^{\otimes n}$. In an $N$ party protocol, Alice shares states with $N$ Bobs, which means that there are $N$ Hilbert spaces $\mathcal{H}_{T_i} =\mathcal{H}^{\otimes n}$ for $i=1,...N$. However, when aiming to apply the postselection theorem, Eve's space must be taken into consideration. In an $N$ party protocol, Eve has access to a copy of each of Alice and Bob's shared Hilbert spaces $\mathcal{H}_{T_i}$, meaning that Eve has access to $N$ Hilbert spaces $\mathcal{H}_{E_i}=\mathcal{H}^{\otimes n}$. In total, this means that there are $2N$ spaces $\mathcal{H}^{\otimes n}$. Therefore, the dimension that will be relevant for the postselection technique in Section \ref{StochasticOrbitCounting} is the dimension of the subspace 
\[\big( (\mathcal{H}^{\otimes n})^{\otimes 2N}\big)^{\tilde{O}^{\otimes 2N}}.\]
In conclusion, from the result for $D_K$, the $N$ party-related dimension can easily be obtained by setting $K=2N$.

\section{Results for QKD with Stochastic Orthogonal Symmetry}
\label{StochasticOrbitCounting}

In this section, the postselection technique is applied to the stochastic orthogonal symmetry group $\mathcal{O}_n$ introduced in Definition \ref{newsym-stochastic}, which leads to a mathematical bound on the diamond norm of $\mathcal{O}_n$-invariant maps (as shown in Section \ref{postselectiongeneralized}), and subsequently to a result on the security of $\mathcal{O}_n$-invariant QKD protocols (as shown in Section \ref{horriblecalculation}).
Both of these results depend on one number, the dimension of the $\mathcal{O}_n$-invariant subspace $g_{n,d}=\dim(\mathcal{N})=\dim((\mathcal{H}^{\otimes n} \otimes \mathcal{H}^{\otimes n} )^{O\otimes O})$. This number can be computed via the counting of orbits of the discrete orthogonal group $\tilde{\mathcal{O}}_{n-1}$, which has been achieved in Section \ref{orbitcountingresults}.

Postselection technique can only be applied to the stochastic orthogonal group because it fulfills the central assumption: resolution of identity. The resolution of identity of the stochastic orthogonal group is ensured because an appropriate integration measure has been shown to exist in \cite{Gross}; for discrete orthogonal group, it is not guaranteed.

As alluded to in Section \ref{newsym}, the discrete orthogonal group (for $n-1$ copies) and stochastic orthogonal group (for $n$ copies) are closely related to each other for $n$ not a multiple of $d$. This is a result of the fact that the difference between the two groups is the fact that the stochastic orthogonal group preserves the all-ones-vector, while the discrete orthogonal group does not. If the all-ones vector is not self-orthogonal (i.e. $n$ is not a multiple of $d$), the vector space can be decomposed into a direct sum of a part spanned by the all-ones vector and its orthocomplement, and the stochastic orthonal matrices are block-diagonalized in the corresponding basis, where one block acts on the all-ones vector's span, and one block acts on its orthocomplement. The block acting on the orthocomplement corresponds to a discrete orthogonal matrix in $\tilde{\mathcal{O}}_{n-1}$, preserving the linear product on this part of the vector space. Then, any vector ${v}$ on the whole vector space can be written as a sum ${v}=k{v_1}+{v_2}$ where ${v_1}$ denotes the all-ones vector, the factor $k$ ranges from $0,1,...,d-1$ and ${v_2}$ is a vector in the orthocomplement space of the all-ones vector, which means ${v_2}$ is a vector in the vector space where the discrete orthogonal group acts. For each orbit of the discrete orthogonal group, the stochastic orthogonal group has $d$ orbits, corresponding to the choices of $k$. In total, therefore, the number of orbits of the stochastic orthogonal group for $n$ copies corresponds to the number of discrete orthogonal group orbits for $n-1$ copies multiplied by $d$. 

Because of this, the results obtained from orbit counting for $\tilde{\mathcal{O}_{n-1}}$ have direct implications for the postselection technique for $\mathcal{O}_{n}$, for $n$ not a multiple of $d$:

\begin{equation} \label{g_nd}
g_{n,d}=\dim(\dim(\mathcal{H}^{\otimes n} \otimes \mathcal{H}^{\otimes n} )^{O\otimes O}) =d \dim(\mathcal{H}^{\otimes (n-1)} \otimes \mathcal{H}^{\otimes (n-1)} )^{\tilde{O}\otimes \tilde{O}}= d^4+d^3+d^2+d
\end{equation}

For application to QKD, as established in Section \ref{horriblecalculation}, the $\epsilon'=g_{n,d}\epsilon=\dim(\mathcal{H}_T\otimes \mathcal{H}_E)^{s\otimes\overline{s}} \epsilon$-security of a protocol $\mathcal{E}'$ against general attacks can be inferred from $\epsilon$-security against collective attacks of a protocol $\mathcal{E}$, if $\mathcal{E}'$ is obtained from $E$ with an additional privacy amplification shortening the key by $2\log g_{n,d}= 2\log\dim((\mathcal{H}_T\otimes \mathcal{H}_E)^{s\otimes\overline{s}})$ bits.
Therefore, for a two party QKD protocol, the security of general attacks can be inferred from the security of collective attacks at the cost of $g_{n,d}$ in the error, and $2\log g_{n,d}$ in the key length, with $g_{n,d}$ as given in equation \eqref{g_nd}.

This result can easily be extended to accommodate $N$ party QKD protocols, with the following result:

\begin{equation}
\dim\Big(\big((\mathcal{H}^{\otimes n})^{\otimes 2N} \big)^{O^{\otimes 2N}}\Big)=g_{n,d,N}=\prod_{k=0}^{2N}d (d^k +1)
\end{equation}

as the multiplicative factor between collective and general attacks.

In comparison to previous results for permutation invariant maps, this number constitutes a significant improvement, as it is very small and does not depend on the number of copies $n$. Permutation-based postselection technique leads to a polynomial in $n$: \[g_{n,d}^{(\mathcal{P}_n)} = (n+1)^{d^2-1}\] for two parties, and \[g_{n,d,N}^{(\mathcal{P}_n)}= (n+1)^{d^{2N}-1}\] for $N$ parties (see \eqref{permutationsgtd}, and explained briefly in Section \ref{orbitcounting}).

This means that the multiplicative factor between the diamond distance of CPTP maps and the trace distance with one particular input state is smaller, meaning there is a smaller gap between the upper bound and true diamond norm. Furthermore, since the stochastic orthogonal symmetry preserves tensor powers of stabilizer states, the relevant input state is a purification of a de Finetti state constructed with tensor powers of stabilizer states. In a given setting, there are significantly fewer stabilizer states than arbitrary states, which is also an improvement.

To use this in the context of an actual QKD protocol, the number of copies $n$ should preferably be rather high in order to get a raw key that is as long as possible. After the key has been sifted and error corrected, privacy amplification is performed to transform it into a shorter, but completely secret key. During this step, it is preferable to shorten the key as little as possible, to ensure that the final key that Alice and Bob share is as long as possible. In case of a general attack, there is an additional privacy amplification step, where the same holds true: we want to shorten the key as little as possible, while increasing the error as little as possible.
During this step, the shortening of the key and the change in the error is determined by $g_{n,d}$. For our result, this means that the multiplicative change in the error is smaller, and the key has to be shortened less in comparison with previous schemes.

However, for the postselection technique to be applicable with stochastic orthogonal group, a map with this symmetry is needed. In the context of QKD, this would require a protocol with stochastic orthogonal symmetry. While such a protocol could likely be constructed, investigating existing protocols also has some potential, for example in form of 6-state protocol \cite{Mertz13}, $N$-party 6-state protocol \cite{Grasseli18} or protocols with orthogonal symmetry in continuous variable schemes \cite{LKGC09}.






\chapter{An SDP Hierarchy for Maximum Channel Fidelity Based on Stabilizer de Finetti Theorem}
\label{chapter-hierarchy}

While QKD security was initially the chief motivation for studying quantum de Finetti theorems, the focus has long shifted to other applications as well. One such example is the approximation of separable states using a hierarchy of SDPs \cite{BBFS18}, based on DPS hierarchy \cite{DPS02,DPS04}.
Under symmetric extension of one party's system (e.g. Bob's), a bipartite state's marginal becomes close to separable in the cut between the two parties (Alice and Bob), with closeness determined by a de Finetti theorem. In other words, a state $\rho_{AB_1B_2\cdots B_n}\equiv \rho_{AB_1^n}$ which is invariant under permutations of the subsystems $B_i$, can be approximated by a state that is separable in the cut between $A$ and $B_1$ via a convergent hierarchy of SDPs.
In this Chapter, an analogue will be shown for stochastic orthogonal transformations instead of permutations, leading to an approximation by separable and (partly) stabilizer states. 

In Section \ref{sec-maxchannelfidelity}, a motivation for considering this kind of problem will be given in the form of a QEC application. Section \ref{section-stabdefinettithm} states and proves the underlying stabilizer de Finetti statement with linear constraints before giving the associated SDP hierarchy and using it to establish convergence in \ref{section-hierarchy}. While this chapter will focus on the stabilizer de Finetti theorem with linear constraints obtained from Theorem \ref{StabDeFinetti}, an analogous version for qubits using Theorem \ref{StabDeFinettiQubit} will be mentioned, and an extended version with stochastic orthogonal invariance on both Alice's and Bob's side is given in Appendix \ref{appendix-bothsides}. Finally, in Section \ref{section-numerics}, some thoughts pertaining to numerical results will be briefly stated.

\section{Approximating Maximum Channel Fidelity}
\label{sec-maxchannelfidelity}

When classical communication channels are unreliable, the successful transfer of a message is by no means certain.
Common noise models include errors occuring randomly but with fixed probability, or dynamic models where errors may occur in bursts. When trying to counteract such noise with error correction, it is integral to know if errors occur, and with what probability they occur, which depends on the nature and amount of the imposed noise and the length of the message. Therefore, a chief quantity of interest is the maximum success probability for transmitting a uniform $d_M$-dimensional message over a channel with a noise model $N_{X\rightarrow Y}$, given by $p(N,d_M)$. Finding this probability for different error correction procedures then gives a hint as to which error correction is most successful. 

Determining this success probability is a bilinear maximization problem, for which the solution is in general NP-hard to approximate. However, there are methods to approximate the solution from below as well as above, where the latter is achieved by a linear programming relaxation of the problem \cite{Hayashi09,PPV10}. This linear programming relaxation $lp(N,d_M)$ is efficiently computable and has many useful analytic properties.

In similar spirit, the task of understanding data transfer in a quantum setting, i.e. transferring a quantum state over a noisy quantum channel, gains relevance as devices evolve and improve.
Instead of maximum success probability, one most commonly considers the maximum channel fidelity $F_c(N,d_M)$ for transmitting one part of a maximally entangled state over a noisy channel, which can then be used to analyze and compare existing QEC procedures.

For a channel $\mathcal{C}$ that transforms some input state $\rho_{in}$ into an output state $\rho_{out}$, its fidelity is related to the overlap of the transformed input $C(\rho_{in})$ and the desired output $\rho_{out}$, which quantifies how much they differ. The fidelity of two states is therefore defined in the following way:

\[F_s(\rho_{out}, C(\rho_{in})
) = \big\|  \sqrt{\rho_{out}}\sqrt{C(\rho_{in})} \big\|_{\tr}^2\]

When defining channel fidelity for an error correcting procedure, the input state is given by the maximally entangled state $\Phi_{AR}$, and the goal is to transfer one part of it from system $A$ to system $\tilde{B}$ via a channel $ \mathcal{D}_{B\rightarrow\tilde{B}} \circ \mathcal{N}_{\tilde{A}\rightarrow {B}} \circ \mathcal{E}_{A\rightarrow\tilde{A}}$ (without affecting the part on system $R$). The sytems $A$, $R$ and $\tilde{B}$ are all of dimension $d_M$. In applying this channel, the quantum state passes an encoder channel $\mathcal{E}_{A\rightarrow\tilde{A}}$, a noisy transmission channel $\mathcal{N}_{\tilde{A}\rightarrow {B}}$, and a decoder channel $\mathcal{D}_{B\rightarrow\tilde{B}}$. To quantify how well this channel transfers a part of the maximally entangled state, we want to compare the transformed input $\big( ( \mathcal{D}_{B\rightarrow\tilde{B}} \circ \mathcal{N}_{\tilde{A}\rightarrow {B}} \circ \mathcal{E}_{A\rightarrow\tilde{A}})\otimes \mathbbm{1}_R \big) (\Phi_{AR})$ and the desired output $\Phi_{\tilde{B}R}$.

To determine maximum channel fidelity, we want to use the best possible encoder and decoder. This translates to the following maximization problem for determining $F_c(N,d_M)$:

\begin{optimize}\label{Fidelity-original}
\begin{equation*} \begin{split} 
F(N,d_M) =\text{maximize} \ & F_s\Big(\Phi_{\tilde{B}R}, \big( ( \mathcal{D}_{B\rightarrow\tilde{B}} \circ \mathcal{N}_{\tilde{A}\rightarrow {B}} \circ \mathcal{E}_{A\rightarrow\tilde{A}})\otimes \mathbbm{1}_R \big) (\Phi_{AR})\Big)
\\
\text{subject to }& \mathcal{E}_{A\rightarrow\tilde{A}}, \mathcal{D}_{B\rightarrow\tilde{B}} \text{ are quantum channels}
\end{split}
\end{equation*}
\end{optimize}

Using the Choi-Jamio{\l}kowski isomorphism, this can be rewritten as the following bilinear optimization problem (note the similarity to the classical case) with matrix-valued variables (see \cite{BBFS18}, Lemma 5.2): 

\begin{optimize} \label{ChannelFidelity}
\begin{equation*} \begin{split}
F(N,d_M) =\text{maximize} \ & d_{\tilde{A}} d_B \Tr \Big( \big(J_{\tilde{A}B}^N \otimes \Phi_{A\tilde{B}} \big) \big(  E_{A\tilde{A}} \otimes D_{B\tilde{B}} \big) \Big)
\\
\text{subject to }&  E_{A\tilde{A}} \geq 0, \ D_{B\tilde{B}} \geq 0 \\
  & \Tr_{\tilde{A}}( E_{A\tilde{A}}) =\frac{\mathbbm{1}_A}{d_A}, \ \Tr_{\tilde{B}}( D_{B\tilde{B}})  = \frac{\mathbbm{1}_B}{d_B}\\
\end{split}
\end{equation*}
\end{optimize}

where $\Phi_{A\tilde{B}}$ is a maximally entangled state of dimension $d_M$ and $J_{\tilde{A}B}^N$ is the normalized Choi state corresponding to the noisy channel transmitting a state from $\tilde{A}$ to $B$.

The maximum fidelity can be bound from below by seesaw methods \cite{RW05}, and there exists an SDP relaxation to bound it from above \cite{LM15}. (It is also worth mentioning that there exist converse bounds which bound the message length $d_M$ for a given fidelity \cite{TBR16,WD16,WFD19,KDWW19}.) While this SDP relaxation is efficiently computable, the gap between the SDP relaxation's solution and the actual maximum fidelity is not well understood. In \cite{BBFS18}, a converging hierarchy of SDP relaxations on the maximum fidelity is proposed, enabling us to study $F(N,d_M)$ directly. (In fact, the first level of the hierarchy reproduces the bounds of \cite{LM15}.) This relies on the idea that separable states (like $E_{A\tilde{A}} \otimes D_{B\tilde{B}}$) can be approximated by a hierarchy \cite{Lasserre00,Parrilo03,DPS02,DPS04}.

This hierarchy relies on two key concepts: Firstly, that separable states $\rho_{AB}$ are $n$-extendible, which means that $n-1$ systems can be added on Bob's side, extending the state to $\rho_{AB_1 \cdots B_n}\equiv \rho_{AB_1^n}$, such that $ \rho_{AB_1^n}$ is invariant under permutations of the subsystems of $B_1^n$. For example, a separable quantum state $\rho_{AB}=\omega_A\otimes \tau_B$ can be $n$-extended to $\rho_{AB_1^n}=\omega_A\otimes\tau_{B_1} \otimes \cdots \otimes \tau_{B_n}$, which is obviously invariant under permutation of $B_1^n$. However, given a state $\rho_{AB_1^n}$ that is invariant under permutations of the subsystems of $B_1^n$, it is not immediately implied that it originates in an $n$-extension of a separable state; but it is close, and this closeness can be quantified in terms of a de Finetti theorem, which is the second key concept. This  approximation improves with increasing $n$, as the de Finetti error decreases.

Using the extendibility property, the following approximation of maximum channel fidelity in \eqref{ChannelFidelity} can be proposed:

\begin{optimize} \label{hhhhhhh}
\begin{equation*} \begin{split}
F^{(n)}(N,d_M)=\text{maximize} \ & d_{\tilde{A}} d_B \Tr \Big( \big(J_{\tilde{A}B}^N \otimes \Phi_{A\tilde{B}} \big) \big(  \rho_{A\tilde{A}B\tilde{B}} \big) \Big)\\
\text{subject to }&  \rho_{A\tilde{A}(B\tilde{B})_1^n}  \geq 0, \ \Tr(\rho_{A\tilde{A}(B\tilde{B})_1^n})=1 \\
& \rho_{A\tilde{A}(B\tilde{B})_1^n} \text{ is invariant under all permutations} \\
  & \Tr_{\tilde{A}}( \rho_{A\tilde{A}(B\tilde{B})_1^n} ) =\frac{\mathbbm{1}_A}{d_A}\otimes \rho_{(B\tilde{B})_1^n}\\&
\Tr_{\tilde{B}_n}( \rho_{A\tilde{A}(B\tilde{B})_1^n})  = \rho_{A\tilde{A}(B\tilde{B})_1^{n-1}} \otimes \frac{\mathbbm{1}_{B_{n}}}{d_B} \\
\end{split}
\end{equation*}
\end{optimize}

This corresponds to level $n$ of a hierarchy approximating maximum channel fidelity. Then, the convergence of such a hiearchy relies on a de Finetti theorem.

It is important to note that a specific type of de Finetti theorem is needed here; in particular, the standard version with best known convergence in \ref{thm-DeFinetti} is not applicable in this scenario.
While any extendability property leads to separability on the side where permutation invariance is imposed, i.e. between the systems $B_i$ in the above example and additional separability in the cut between Alice and Bob (or encoder and decoder), the additional constraint on the encoder's and decoder's marginal is not guaranteed in general. 

The convergence towards a separable state with desired constraints on the marginal ($Tr_{\tilde{A}}( E_{A\tilde{A}}) =\frac{\mathbbm{1}_A}{d_A}$, $\Tr_{\tilde{B}}( D_{B\tilde{B}})  = \frac{\mathbbm{1}_B}{d_B}$ in \ref{ChannelFidelity}) can be ensured by imposing additional linear constraints of a particular form on the state during the hierarchy (namely, $ \Tr_{\tilde{A}}( \rho_{A\tilde{A}(B\tilde{B})_1^n} ) =\frac{\mathbbm{1}_A}{d_A}\otimes \rho_{(B\tilde{B})_1^n}$ and $
\Tr_{\tilde{B}_n}( \rho_{A\tilde{A}(B\tilde{B})_1^n})  = \rho_{A\tilde{A}(B\tilde{B})_1^{n-1}} \otimes \frac{\mathbbm{1}_{B_{n}}}{d_B}$ at level $n$ in \ref{hhhhhhh}), leading to the de Finetti theorem found in \cite{BBFS18} and stated below. Because a de Finetti theorem incorporating these constraints can be found, the hierarchy can be shown to converge towards the desired form, i.e. towards a solution of \ref{ChannelFidelity}.

One could propose a simplified hierarchy where these constraints are modified to be local constraints, i.e. $\Tr_{\tilde{A}}( \rho_{A\tilde{A}}) =\frac{\mathbbm{1}_A}{d_A}$ and $\Tr_{\tilde{B}_n}( \rho_{B_n\tilde{B}_n})= \frac{\mathbbm{1}_{B_{n}}}{d_B}$ instead of constraints on the larger state as in \eqref{hhhhhhh}, which allows for communication between Alice and Bob. However, this would not lead to a convergent hierarchy approximating the desired form, as it would not lead to a mixture of normalized channels. In other words, not having these linear constraints would lead to a convergence towards a mixture of maps which are completely positive, but not trace preserving (in fact, maybe not even trace non-increasing). Thereby, the desired constraint does not hold for each summand in the mixture separately. For our applications, each summand must obey the desired linear constraints separately, so that each part of the mixture corresponds to a CPTP map via Choi-Jamio{\l}kowski isomorphism. In summary, what we want is separability with states that correspond to CPTP maps instead of just separability.
For details and counter-examples of theorems with local (or no) linear constraints, we refer to Examples 3.7 in \cite{BBFS18}.

The theorem which ensures convergence of \eqref{hhhhhhh} towards maximum channel fidelity is the following:

\begin{theorem}[De Finetti Theorem with Linear Constraints, see \cite{BBFS18}, Theorem 3.4] \label{LinDeFinetti}
Let $\rho_{AB_1^n}$ be a quantum state that is permutation invariant with respect to permutations of the $n$ subsystems $B_1^n$. Let $\Lambda_{A\rightarrow C_A}$ and $\Gamma_{B\rightarrow C_B}$ be linear maps, and $X_{C_A}$ and $Y_{C_B}$ be operators such that the following two linear constraints hold:
\[ \Lambda_{A\rightarrow C_A}(\rho_{AB_1^n})=X_{C_A}\otimes \rho_{B_1^n} ,\]
\[ \Gamma_{B_{n} \rightarrow C_B}(\rho_{B_1^n})=\rho_{B_1^{n-1}} \otimes Y_{C_B}.\]
Then, there exists an $m\in [0,n-1]$ and a probability distribution $\{p_Z(z_1^m)\}_{z_1^m\in Z}$ such that
\[\Big\| \rho_{AB_{m+1}} - \sum_{z_1^m} p_Z(z_1^m) \rho_{A|z_1^m} \otimes \rho_{B_{m+1}|z_1^m} \Big\|_{\tr} \leq  \epsilon(d_B,d_A,n)\]
where
\begin{equation} \label{epsilon}
 \epsilon(d_B,d_A,n):=\min\Big\{d_B^2(d_B+1), 18\sqrt{d_A d_B}\Big\} \sqrt{\frac{2\ln(2)\ln(d_A)}{n}}
\end{equation}
and \[\Lambda_{A\rightarrow C_A}(\rho_{A|z_1^m})=X_{C_A},\ \Gamma_{B_{m+1} \rightarrow C_B}(\rho_{B_{m+1}|z_1^m})= Y_{C_B}.\]
\end{theorem}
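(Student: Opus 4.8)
The plan is to prove this along the measurement-based, information-theoretic route (as in \cite{BBFS18}): I would import the unconstrained de Finetti estimate essentially as a black box and add a bookkeeping argument showing that the two linear constraints are inherited exactly by the conditional states. The starting point is to fix an informationally complete POVM $\{M^z\}_z$ on a single $B$-system and measure the first $m$ subsystems $B_1,\dots,B_m$ with $M^{z_1}\otimes\cdots\otimes M^{z_m}$, obtaining an outcome string $z_1^m$ with probability $p_Z(z_1^m)$ and post-measurement conditional marginals $\rho_{A|z_1^m}$ and $\rho_{B_{m+1}|z_1^m}$. The key structural observation is that, since $\sum_z M^z=\mathbbm{1}$ and the measured systems are ultimately traced out, forgetting the outcome returns the true marginal exactly, $\sum_{z_1^m} p_Z(z_1^m)\,\rho_{AB_{m+1}|z_1^m}=\rho_{AB_{m+1}}$. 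Consequently the trace-norm error in the statement equals the $p_Z$-average of $\big\|\rho_{AB_{m+1}|z_1^m}-\rho_{A|z_1^m}\otimes\rho_{B_{m+1}|z_1^m}\big\|_{\tr}$, i.e.\ the expected distance of each conditional state to the product of its own marginals.

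Next I would bound this averaged distance. Writing $Z_1,\dots,Z_n$ for the classical registers and using the chain rule $I(A:Z_1\cdots Z_n)=\sum_{k=1}^n I(A:Z_k\mid Z_1\cdots Z_{k-1})$ together with $I(A:Z_1\cdots Z_n)\le \log d_A$ (the classical side has no entropy cost, so only the dimension $d_A$ of $A$ enters), a pigeonhole step produces some $m\in[0,n-1]$ with $I(A:Z_{m+1}\mid Z_1\cdots Z_m)\lesssim \log d_A/n$; permutation invariance lets me take the retained index to be exactly $m+1$. A conditional Pinsker inequality then converts this small conditional mutual information into the factor $\sqrt{2\ln 2\,\ln d_A/n}$ of \eqref{epsilon}, while the dimension prefactor $\min\{d_B^2(d_B+1),\,18\sqrt{d_A d_B}\}$ arises from lifting the measured distance back to the full quantum trace norm — the first term via reconstruction from the informationally complete POVM, the second via the squashed-entanglement bound. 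This is precisely the unconstrained de Finetti estimate, so I would cite it rather than rederive it.

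The genuinely new content is showing that the conditional states carry the linear constraints exactly. Since $\Lambda_{A\rightarrow C_A}$ acts only on $A$, it commutes with the POVM and the partial traces on the $B$-systems; applying it to $\rho_{A|z_1^m}$ and invoking $\Lambda_{A\rightarrow C_A}(\rho_{AB_1^n})=X_{C_A}\otimes\rho_{B_1^n}$, the fixed factor $X_{C_A}$ survives the conditioning, giving $\Lambda_{A\rightarrow C_A}(\rho_{A|z_1^m})=X_{C_A}$ for every $z_1^m$. For $\Gamma$ I would first use permutation invariance to relocate the decoupling constraint from $B_n$ to $B_{m+1}$, so that $\Gamma_{B_{m+1}\rightarrow C_B}(\rho_{B_1^n})=\rho_{B_1^n\setminus B_{m+1}}\otimes Y_{C_B}$; because $B_{m+1}$ is disjoint from the measured block $B_1^m$ and from the traced-out block $B_{m+2}^n$, the factor $Y_{C_B}$ again pulls out of the conditioning and yields $\Gamma_{B_{m+1}\rightarrow C_B}(\rho_{B_{m+1}|z_1^m})=Y_{C_B}$ for every $z_1^m$, as required.

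I expect the main obstacle to be matching the \emph{constructive} form of the de Finetti theorem to this constraint bookkeeping: the argument only works if the states in the decomposition are genuine measurement-conditioned marginals, so I must use the measurement-based information-theoretic version rather than one (such as Theorem \ref{thm-DeFinetti}) that merely asserts the existence of an abstract product decomposition with better dimension scaling. A secondary point requiring care is tracking the dimensions in the chain rule so that the numerator is $\log d_A$ and not $\log(d_A d_B)$, and verifying that permutation invariance may legitimately be invoked twice — once to fix the retained index and once to transfer the $\Gamma$-constraint — onto the \emph{same} subsystem $B_{m+1}$.
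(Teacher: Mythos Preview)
Your proposal is correct and follows essentially the same route as the paper: the appendix proof invokes the same chain-rule/Pinsker argument (packaged as Lemma~\ref{lemma-expectationvalue}) together with the two measurement-distortion bounds (Lemmas~\ref{lemma-distortion1} and~\ref{lemma-distortion2}) to obtain $\epsilon(d_B,d_A,n)$, and then verifies the linear constraints by exactly the direct computation you sketch. Two minor corrections: the step from $\rho_{AB_{m+1}}=\mathbb{E}_{z_1^m}\rho_{AB_{m+1}|z_1^m}$ to the averaged conditional distance is an \emph{inequality} (convexity of the norm), not an equality; and regarding your final worry, the paper explicitly notes that permutation invariance is needed only \emph{once} --- to transfer the $\Gamma$-constraint from $B_n$ to $B_{m+1}$ --- since the pigeonhole step in Lemma~\ref{lemma-expectationvalue} does not require it.
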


\begin{remark}
Note that Theorem \ref{LinDeFinetti} is not equivalent to the full theorem as it appears in \cite{BBFS18}, but rather appears at an intermediate step in the proof of their main theorem. For completeness, the full proof of this modified statement is given in Appendix \ref{appendix-proofforhierarchy}.
\end{remark}

\begin{remark}
%
As noted before, the linear constraints are of a particular form, requiring $\Lambda_{A\rightarrow C_A}(\rho_{AB_1^n})=X_{C_A}\otimes \rho_{B_1^n}$ instead of a localized version $\Lambda_{A\rightarrow C_A}(\rho_{AB_1^n})=X_{C_A}$. While these two conditions are equivalent under the trace, it is important that they are not equivalent in general. Phrasing the constraint like this ensures that any correlations may only be within Alice's or within Bob's side, not between them, and it is crucial for the convergence of the hierarchy towards the desired form in \ref{ChannelFidelity}.
\end{remark}

As found in \cite{BBFS18}, using this theorem
while renaming $A\rightarrow A\tilde{A}$, $B\rightarrow B\tilde{B}$, $C_A={A}$, $\Lambda_{A\tilde{A} \rightarrow {A}} =\Tr_{\tilde{A}}$ and $X_{A}=\frac{\mathbbm{1}_A}{d_A}$, as well as $C_B={B}$, $\Gamma_{B\tilde{B} \rightarrow {B}} =\Tr_{\tilde{B}}$ and $Y_{B}=\frac{\mathbbm{1}_B}{d_B}$, it can be shown that the optimal values of the SDP relaxation in \eqref{hhhhhhh} converge to the optimal value of \eqref{ChannelFidelity} for $n\rightarrow \infty$.
The difference between the state $\rho_{A\tilde{A}B\tilde{B}}$ obeying the constraints in \eqref{hhhhhhh}, and the state $E_{A\tilde{A}} \otimes D_{B\tilde{B}}$ obeying the constraints listed in \eqref{ChannelFidelity} is essentially given by the error bound in Theorem \ref{LinDeFinetti}, which means that the proof and speed of this convergence rely exclusively on Theorem \ref{LinDeFinetti}.
\\
\\
In this project, we want to investigate the problem of studying maximum channel fidelity using stabilizer de Finetti theorems instead of the traditional de Finetti theorems with permutation invariance. Because Theorems \ref{StabDeFinetti} and \ref{StabDeFinettiQubit} describe closeness to tensor powers of stabilizer states rather than arbitrary states, this would lead to a new hierarchy containing additional constraints which ensure that the encoder, decoder or encoder \textit{and} decoder are related to Choi-matrixes of stabilizer states, which 
are Clifford operations. This does not only greatly reduce the amount of states one has to optimize over, but also makes it interesting for studying Clifford-related problems in general, in particular for studying Clifford encoders and/or decoders rather than arbitrary encoders and decoders. 

In short, we want to approximate (for example) the following fidelity:
\begin{optimize}\label{Fidelity-CliffordDecoder}
\begin{equation*} \begin{split} 
F_C(N,d_M)=\text{maximize} \ & F_s\Big(\Phi_{\tilde{B}R}, \big( ( \mathcal{D}_{B\rightarrow\tilde{B}} \circ \mathcal{N}_{\tilde{A}\rightarrow {B}} \circ \mathcal{E}_{A\rightarrow\tilde{A}})\otimes \mathbbm{1}_R \big) (\Phi_{AR})\Big)
\\
\text{subject to }& \mathcal{E}_{A\rightarrow\tilde{A}} \text{ is a quantum channel}\\
&  \mathcal{D}_{B\rightarrow\tilde{B}} \text{ is a Clifford channel}
\end{split}
\end{equation*}
\end{optimize}

We find that this is indeed possible for stochastic orthogonal symmetry at a small cost to precision by finding a stabilizer de Finetti theorem with additional linear constraints. This cost can be neglected in comparison to the overall error. From this theorem, an analogous converging hierarchy for states with stochastic orthogonal symmetry instead of permutation symmetry can be proposed for odd prime dimensions $d_B$. Using this hierarchy, we can approximate the maximum channel fidelity of an optimal (arbitrary) encoder and optimal Clifford decoder instead of the maximum channel fidelity of an optimal arbitrary encoder and decoder. In addition, a hierarchy for qubits can also be found for states with invariance under permutations plus anti-identity, following from stabilizer de Finetti theorem for qubits \ref{StabDeFinettiQubit}.
However, since both of these results are obtained very similarly, details will only be given for the case of stochastic orthogonal symmetry.

Because our results will be symmetric under exchange of Alice and Bob, note that our theorem and subsequent hierarchy can also be employed to approximate maximum channel fidelity of an optimal Clifford encoder and an optimal (arbitrary) decoder. Furthermore, the theorem can be extended to lead to a hierarchy for finding maximum channel fidelity of an optimal Clifford encoder and an optimal Clifford decoder, see Appendix \ref{appendix-bothsides}.

\begin{remark}
In Theorem \ref{LinDeFinetti}, when permuting the systems $B_1^n$, each subsystem $B_i$ has local Hilbert space dimension $d_B$, containing $d_B$-dimensional qudits. In the case of stochastic orthogonal transformations of $B_1^n$, each subsystem consists of $r$ qudits, which means that each subsystem $B_i$ has a Hilbert space dimension $d_B^r$. To effectively compare results for permutation invariance and stochastic orthogonal invariance, this difference needs to be taken into account, i.e. one has to look at $\epsilon(d_B^r,d_A,n)$. However, comparing results and bounds is not the main objective of replacing permutation invariance by stochastic orthogonal invariance - instead, this replacement is motivated by the prospect of studying Clifford decoders and encoders instead of arbitrary encoders and decoders.
\end{remark}

\section{Stabilizer de Finetti Theorem with Linear Constraints}
\label{section-stabdefinettithm}




For proposing an analogue hierarchy to approximate maximum channel fidelity with Clifford encoders or decoders, an appropriate de Finetti theorem is needed to ensure its convergence. Here, we combine Theorem \ref{LinDeFinetti} with the stabilizer de Finetti theorems \ref{StabDeFinetti} and \ref{StabDeFinettiQubit}, leading to an approximation of a state where there is separability in the cut between $A$ and $B$, and $B$ is approximately given by a convex combination of tensor powers of stabilizer states (instead of arbitrary states).

\begin{theorem}[Stabilizer de Finetti Theorem with Linear Constraints]
\label{mainthm}
Let $\rho_{AB_1^n}$ be a quantum state on the Hilbert space $\mathcal{H}_A\otimes\mathcal{H}_B^{\otimes n}$ that commutes with the action of $\mathcal{O}_n$ acting on the systems $B_1^n \equiv B_1 B_2 \cdots B_n$, each of dimension $d_B^r$. Let $d_B$ be an odd prime. Let $\Lambda_{A\rightarrow C_A}$ and $\Gamma_{B\rightarrow C_B}$ be linear maps, and $X_{C_A}$ and $Y_{C_B}$ be operators such that the following two linear constraints hold:
\[ \Lambda_{A\rightarrow C_A}(\rho_{AB_1^n})=X_{C_A}\otimes \rho_{B_1^n}, \]
\[ \Gamma_{B_n \rightarrow C_B}(\rho_{B_1^n})=\rho_{B_1^{n-1}} \otimes Y_{C_B}. \]
Then, there exists a probability distribution $\{p_Z(z)\}_{z\in Z}$ and a probability distribution $\{p_S(\sigma_B)\}$ over the set of mixed stabilizer states on $r$ qudits such that
 \[ \Big\|\rho_{AB}-\sum_{z,\sigma_B} p_Z(z) p_{S}(\sigma_B)\rho_{A|z} \otimes  \sigma_B \Big\|_{\tr}\leq \epsilon(d_B^r,d_A,n)+\bar{\epsilon}(d_B,r,n)\]
 where $\sigma_B$ are the mixed stabilizer states of $r$ qudits on $\mathcal{H}_B={(\mathbbm{C}^{d_B})}^{\otimes r}$, $\epsilon(d_B^r,d_A,n)$ defined in \eqref{epsilon}, 
 \begin{equation}\label{epsilon2}
 \bar{\epsilon}(d_B,r,n):=2d_B^{2(r+1)^2}d_B^{-\frac{1}{2} (n-1)}
 \end{equation}
and

\[\Lambda_{A\rightarrow C_A}(\rho_{A|z})=X_{C_A},\ \Gamma_{B_{n} \rightarrow C_B}(\rho_{B|x})= Y_{C_B},\]

\[
\Big\| \sum_{z} p_Z(z)  \big(\rho_{B|z}- \sum_{\sigma_B} p_S(\sigma_B) \sigma_{B}\big) \Big\|_{\tr} \leq  \bar{\epsilon}(d_B,r,n).
\]
\end{theorem}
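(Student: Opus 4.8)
The plan is to combine the two de Finetti bounds that are already available: Theorem \ref{LinDeFinetti}, which produces a separable approximation of $\rho_{AB}$ respecting the two linear constraints, and Theorem \ref{StabDeFinetti}, which produces the stabilizer structure on the $B$-marginal. A triangle inequality then stitches the two errors together into $\epsilon(d_B^r,d_A,n)+\bar{\epsilon}(d_B,r,n)$, with the first summand coming from the constraint-de Finetti step and the second from the stabilizer step at $k=1$.

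First I would note that $\mathcal{O}_n$ contains the full permutation group $\mathcal{P}_n$ acting on $B_1^n$, so $\rho_{AB_1^n}$ is in particular permutation invariant on $B_1^n$, and the two hypotheses $\Lambda_{A\rightarrow C_A}(\rho_{AB_1^n})=X_{C_A}\otimes\rho_{B_1^n}$ and $\Gamma_{B_n\rightarrow C_B}(\rho_{B_1^n})=\rho_{B_1^{n-1}}\otimes Y_{C_B}$ are exactly the hypotheses of Theorem \ref{LinDeFinetti}. Applying it, with the local $B$-dimension taken to be $d_B^r$ since each $B_i$ is an $r$-qudit block, yields an index $m$, a distribution $\{p_Z(z)\}$, and conditional states $\rho_{A|z},\rho_{B|z}$ (writing $z$ for $z_1^m$ and $B$ for $B_{m+1}$) with $\big\|\rho_{AB}-\sum_z p_Z(z)\rho_{A|z}\otimes\rho_{B|z}\big\|_{\tr}\le\epsilon(d_B^r,d_A,n)$, together with the marginal constraints $\Lambda_{A\rightarrow C_A}(\rho_{A|z})=X_{C_A}$ and $\Gamma_{B_n\rightarrow C_B}(\rho_{B|z})=Y_{C_B}$. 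I would also read off from the measurement-based construction underlying that theorem that the decomposition reproduces the correct $B$-marginal, i.e.\ $\sum_z p_Z(z)\rho_{B|z}=\rho_B$.

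Next, since the marginal $\rho_{B_1^n}$ inherits commutation with $\mathcal{O}_n$, I would feed it into Theorem \ref{StabDeFinetti} at $k=1$. With $n-k=n-1$ its bound is precisely $2d_B^{2(r+1)^2}d_B^{-\frac{1}{2}(n-1)}=\bar{\epsilon}(d_B,r,n)$, producing a distribution $p_S$ over mixed $r$-qudit stabilizer states with $\big\|\rho_B-\bar{\sigma}_B\big\|_{\tr}\le\bar{\epsilon}$, where $\bar{\sigma}_B:=\sum_{\sigma_B}p_S(\sigma_B)\sigma_B$. Combined with $\sum_z p_Z(z)\rho_{B|z}=\rho_B$, this is exactly the last asserted inequality $\big\|\sum_z p_Z(z)(\rho_{B|z}-\bar{\sigma}_B)\big\|_{\tr}\le\bar{\epsilon}$.

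Finally, for the main estimate I would insert the separable state and apply the triangle inequality, bounding the first difference by $\epsilon(d_B^r,d_A,n)$ and the remaining term $\big\|\sum_z p_Z(z)\,\rho_{A|z}\otimes(\rho_{B|z}-\bar{\sigma}_B)\big\|_{\tr}$ by $\bar{\epsilon}$. The hard part is exactly here: Theorem \ref{StabDeFinetti} controls the stabilizer approximation only for the \emph{averaged} marginal $\rho_B$, whereas the decomposition from Theorem \ref{LinDeFinetti} carries a $z$-dependence on the $B$-side, and a naive triangle inequality would only return the averaged conditional distance $\sum_z p_Z(z)\|\rho_{B|z}-\bar{\sigma}_B\|_{\tr}$, which can exceed $\bar{\epsilon}$. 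To overcome this I would either (a) upgrade Theorem \ref{StabDeFinetti} to a reference-system version, using that $\rho_{AB_1^n}$ commutes with $\mathbbm{1}_A\otimes\mathcal{O}_n$ so that the stabilizer decomposition emerges already attached to the $A$-conditioning, or (b) keep the stabilizer part correlated with $z$ and perform the replacement at the level of the $z$-averaged $B$-state, which is precisely what the stated inequality $\big\|\sum_z p_Z(z)(\rho_{B|z}-\bar{\sigma}_B)\big\|_{\tr}\le\bar{\epsilon}$ encodes. Route (a) is the cleaner one, and verifying that the stabilizer replacement really costs only $\bar{\epsilon}$ rather than the larger averaged conditional distance is the step I expect to require the most care.
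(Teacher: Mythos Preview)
Your outline matches the paper's: apply Theorem \ref{LinDeFinetti} for the separable decomposition with constraints, Theorem \ref{StabDeFinetti} for the stabilizer structure on $B$, and stitch the two together with the triangle inequality. You have also correctly located the only nontrivial step, namely bounding $\big\|\sum_z p_Z(z)\,\rho_{A|z}\otimes(\rho_{B|z}-\bar{\sigma}_B)\big\|_{\tr}$ by $\bar{\epsilon}$ rather than by the (possibly larger) averaged conditional distance $\sum_z p_Z(z)\|\rho_{B|z}-\bar\sigma_B\|_{\tr}$.

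The paper resolves this via your route (a), and the concrete mechanism is worth recording because it is short once seen. Do \emph{not} apply Theorem \ref{StabDeFinetti} at $k=1$; apply it at level $m+1$, with $m$ the index coming from Theorem \ref{LinDeFinetti}, so that $\big\|\rho_{B_1^{m+1}}-\sum_\sigma p_S(\sigma)\,\sigma^{\otimes(m+1)}\big\|_{\tr}\le 2d_B^{2(r+1)^2}d_B^{-\frac12(n-(m+1))}$. Now define a CPTP map $\mathcal{M}$ on $B_1^{m+1}$ that (i) measures $B_1^m$ with the \emph{same} POVM $\{\Pi_{z}\}$ underlying the construction in Theorem \ref{LinDeFinetti}, (ii) discards those registers, and (iii) prepares the fixed state $\rho_{A|z}$ on a fresh $A$-system according to the outcome; this measure-and-prepare channel is Observation \ref{CPTPmapLemma}. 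By construction $\mathcal{M}(\rho_{B_1^{m+1}})=\sum_z p_Z(z)\,\rho_{A|z}\otimes\rho_{B_{m+1}|z}$, and monotonicity of the trace norm under $\mathcal{M}$ converts the stabilizer bound directly into a bound on the full $z$-correlated difference, never passing through $\sum_z p_Z(z)\|\rho_{B|z}-\bar\sigma_B\|_{\tr}$. The last displayed inequality of the theorem is obtained the same way with the simpler CPTP map that measures and traces out but does not prepare (Observation \ref{CPTPmapLemma2}). The paper then argues, via permutation invariance of the $B$-systems, that $m=0$ may be taken, whence the stabilizer error collapses to $\bar{\epsilon}(d_B,r,n)$.
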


This theorem shows that a state that is partially invariant under the stochastic orthogonal group introduced in \eqref{newsym-stochastic} is approximately separable and close to a convex combination of stabilizer states on one side. 
It is a direct combination of two de Finetti like statements: 
On the one hand, it makes use of the de Finetti theorem with additional linear constraints, Theorem \ref{LinDeFinetti}; on the other hand, it relies on the stabilizer de Finetti theorem, Theorem \ref{StabDeFinetti}. The theorem with linear constraints ensures separability in the cut between Alice's and Bob's subsystem, while the stabilizer de Finetti theorem ensures closeness to stabilizer states on Bob's side.


In Theorem \ref{LinDeFinetti}, the marginal of a permutation invariant state is approximated by a separable state where each part satisfies the linear constraints directly. It is important to note that this is not entirely analogous for stochastic orthogonal invariance: while Bob's side is now approximated by stabilizer states, these stabilizer states do not precisely satisfy the linear constraints. Instead, they satisfy them approximately, with an error corresponding to the bound of stabilizer de Finetti theorem. Notably, this error is much smaller than the overall error for approximating separable and partly stabilizer states.

To relate the two theorems to each other, consider the following definition:

\begin{definition}[Post-Measurement Quantum State] \label{rhoAz}
For positive operator valued measures (POVMs) $\{ \Pi_z \}$ mapping from $C$ to classical system $Z$ 
and a state $\omega_{AC}$ with classical system $Z$, the state after measurement of $z$ is given by:
\[\omega_{A|z}=\dfrac{\tr_Z \Big( (\mathbbm{1}_A\otimes \Pi_z )  \omega_{AC} \Big)}{\Tr_{AZ} \Big( (\mathbbm{1}_A\otimes \Pi_z ) \omega_{AC}\Big)} .\]
\end{definition}

With this definition, the two concepts - Theorem \ref{LinDeFinetti} and Theorem \ref{StabDeFinetti} - can be combined via the following observations, which will be integral to the proof of Theorem \ref{mainthm}:

\begin{observation} \label{CPTPmapLemma}
For all POVMs $\{ \Pi_z \}$ mapping from $C$ to classical system $Z$ and quantum states $\omega_{A|z}$, there exists a CPTP map $\mathcal{M}$ that acts as follows on any quantum state $\rho_{BC}$:
\[\mathcal{M}: \rho_{BC}\mapsto \sum_z \omega_{A|z} \otimes \Tr_C (\Pi_z \rho_{BC}) =\sum_z p_Z(z) \omega_{A|z} \otimes \rho_{B|z} 
\]
with a probability distribution $p_Z(z)$.
\end{observation}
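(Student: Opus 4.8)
The plan is to recognise $\mathcal{M}$ as a \emph{measure-and-prepare} operation on the $C$-register and to exhibit it explicitly as a composition of two manifestly completely positive trace-preserving (CPTP) maps: an instrument that measures $C$ with the POVM $\{\Pi_z\}$ while leaving $B$ untouched, followed by a classically controlled preparation of $\omega_{A|z}$ on a fresh system $A$. Since a composition of CPTP maps is again CPTP, this establishes existence immediately, and it then only remains to verify that the composite computes the advertised output.

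First I would introduce a classical register $Z$ and define the instrument
\[
\mathcal{T}(\rho_{BC})=\sum_z \Tr_C\!\big((\mathbbm{1}_B\otimes\Pi_z)\,\rho_{BC}\big)\otimes\ket{z}\!\bra{z}_Z .
\]
To certify that $\mathcal{T}$ is CPTP I would write each effect as $\Pi_z=\sqrt{\Pi_z}\,\sqrt{\Pi_z}$, fix an orthonormal basis $\{\ket{c}\}$ of $\mathcal{H}_C$, and take as Kraus operators $K_{z,c}=\ket{z}_Z\otimes\mathbbm{1}_B\otimes(\bra{c}\sqrt{\Pi_z})$. A one-line computation gives completeness, $\sum_{z,c}K_{z,c}^{\dagger}K_{z,c}=\mathbbm{1}_B\otimes\sum_z\Pi_z=\mathbbm{1}_{BC}$, and that these Kraus operators reproduce $\mathcal{T}$ relies on the elementary identity $\Tr_C\!\big((\mathbbm{1}_B\otimes\Pi_z)\rho_{BC}\big)=\Tr_C\!\big((\mathbbm{1}_B\otimes\sqrt{\Pi_z})\rho_{BC}(\mathbbm{1}_B\otimes\sqrt{\Pi_z})\big)$, which holds because $\Pi_z$ is supported entirely on the factor being traced out.

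Next I would compose $\mathcal{T}$ with the classically controlled preparation channel $\mathcal{P}\big(\ket{z}\!\bra{z'}_Z\otimes\xi_B\big)=\delta_{zz'}\,\omega_{A|z}\otimes\xi_B$, which is a standard prepare-on-classical-input CPTP map (only the diagonal terms of $Z$ survive, since $\mathcal{T}$ has already dephased the register). Setting $\mathcal{M}=\mathcal{P}\circ\mathcal{T}$ then yields a CPTP map whose output is $\sum_z\omega_{A|z}\otimes\Tr_C((\mathbbm{1}_B\otimes\Pi_z)\rho_{BC})$. Finally I would set $p_Z(z):=\Tr\!\big((\mathbbm{1}_B\otimes\Pi_z)\rho_{BC}\big)$, note that it is a genuine probability distribution (nonnegative, summing to $\Tr\rho_{BC}=1$), and invoke Definition \ref{rhoAz} to identify $\Tr_C((\mathbbm{1}_B\otimes\Pi_z)\rho_{BC})=p_Z(z)\,\rho_{B|z}$, which rewrites the output in the stated form $\sum_z p_Z(z)\,\omega_{A|z}\otimes\rho_{B|z}$.

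As everything is assembled from textbook channels, there is no substantive obstacle; the only point that repays a moment's care is the Kraus realisation of the ``measure $C$, keep $B$, discard $C$'' instrument, namely the cyclic-type identity for an operator supported on the traced-out factor and the automatic dephasing of $Z$ that makes the controlled preparation $\mathcal{P}$ well defined.
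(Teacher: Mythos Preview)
Your proof is correct and follows essentially the same measure-and-prepare decomposition as the paper. The paper splits $\mathcal{M}$ into three CPTP maps (measure $C$, then controlled-prepare $\omega_{A|z}$ while keeping the classical register, then trace out $Z$), whereas you fuse the last two into a single controlled-prepare $\mathcal{P}$ and, in exchange, supply explicit Kraus operators for the instrument $\mathcal{T}$ that the paper merely asserts is CPTP; the substance is identical.
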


\begin{proof}[Proof of Observation \ref{CPTPmapLemma}]
The map $\mathcal{M}$ can be interpreted as a composition of three maps:
\[\mathcal{M}=\mathcal{M}_3 \circ \mathcal{M}_2 \circ \mathcal{M}_1 \]
The first map $\mathcal{M}_1$ acts as a measurement of the POVMs $\{ \Pi_z \}$.
Applying it to a quantum state $\rho_{BC}$ yields
\[ \mathcal{M}_1: \rho_{BC}\mapsto \sum_z \Tr_C (\Pi_z \rho_{BC})\otimes \ket{z}\bra{z}_Z  .\]
Clearly, this map is trace preserving and completely positive.
The second map recovers a specific purification of the state, introducing system $A$:
\[ \mathcal{M}_2\circ \mathcal{M}_1: \rho_{BC}\mapsto \sum_z \omega_{A|z} \otimes \Tr_C (\mathbbm{1}_B \otimes \Pi_z \rho_{BC})\otimes \ket{z}\bra{z}_Z 
\]
This map is completely positive because $\omega_{A|z}$, as defined in Definition \ref{rhoAz}, is positive semidefinite - and trace preserving because $\omega_{A|z}$ has unit trace.
Lastly, the final map $\mathcal{M}_3$ corresponds to taking the trace over the classical system $Z$:
\[ \mathcal{M}_3\circ\mathcal{M}_2\circ\mathcal{M}_1: \rho_{BC}\mapsto \sum_z \omega_A \otimes \Tr_C (\Pi_z \rho_{BC}) 
 \]
Clearly, taking the trace over $Z$ is a completely positive and trace preserving operation, making the composed map $\mathcal{M}_3 \circ \mathcal{M}_2 \circ \mathcal{M}_1 =\mathcal{M}$ a CPTP map.
\end{proof}

\begin{observation} \label{CPTPmapLemma2}
For all POVMs $\{ \Pi_z \}$ mapping from $C$ to classical system $Z$, there exists a CPTP map ${\mathcal{M}'}$ that acts as follows on any quantum state $\rho_{BC}$:
\[\mathcal{M}': \rho_{BC}\mapsto \sum_z \Tr_C (\Pi_z \rho_{BC}) =\sum_z p_Z(z)  \rho_{B|z} 
\]
with a probability distribution $p_Z(z)$.
\end{observation}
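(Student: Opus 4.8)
The plan is to recognize that Observation \ref{CPTPmapLemma2} is exactly the simplification of Observation \ref{CPTPmapLemma} in which the purification step is dropped: we retain the measurement and the partial trace over the classical register, but never introduce the ancillary system $A$. Concretely, I would write $\mathcal{M}'$ as a composition $\mathcal{M}' = \mathcal{M}_2' \circ \mathcal{M}_1'$ of two maps that are each manifestly CPTP, mirroring the three-map decomposition used to prove Observation \ref{CPTPmapLemma} but with the middle map omitted.

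First I would define the measurement channel $\mathcal{M}_1'$ acting as $\mathcal{M}_1': \rho_{BC} \mapsto \sum_z \Tr_C(\Pi_z \rho_{BC}) \otimes \ket{z}\bra{z}_Z$, which stores each outcome in an orthonormal register $Z$ while keeping the $B$ system intact. To see that $\mathcal{M}_1'$ is completely positive, I would write each POVM element as $\Pi_z = \sqrt{\Pi_z}^{\dagger}\sqrt{\Pi_z}$ and use cyclicity inside the partial trace over $C$ to rewrite $\Tr_C\big((\mathbbm{1}_B \otimes \Pi_z)\rho_{BC}\big) = \Tr_C\big((\mathbbm{1}_B \otimes \sqrt{\Pi_z})\rho_{BC}(\mathbbm{1}_B \otimes \sqrt{\Pi_z})^{\dagger}\big)$, which is in Kraus form; tagging each summand with the orthogonal projectors $\ket{z}\bra{z}_Z$ and summing over $z$ preserves complete positivity. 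Trace preservation then follows from POVM completeness, since $\sum_z \Tr\big(\Tr_C(\Pi_z \rho_{BC})\big) = \Tr\big((\mathbbm{1}_B \otimes \sum_z \Pi_z)\rho_{BC}\big) = \Tr(\rho_{BC})$.

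Next I would take $\mathcal{M}_2'$ to be the partial trace $\Tr_Z$, which is CPTP, so that $\mathcal{M}' = \Tr_Z \circ \mathcal{M}_1'$ sends $\rho_{BC}$ to $\sum_z \Tr_C(\Pi_z \rho_{BC})$, the claimed output. Finally I would identify $p_Z(z) := \Tr\big(\Tr_C(\Pi_z \rho_{BC})\big)$ and $\rho_{B|z} := \Tr_C(\Pi_z \rho_{BC})/p_Z(z)$ (the normalized post-measurement state of Definition \ref{rhoAz} with $C$ in the role of the measured system), so that $\sum_z \Tr_C(\Pi_z \rho_{BC}) = \sum_z p_Z(z)\rho_{B|z}$; the computation above shows $\sum_z p_Z(z) = 1$, confirming that $p_Z$ is a genuine probability distribution.

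There is no substantive obstacle here: the statement is the $A$-free restriction of the already-proven Observation \ref{CPTPmapLemma}, and every step reduces to a routine verification of complete positivity and trace preservation. The only point requiring care is confirming that $p_Z$ normalizes to one, which is where POVM completeness $\sum_z \Pi_z = \mathbbm{1}$ enters and is precisely what makes $\mathcal{M}_1'$ trace preserving rather than merely trace non-increasing.
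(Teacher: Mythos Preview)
Your proposal is correct and follows essentially the same approach as the paper: the paper writes $\mathcal{M}' = \mathcal{M}_3 \circ \mathcal{M}_1$ using the maps from the proof of Observation~\ref{CPTPmapLemma}, and your $\mathcal{M}_1'$ and $\mathcal{M}_2'$ are precisely those maps (measurement followed by partial trace over $Z$). You supply more detail on complete positivity and the identification of $p_Z$, but the structure is identical.
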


\begin{proof}[Proof of Observation \ref{CPTPmapLemma2}]
The map $\mathcal{M}'$ can be regarded as a composition of two of the maps appearing in the proof of \ref{CPTPmapLemma}. Using the notation from there, it can be rewritten as
\[\mathcal{M}'= \mathcal{M}_3 \circ \mathcal{M}_1 .\]
As shown in the proof of \ref{CPTPmapLemma}, the maps $\mathcal{M}_3$ and $\mathcal{M}_1$ are CPTP, and thus their composition is also CPTP.
\end{proof}

Using Observations \ref{CPTPmapLemma} and \ref{CPTPmapLemma2}, Theorem \ref{mainthm} can be proven. This theorem admits two interpretations: it can be regarded as a stabilizer de Finetti theorem with additional linear constraints - or as a de Finetti theorem with linear constraints with the additional constraint that some parts should be a stabilizer state.

\begin{proof}[Proof of Theorem \ref{mainthm}]
The proof follows the following general outline: Starting from the stabilizer de Finetti theorem in \ref{StabDeFinetti} on Bob's side, we find a measurement and a purification that transform the theorem into a statement about the closeness of separable states and states with stabilizer tensor powers on Bob's side. Then, the de Finetti theorem with linear constraints in \ref{LinDeFinetti} can be used to connect separable states with the full state $\rho_{AB_1^n}$ via triangle inequality, which leads to a theorem bounding the closeness of the full state and states with stabilizer tensor powers on Bob's side.

Since $\rho_{AB_1^n}$ is invariant under the stochastic orthogonal group with representation $O_n$ acting on $B_1^n$, clearly $\tr_A(\rho_{AB_1^n})=\rho_{B_1^n}$ is invariant under $O_n$ acting on $B_1^n$.
Because of this, the stabilizer de Finetti theorem (Theorem \ref{StabDeFinetti}) can be applied. Therefore, for any $k\leq n-1$, we find:
\begin{equation} \label{proof-step1}
\Big\|\rho_{B_1^{k+1}} - \sum_{\sigma_B} p_S(\sigma_B) \sigma_B^{\otimes {k+1}}\Big\|_{\tr} \leq 2d_B^{2(r+1)^2}d_B^{-\frac{1}{2} (n-(k+1))} 
\end{equation}
where $\sigma_B$ is a mixed stabilizer state on $\mathcal{H}_B$. 
Using Observation \ref{CPTPmapLemma}, this can be connected to the de Finetti theorem for permutation invariant states with linear constraints, Theorem \ref{LinDeFinetti}. Choosing
the systems $A\rightarrow A$, $B \rightarrow B_1^k$ and $C \rightarrow B_{k+1}$, a bitstring $z\rightarrow z_1^k$ and \[ \rho_{A|z_1^k}=\frac{\Tr_{B_1^{n}} (\mathbbm{1}_A\otimes \{ \Pi_{z_1^k} \otimes \mathbbm{1}_{B_{k+1}^n} \rho_{AB_1^n})}{\Tr_{AB_1^{n}} (\mathbbm{1}_A\otimes  \{ \Pi_{z_1^k} \otimes \mathbbm{1}_{B_{k+1}^n} \rho_{AB_1^n})}\]
according to Definition \ref{rhoAz} with 
POVMs $\{ \Pi_{z_1^k}\}$, we find a CPTP map $\mathcal{M}$ that acts in the following way:
\[\mathcal{M}: X_{B_1^{k+1}} \rightarrow \sum_{z_1^k} \rho_{A|z_1^k} \otimes \Tr_{B_1^k} (\Pi_{z_1^k} \otimes \mathbbm{1}_{B_{k+1}} X_{B_1^{k+1}}) \]
Applying this to the two states of interest in the distance relation above yields
\begin{equation*} \begin{split} \label{proof-step2-1}
\mathcal{M} (\rho_{B_1^{k+1}}) &=\sum_{z_1^k} \rho_{A|z_1^k} \otimes \Tr_{B_1^k} ( \Pi_{z_1^k} \otimes \mathbbm{1}_{B_{k+1}} \rho_{B_1^{k+1}})\\
&= \sum_{z_1^k} p_Z(z_1^k) \rho_{A|z_1^k} \otimes \rho_{B_{k+1}|z_1^k}
\end{split} \end{equation*}
and
\begin{equation*} \begin{split} \label{proof-step2-2}
\mathcal{M}( \sigma^{\otimes (k+1)}) &=\sum_{z_1^k} \rho_{A|z_1^k} \otimes \Tr_{B_1^k} ( \Pi_{z_1^k} \otimes \mathbbm{1}_{B_{k+1}} \sigma^{\otimes (k+1)})\\
&= \sum_{z_1^k} p_Z(z_1^k) \rho_{A|z_1^k} \otimes \sigma_{B_{k+1}}.
\end{split} \end{equation*}

Inserting this into the trace distance relation \eqref{proof-step1}, and using the fact that $\mathcal{M}$ is trace-nonincreasing, yields

\smallskip\noindent
\begin{equation} \begin{split} \label{proof-step-3}
 2d_B^{2(r+1)^2}d_B^{-\frac{1}{2} (n-(k+1))} &\geq \Big\|\rho_{B_1^{k+1}} - \sum_{\sigma_B} p_S(\sigma_B) \sigma_{B_{k+1}} \Big\|_{\tr}\\
& \geq \Big\| \mathcal{M} \big(\rho_{B_1^{k+1}} - \sum_{\sigma_B} p_S(\sigma_B)  \sigma_{B_{k+1}} \big) \Big\|_{\tr}\\
&=\Big\| \sum_{z_1^k} p_Z(z_1^k) \rho_{A|z_1^k} \otimes \big(\rho_{B_{k+1}|z_1^k}- \sum_{\sigma_B} p_S(\sigma_B) \sigma_{B_{k+1}}\big) \Big\|_{\tr}.
 \end{split}
\end{equation}

In total, we find a statement about separable states which are obtained from a state with invariance under $\mathcal{O}_n$ being close to a mixture of stabilizer states:
\begin{equation} \label{proof-step-4}
 \Big\| \sum_{{z}_1^k}p_Z(z_1^k) \rho_{A|{z}_1^k} \otimes  \big(\rho_{B_{k+1}|z_1^k}   -  \sum_{\sigma} p_S(\sigma) \sigma_{B_{k+1}}\big)\Big\|_{\tr}  \leq 2d_B^{2(r+1)^2}d_B^{-\frac{1}{2} (n-(k+1))}
\end{equation}
The de Finetti theorem with linear constraints (Theorem \ref{LinDeFinetti}) relates a state's closeness to such a separable state. It has to be noted that permutations are a subgroup of the stochastic orthogonal group, which means that $\rho_{B_1^n}$ is also invariant with respect to permutations of the $B_1^n$ subsystems (but with dimension $d_B^r$, since each subsystem consists of $r$ qudits). Therefore, given the linear constraints listed in the theorem, there exists an $m\in [0,n-1]$, such that
\[\Big\| \rho_{AB_{m+1}} - \sum_{z_1^m} p_Z(z_1^m) \rho_{A|z_1^m} \otimes \rho_{B_{m+1}|z_1^m} \Big\|_{\tr} \leq \min\Big\{d_B^{2r}(d_B^r+1), 18\sqrt{d_A d_B^r}\Big\} \sqrt{\frac{2\ln(2)\ln(d_A)}{n}} =\epsilon(d_B^r,d_A,n) \]
with \[\Lambda_{A\rightarrow C_A}(\rho_{A|z_1^m})=X_{C_A},\ \Gamma_{B_{m+1} \rightarrow C_B}(\rho_{B_{m+1}|z_1^m})= Y_{C_B}.\]

Choosing $k=m$ for the stabilizer de Finetti theorem and combining the two statements \eqref{proof-step-3} and \eqref{proof-step-4} yields the following relation, which can be bound using the triangle inequality:
\smallskip\noindent
\begin{align*}
&\Big\|  \rho_{AB_{m+1}} - \sum_{z_1^m} p_Z(z_1^m) \rho_{A|z_1^m} \otimes \sum_{\sigma} p_S(\sigma) \sigma_{B_{m+1}}) \Big\|_{\tr} \\&
=
\Big\|  \rho_{AB_{m+1}} - \sum_{z_1^m} p_Z(z_1^m) \rho_{A|z_1^m} \otimes \rho_{B_{m+1}|z_1^m} + \sum_{{z}_1^m}p_Z(z_1^m) \rho_{A|{z}_1^m} \otimes  \big(\rho_{B_{m+1}|z_1^m}   -   \sum_{\sigma} p_S(\sigma) \sigma_{B_{m+1}}\big) \Big\|_{\tr} \\&
 \leq \Big\|  \rho_{AB_{m+1}} - \sum_{z_1^m} p_Z(z_1^m) \rho_{A|z_1^m} \otimes \rho_{B_{m+1}|z_1^m}\Big\|_{\tr}+\Big\| \sum_{{z}_1^m}p_Z(z_1^m) \rho_{A|{z}_1^m} \otimes  \big(\rho_{B_{m+1}|z_1^m}   -  \sum_{\sigma} p_S(\sigma) \sigma_{B_{m+1}}\big)  \Big\|_{\tr} 
\\&
\leq \epsilon(d_B^r,d_A,n) + 2d_B^{2(r+1)^2}d_B^{-\frac{1}{2} (n-{(m+1)})}
\end{align*}
In total, we find that there exists an $m\in[0,n-1]$ such that
\begin{equation*} \begin{split}
&\Big\|  \rho_{AB_{m+1}} - \sum_{z_1^m} p_Z(z_1^m) \rho_{A|z_1^m} \otimes \sum_{\sigma} p_S(\sigma) \sigma_{B_{m+1}}) \Big\|_{\tr}  \\
&\leq \epsilon(d_B^r,d_A,n) + 2d_B^{2(r+1)^2}d_B^{-\frac{1}{2} (n-{(m+1)})} .
\end{split} \end{equation*}

Note that permutations are a subgroup of the stochastic orthogonal group, which means that $\rho_{AB_1^n}$ is also invariant with respect to permutations of the $B_1^n$ subsystems. Because of this, all t subsystems $B_1^n$ are separately equal to some subsystem $B$ - meaning $ B_1^n=B^{\otimes n}$. Therefore, $m$ can be chosen freely out of $[0,n-1]$, including the best possible case in terms of the bound, $m=0$, which leads to the desired statement in Theorem \ref{mainthm}.

The additional resulting statements follow directly from Theorem \ref{LinDeFinetti}, and from combining \eqref{proof-step1} and the fact that there exists a trace preserving map $\mathcal{M}'$ (see Observation \ref{CPTPmapLemma2})
transforming a state according to 
\[ \mathcal{M}': X_{B_1^{k+1}}\mapsto \sum_{z_1^k} \Tr_{Z_1^k} (\Pi_{z_1^k} X_{B_1^{k+1}}) . \]
Applied to the two states in \eqref{proof-step1}, we obtain
\begin{equation*} \begin{split}
 \mathcal{M}'(\rho_{B_1^{k+1}}) &=  \sum_{z_1^k} \Tr_{Z_1^k} (\Pi_{z_1^k} \rho_{B_1^{k+1}})
\\&
= \sum_{z_1^k} p_Z(z_1^k) \rho_{B_{k+1}|z_1^k}
\end{split} \end{equation*}
and
\begin{equation*} \begin{split} 
\mathcal{M}'(\sigma^{\otimes (k+1)}) &=  \sum_{z_1^k} \Tr_{Z_1^k} (\Pi_{z_1^k} \sigma^{\otimes (k+1)})
\\&
= \sum_{z_1^k} p_Z(z_1^k) \sigma_{B_{k+1}}.
\end{split} \end{equation*}
Inserting this into \eqref{proof-step1}, and using the fact that $\mathcal{M}'$ is CPTP, yields

\smallskip\noindent
\begin{equation} \begin{split} \label{proof-steplast}
 2d_B^{2(r+1)^2}d_B^{-\frac{1}{2} (n-(k+1))} &\geq \Big\|\rho_{B_1^{k+1}} - \sum_{\sigma_B} p_S(\sigma_B) \sigma_{B_{k+1}} \Big\|_{\tr}\\
& \geq \Big\| \mathcal{M}' \big(\rho_{B_1^{k+1}} - \sum_{\sigma_B} p_S(\sigma_B)  \sigma_{B_{k+1}} \big) \Big\|_{\tr}\\
&=\Big\| \sum_{z_1^k} p_Z(z_1^k)  \big(\rho_{B_{k+1}|z_1^k}- \sum_{\sigma_B} p_S(\sigma_B) \sigma_{B_{k+1}}\big) \Big\|_{\tr}.
 \end{split}
\end{equation}

\end{proof}


In analogy to the stabilizer de Finetti theorem, this theorem is restricted to the cases where $d_B$ is an odd prime. However, one can also consider states that are not invariant under the whole stochastic orthogonal group, but just permutations and anti-identity, which leads to the de Finetti theorem for qubits (Theorem \ref{StabDeFinettiQubit}). This can be used to obtain the following alternative stabilizer de Finetti theorem with additional linear constraints for qubits:

\begin{theorem}[Stabilizer de Finetti Theorem for Qubits with Linear Constraints]
\label{mainthm-qubit}
Let $\rho_{AB_1^n}$ be a quantum state on the Hilbert space $\mathcal{H}_A\otimes\mathcal{H}_B^{\otimes n}$ that commutes with the action of all permutations and the anti-identity on $B_1^n=B_1 B_2 \cdots B_n$. Let $\Lambda_{A\rightarrow C_A}$ and $\Gamma_{B\rightarrow C_B}$ be linear maps, and $X_{C_A}$ and $Y_{C_B}$ be operators such that the following two linear constraints hold:
\[ \Lambda_{A\rightarrow C_A}(\rho_{AB_1^n})=X_{C_A}\otimes \rho_{B_1^n} , \]
\[ \Gamma_{B_n \rightarrow C_B}(\rho_{B_1^n})=\rho_{B_1^{n-1}} \otimes Y_{C_B}. \]
Then, there exists a probability distribution $\{p_Z(z)\}_{z\in Z}$ and a probability distribution $\{p_S(\sigma_B)\}$ over the set of mixed stabilizer states on $r$ qubits such that
 \[ \Big\|\rho_{AB}-\sum_{z,\sigma_B} p_Z(z) p_{S}(\sigma_B)\rho_{A|z} \otimes  \sigma_B \Big\|_{\tr}\leq \epsilon(d_B^r,d_A,n)+\tilde{\epsilon}(d_B,r,n)\]
 where $\sigma_B$ are the mixed stabilizer states of $r$ qudits on $\mathcal{H}_B={(\mathbbm{C}^{d_B})}^{\otimes r}$, $\epsilon(d_B^r,d_A,n)$ defined in \eqref{epsilon}, 
 \begin{equation}\label{epsilon3}
 \tilde{\epsilon}(d_B,r,n):= 6 \sqrt{2}\ 2^r \sqrt{\frac{1}{n}}
 \end{equation}
and
\[\Lambda_{A\rightarrow C_A}(\rho_{A|z})=X_{C_A},\ \Gamma_{B_{n} \rightarrow C_B}(\rho_{B|x})= Y_{C_B},\]

\[
\Big\| \sum_{z} p_Z(z)  \big(\rho_{B|z}- \sum_{\sigma_B} p_S(\sigma_B) \sigma_{B}\big) \Big\|_{\tr} \leq \tilde{\epsilon}(d_B,r,n).
\]
\end{theorem}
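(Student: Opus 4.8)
The plan is to follow the proof of Theorem \ref{mainthm} essentially line for line, replacing every use of the odd-prime stabilizer de Finetti theorem (Theorem \ref{StabDeFinetti}) by its qubit counterpart (Theorem \ref{StabDeFinettiQubit}) and propagating the change of the error term from $\bar\epsilon(d_B,r,n)$ to $\tilde\epsilon(d_B,r,n)$. First I would check that the hypotheses feed the two engines of the argument correctly: since $\rho_{AB_1^n}$ commutes with all permutations of $B_1^n$ and with the anti-identity, the marginal $\rho_{B_1^n}=\tr_A\rho_{AB_1^n}$ inherits both symmetries, so Theorem \ref{StabDeFinettiQubit} applies to it; and since the full permutation group is a subgroup of the symmetry group in the hypothesis, Theorem \ref{LinDeFinetti} applies to $\rho_{AB_1^n}$ with each block $B_i$ treated as a single qudit of dimension $d_B^r$.

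Next I would run the same three-step structure. (i) Apply Theorem \ref{StabDeFinettiQubit} to $\rho_{B_1^n}$, keeping $k+1$ copies, to obtain the qubit analogue of \eqref{proof-step1}, now with right-hand side $6\sqrt2\,2^r\sqrt{(k+1)/n}$. (ii) Insert the CPTP map $\mathcal{M}$ of Observation \ref{CPTPmapLemma} (measure $B_1^k$, purify into $A$), and use that it is trace-nonincreasing to collapse the estimate onto the single block $B_{k+1}$, yielding the qubit version of \eqref{proof-step-4}. (iii) Apply Theorem \ref{LinDeFinetti} to get $\|\rho_{AB_{m+1}}-\sum_{z_1^m}p_Z(z_1^m)\rho_{A|z_1^m}\otimes\rho_{B_{m+1}|z_1^m}\|_{\tr}\le\epsilon(d_B^r,d_A,n)$, set $k=m$, and combine the two bounds by the triangle inequality. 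Because $\rho_{AB_1^n}$ is permutation invariant, $\rho_{AB_{m+1}}$ equals $\rho_{AB}$ for every $m$, so the left-hand side is already the target quantity and I am free to choose $m$ to minimise the stabilizer error.

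The auxiliary claims — the per-branch constraints $\Lambda(\rho_{A|z})=X_{C_A}$ and $\Gamma(\rho_{B|x})=Y_{C_B}$, together with the final bound $\|\sum_z p_Z(z)(\rho_{B|z}-\sum_{\sigma_B}p_S(\sigma_B)\sigma_B)\|_{\tr}\le\tilde\epsilon$ — then fall out exactly as in the proof of Theorem \ref{mainthm}: the constraints are carried over verbatim from Theorem \ref{LinDeFinetti}, and the last inequality is produced by the trace-preserving map $\mathcal{M}'$ of Observation \ref{CPTPmapLemma2}, reproducing \eqref{proof-steplast} with the qubit bound in place.

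The hard part will be the requirement in Theorem \ref{StabDeFinettiQubit} that the number of retained copies be a multiple of six. In the odd-prime proof the optimal choice is $m=0$, i.e. a single retained copy, and indeed $\tilde\epsilon$ in \eqref{epsilon3} is precisely the qubit bound $6\sqrt2\,2^r\sqrt{k/n}$ evaluated at $k=1$; but $k=1$ is inadmissible here, the smallest legal value being $k=6$, which taken honestly gives an error of order $6\sqrt2\,2^r\sqrt{6/n}$ rather than the advertised $\sqrt{1/n}$. I would therefore have to argue carefully why the single-block marginal can be controlled at the $k=1$ rate — for instance by first establishing closeness of the six-block marginal and then invoking permutation invariance to descend to one block — or else absorb the extra $\sqrt6$ into the constant prefactor. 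This is exactly where the qubit theorem's blocking structure collides with the ``choose $m=0$'' optimisation that the odd-prime proof relied on, so it is the step on which I would concentrate the verification.
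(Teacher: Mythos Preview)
Your plan matches the paper's proof exactly: the paper describes it as ``completely analogous to the proof of Theorem \ref{mainthm} with changed error probability'' plus one additional step to handle the multiple-of-six constraint, which is precisely what you single out as the hard part. The paper's resolution is to observe that Theorem \ref{LinDeFinetti} holds for any $m\in[0,n-1]$ (permutation invariance alone suffices), so one may choose $m$ to be a multiple of six; it then sets $k=m=0$.

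Your concern that the number of retained copies is then $k+1=1$, which is \emph{not} a multiple of six, so that Theorem \ref{StabDeFinettiQubit} as stated does not literally deliver the $\sqrt{1/n}$ rate appearing in \eqref{epsilon3}, is well taken: the paper's proof is loose at exactly this notational point. Your proposed remedies --- take six retained copies and absorb the extra $\sqrt{6}$ into the constant, or control the six-block marginal and descend to one block by monotonicity of the trace norm under partial trace --- are the honest ways to close this gap. In short, you have reproduced the paper's argument and been more careful than it at the one delicate step.
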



\begin{proof}[Proof of Theorem \ref{mainthm-qubit}]
The proof for Theorem \ref{mainthm-qubit} has one additional step, but is otherwise completely analogous to the proof of Theorem \ref{mainthm} with changed error probability in instances where the stabilizer de Finetti theorem appears. The additional step is neccessary because the stabilizer de Finetti theorem for qubits only holds for $k\leq n$ being a multiple of six  (using the notation conventions from the proof of \ref{mainthm}). Therefore, to facilitate chosing $k$ equal to $m$, $m$ must be a multiple of six as well. However, since a state that is invariant under stochastic orthogonal transformations is also invariant under permutations, the de Finetti theorem with linear constraints in \ref{LinDeFinetti} holds for any $m$, so a multiple of six can be chosen. In particular, $k=m=0$ can be chosen.
\end{proof}

Both of the above stabilizer de Finetti theorems with linear constraints are symmetric in the choice of Alice and Bob (see also Remark 5.5 in \cite{BBFS18}), which means that analogous theorems can be stated approximating Alice's side instead of Bob's, which might lead to a better error bound depending on $d_A$ and $d_B$. Subsequently, this will lead to a hierarchy approximating the encoder of a QEC procedure by stabilizer state tensor powers instead of the decoder in the next section.

Furthermore, one could also simultaneously approximate Alice's \textit{and} Bob's side by stabilizer state tensor powers by expanding the above theorems, leading to a slight change in the error. For Theorem \ref{mainthm}, proof of this can be found in Appendix \ref{appendix-bothsides}.

\section{An SDP Hierarchy for Maximum Channel Fidelity with Optimal Clifford Decoder}
\label{section-hierarchy}

As mentioned before, de Finetti theorems with permutation invariance can be used to design an SDP hierarchy for approximating separable states, which is a problem that appears in many quantum information theory settings, but is in general rather hard to solve. When using a de Finetti approximation, instead of enforcing separability as a numerical constraint, permutation invariance can be imposed, which is in general more tractable. As the approximation becomes better with increasing number of systems $n$, this immediately translates to a hierarchy for optimizing polynomial functions. However, although the de Finetti theorem in Theorem \ref{LinDeFinetti} from \cite{BBFS18} has (at first glance) worse convergence than best known standard de Finetti theorems, it additionally and crucially ensures that the right linear constraints are separately satisfied on Alice's and Bob's systems.

Instead of permutation-based de Finetti theorems, a similar hierarchy could be constructed for states with invariance under the stochastic orthogonal group based on the de Finetti theorem with stabilizer states, Theorem \ref{StabDeFinetti}, found in \cite{Gross}. This in itself is interesting for some problems, e.g. optimizing over the convex hull of stabilizers, which could be relaxed to a hierarchy of stochastic orthogonal invariant states. In previous approaches to this problem, this kind of optimization was found to be a linear problem \cite{Heinrich19} in principle (though containing some tedious enumeration), and it is therefore not clear at all that an SDP relaxation would provide an improvement. However, particularly because of the exponential convergence of the Stabilizer de Finetti Theorem, it might.
However, using Theorem \ref{StabDeFinetti} alone would not allow us to impose the important additional constraints.

In Section \ref{section-stabdefinettithm}, we have found that Theorem \ref{mainthm} implies that a state with stochastic orthogonal invariance on one side (here: Bob's) is approximately separable in the cut between $A$ and $B$ \textit{and} approximated by a convex combination of stabilizer states on Bob's side.
Similarly to \cite{BBFS18}, where the de Finetti theorem with linear constraints implies that separable states can be approximated by the hierarchy given by \eqref{hhhhhhh}, this theorem implies that the stabilizer de Finetti theorem with linear constraints can be used to obtain a hierarchy approximating separable and partly stabilizer states.

For now, we are interested in finding the best arbitrary encoder and Clifford decoder, which is given by the optimization problem \ref{Fidelity-CliffordDecoder}.
Using Choi-Jamio{\l}kowski isomorphism, this can be translated to the following optimization problem:

\begin{optimize}\label{L*}
\begin{equation*} \begin{split} 
F_C(N,d_M)=\text{maximize} \ & d_{\tilde{A}} d_B^r \Tr \Big( \big(J_{\tilde{A}B}^N \otimes \Phi_{A\tilde{B}} \big) \big(  E_{A\tilde{A}} \otimes D_{B\tilde{B}} \big) \Big)
\\
\text{subject to }&  E_{A\tilde{A}} \geq 0, \ D_{B\tilde{B}} \geq 0 \\
  & \Tr_{\tilde{A}}( E_{A\tilde{A}}) =\frac{\mathbbm{1}_A}{d_A}, \ \Tr_{\tilde{B}}( D_{B\tilde{B}})  = \frac{\mathbbm{1}_B}{d_B^r}\\
  & D_{B\tilde{B}}=  \sum_{\sigma_{B\tilde{B}}} p_S(\sigma_{B\tilde{B}}) \sigma_{B\tilde{B}} \text{ with stabilizer states $\sigma_{B\tilde{B}}$}\\
\end{split}
\end{equation*}
\end{optimize}


To connect this optimization problem with Theorem \ref{StabDeFinetti}, the systems $A$ and $B$ appearing in the stabilizer de Finetti theorem with linear constraints must be renamed to $A\rightarrow A\tilde{A}$ and $B\rightarrow B\tilde{B}$. Then, the linear constraints appearing in the theorem translate to
\[\Tr_{\tilde{A}}( \rho_{A\tilde{A}(B\tilde{B})_1^n} ) =\frac{\mathbbm{1}_A}{d_A}\otimes \rho_{(B\tilde{B})_1^n}\]
and
\[\Tr_{\tilde{B}_n}( \rho_{A\tilde{A}(B\tilde{B})_1^n})  = \rho_{(B\tilde{B})_1^{n-1}} \otimes \frac{\mathbbm{1}_{B_{n}}}{d_B^r}\]
which corresponds to choosing $C_A={A}$, $\Lambda_{A\tilde{A} \rightarrow {A}} =\Tr_{\tilde{A}}$ and $X_{A}=\frac{\mathbbm{1}_A}{d_A}$, as well as $C_B={B}$, $\Gamma_{B\tilde{B} \rightarrow {B}} =\Tr_{\tilde{B}}$ and $Y_{B}=\frac{\mathbbm{1}_B}{d_B^r}$ and inserting these choices in the constraints of Theorem \ref{mainthm}.

Now, Theorem \ref{mainthm} implies that a state $\rho_{A\tilde{A} (B\tilde{B})_1^n}$ that is invariant under the action of $O_n$ on Bob's side can be used to approximate a separable state where Bob's side of the state (in the above case $D_{B\tilde{B}}$) is a mixture of stabilizer states. Recast as an optimization problem, level $n$ of the hierarchy is given by:
\begin{optimize}\label{L_n}
\begin{equation*} \begin{split}
F_C^{(n)}(N,d_M)=\text{maximize} \ & d_{\tilde{A}} d_B^r \Tr \Big( \big(J_{\tilde{A}B}^N \otimes \Phi_{A\tilde{B}} \big) \big(  \rho_{A\tilde{A}B\tilde{B}} \big) \Big)\\
\text{subject to }&  \rho_{A\tilde{A}(B\tilde{B})_1^n}  \geq 0, \ \Tr(\rho_{A\tilde{A}(B\tilde{B})_1^n})=1 \\
& \rho_{A\tilde{A}(B\tilde{B})_1^n} \text{ is invariant under the action of } O_n \\
  & \Tr_{\tilde{A}}( \rho_{A\tilde{A}(B\tilde{B})_1^n} ) =\frac{\mathbbm{1}_A}{d_A}\otimes \rho_{(B\tilde{B})_1^n}\\&
\Tr_{\tilde{B}_n}( \rho_{A\tilde{A}(B\tilde{B})_1^n})  = \rho_{A\tilde{A}(B\tilde{B})_1^{n-1}} \otimes \frac{\mathbbm{1}_{B_{n}}}{d_B^r} \\
\end{split}
\end{equation*}
\end{optimize}

The optimal value $F_C(N,d_M)$ of problem \ref{L*} (which is ultimately what we want to know) and the optimal value $F_C^{(n)}(N,d_M)$ of the SDP \ref{L_n} (which is more easily computable) correspond directly to the states inserted in the trace distance in the stabilizer de Finetti theorem with linear constraints, and thereby, their difference corresponds to the error bound in the theorem. With increasing level $n$, this error decreases, and the values $F_C^{(n)}(N,d_M)$ approach the value $F_C(N,d_M)$. The convergence of the SDP hierarchy, meaning $\lim_{n\rightarrow \infty} F_C^{(n)}(N,d_M) =F_C^{(n)}(N,d_M)$, thus follows from Theorem \ref{mainthm}.

\begin{theorem}[Convergence of the hierarchy] \label{convergence}
Considering the SDPs in \eqref{L_n} with optimal value $F_C^{(n)}(N,d_M)$ for some $n\geq 0$, we find that
\[ F_C^{(n+1)}(N,d_M) \leq F_C^{(n)}(N,d_M) \]
\[\lim_{n\rightarrow \infty} F_C^{(n)}(N,d_M)=F_C(N,d_M)\]
where $F_C(N,d_M)$ is the optimal value to the optimization problem \eqref{L*}.
\end{theorem}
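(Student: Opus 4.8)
The plan is to prove the two assertions by sandwiching the sequence $F_C^{(n)}(N,d_M)$ from both sides. Concretely, I would show that the sequence is monotonically non-increasing, that every term dominates the target value $F_C(N,d_M)$ (so the hierarchy consists of upper bounds on the quantity of interest), and that the stabilizer de Finetti theorem with linear constraints forces $\limsup_n F_C^{(n)}(N,d_M)\le F_C(N,d_M)$. Monotonicity together with these two matching bounds then pins the limit to $F_C(N,d_M)$.

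For monotonicity I would pass from level $n+1$ to level $n$ by a partial trace. Given an optimizer $\rho_{A\tilde A(B\tilde B)_1^{n+1}}$ of \eqref{L_n}, set $\rho':=\Tr_{(B\tilde B)_{n+1}}\rho$. Positivity and normalization are immediate, and invariance under $O_n$ follows from the embedding $\mathcal{O}_n\hookrightarrow\mathcal{O}_{n+1}$ that acts on the first $n$ coordinates and fixes the last: the corresponding matrix is block-diagonal, hence discrete orthogonal, and fixes the all-ones vector, hence stochastic, so tracing out the last block preserves the relevant invariance. The first linear constraint descends because $\Tr_{\tilde A}$ and $\Tr_{(B\tilde B)_{n+1}}$ commute, and the second follows by using that the permutations lie in $\mathcal{O}_{n+1}$ to interchange blocks $n$ and $n+1$ before tracing. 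Since the objective depends only on the unchanged first-block marginal $\rho_{A\tilde A B\tilde B}$, $\rho'$ is feasible at level $n$ with the same value, so $F_C^{(n)}\ge F_C^{(n+1)}$. For the lower bound $F_C^{(n)}\ge F_C(N,d_M)$ I would conversely extend a feasible point $(E_{A\tilde A},D_{B\tilde B})$ of \eqref{L*}, with $D_{B\tilde B}=\sum_\sigma p_S(\sigma)\sigma_{B\tilde B}$, to $\rho_{A\tilde A(B\tilde B)_1^n}:=E_{A\tilde A}\otimes\sum_\sigma p_S(\sigma)\,\sigma^{\otimes n}$. Mixtures of stabilizer tensor powers are $O_n$-invariant, the first-block marginal is exactly $E\otimes D$ (so the objective matches), and the marginal constraints reduce to $\Tr_{\tilde A}E=\mathbbm 1_A/d_A$ together with trace-normalization of the stabilizer Choi states making up the decoder.

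The upper bound comes from the de Finetti input. For each $n$ take an optimizer $\rho^{(n)}$ of \eqref{L_n} and apply Theorem \ref{mainthm} under the renaming $A\to A\tilde A$, $B\to B\tilde B$, $\Lambda_{A\tilde A\to A}=\Tr_{\tilde A}$, $\Gamma_{B\tilde B\to B}=\Tr_{\tilde B}$, $X_A=\mathbbm 1_A/d_A$, $Y_B=\mathbbm 1_B/d_B^r$. This places the first-block marginal $\rho^{(n)}_{A\tilde A B\tilde B}$ within trace distance $\epsilon(d_B^r,d_A,n)+\bar\epsilon(d_B,r,n)$ of a separable state $\sum_{z,\sigma}p_Z(z)p_S(\sigma)\,\rho_{A\tilde A|z}\otimes\sigma_{B\tilde B}$ whose Bob part is a stabilizer mixture and which obeys the constraints of \eqref{L*} up to the same error. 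As the objective is the fixed linear functional $\rho\mapsto d_{\tilde A}d_B^r\Tr\big((J^N_{\tilde A B}\otimes\Phi_{A\tilde B})\rho\big)$, I would bound the change in its value by a fixed multiple of $\|d_{\tilde A}d_B^r(J^N_{\tilde A B}\otimes\Phi_{A\tilde B})\|_\infty$ times the trace distance, giving $F_C^{(n)}(N,d_M)\le F_C(N,d_M)+O\big(\epsilon(d_B^r,d_A,n)+\bar\epsilon(d_B,r,n)\big)$. Both errors tend to $0$ as $n\to\infty$ with $d_A,d_B,r$ fixed, so $\limsup_n F_C^{(n)}\le F_C(N,d_M)$. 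Combined with the monotone, $F_C$-dominated lower bound, this squeezes $\lim_n F_C^{(n)}(N,d_M)=F_C(N,d_M)$.

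I expect the main obstacle to be making the two feasibility passages exact rather than merely approximate. In the de Finetti step the approximating separable state meets the channel normalization (trace-preservation) constraint only up to the error $\bar\epsilon$, so one must either correct it to a genuinely feasible decoder of \eqref{L*} at negligible cost in the objective, or argue directly that a state this close to the feasible set cannot beat $F_C(N,d_M)$ by more than the continuity error; carrying the dimensional prefactor $d_{\tilde A}d_B^r$ through this estimate to confirm it does not spoil convergence is the delicate bookkeeping. A related and genuinely new subtlety, absent from the permutation hierarchy of \cite{BBFS18} (where the full decoder $D$ can be inserted as a plain tensor power $D^{\otimes n}$), is that $O_n$-invariance forces the mixture-of-tensor-powers form $\sum_\sigma p_S(\sigma)\sigma^{\otimes n}$, so the second marginal constraint holds term by term only if each stabilizer component of the decoder is itself a normalized Choi state; verifying that the relevant Clifford decoders admit such a decomposition, or correcting the extension at vanishing cost, is the crux of the lower-bound direction.
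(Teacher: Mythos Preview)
Your approach is essentially the paper's: monotonicity by tracing out one block, the lower bound by extending a feasible $(E,D)$ to level $n$, and the upper bound by feeding an optimizer into Theorem~\ref{mainthm} and controlling the linear objective via trace-distance continuity. The paper treats your two flagged obstacles rather loosely---it absorbs the convex combination in the de~Finetti output by citing that shared randomness does not change the optimal fidelity \cite{BBFS18}, and for the lower bound it simply writes $E_{A\tilde A}\otimes D_{B\tilde B}^{\otimes n}$ without addressing $O_n$-invariance for mixed $D$---whereas your $\sum_\sigma p_S(\sigma)\,\sigma^{\otimes n}$ construction together with the observation that each Clifford Choi state individually has maximally mixed marginal is the cleaner resolution.
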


\begin{proof}[Proof of Theorem \ref{convergence}]
Because of Theorem 4.1 in \cite{BBFS18}, we know that convergence holds for the hierarchy with permutation invariance. Replacing permutation invariance by stochastic orthogonal invariance does not change their proof. Because the constraints become more powerful, $ F_C^{(n+1)}(N,d_M) \leq F_C^{(n)}(N,d_M)$. Given a feasible solution $E_{A\tilde{A}}\otimes D_{B\tilde{B}}$ to $F_C(N,d_M)$, a feasible solution for $F_C^{(n)}(N,d_M)$ can easily be constructed via $E_{A\tilde{A}}\otimes D_{B\tilde{B}}^{\otimes n}$. Given a feasible solution to $F_C^{(n)}(N,d_M)$, this is only approximately equal to a convex combination of feasible solutions to $F_C(N,d_M)$, where the approximation error is given in form of the bound of the de Finetti theorem, which improves with increasing $n$, and tends to zero in the limit $n\rightarrow \infty$. This can easily be seen in the following calculation:

\begin{equation*}
\begin{split}
 &F_C^{(n)}(N,d_M)-F_C(N,d_M)\\&= d_{\tilde{A}} d_B^r \Tr \Big( \big(J_{\tilde{A}B}^N \otimes \Phi_{A\tilde{B}} \big) \big(  \rho_{A\tilde{A}B\tilde{B}} \big) \Big)- d_{\tilde{A}} d_B^r \Tr \Big( \big(J_{\tilde{A}B}^N \otimes \Phi_{A\tilde{B}} \big) \big(  E_{A\tilde{A}} \otimes D_{B\tilde{B}} \big) \Big)\\
 &= d_{\tilde{A}} d_B^r \Tr \Big( \big(J_{\tilde{A}B}^N \otimes \Phi_{A\tilde{B}} \big) \big(  \rho_{A\tilde{A}B\tilde{B}} -E_{A\tilde{A}} \otimes D_{B\tilde{B}}  \big) \Big)\\
 &= d_{\tilde{A}} d_B^r \Tr \Big( \big(J_{\tilde{A}B}^N \otimes \Phi_{A\tilde{B}} \big) \big( \rho_{A\tilde{A}B\tilde{B}}- E_{A\tilde{A}} \otimes \sum_{\sigma_{B\tilde{B}}} p_S(\sigma_{B\tilde{B}}) \sigma_{B\tilde{B}}  \big) \Big)\\
 \end{split}
 \end{equation*}

 $F_C(N,d_M)$ can be rewritten in terms of a convex combination, which corresponds to adding classical shared randomness assistance, which does not affect the optimal value of the fidelity \cite{BBFS18}. Then, we find
 
 \begin{equation*}
\begin{split}
  F_C^{(n)}(N,d_M)-F_C(N,d_M)&= d_{\tilde{A}} d_B^r \Tr \Big( \big(J_{\tilde{A}B}^N \otimes \Phi_{A\tilde{B}} \big) \big( \rho_{A\tilde{A}B\tilde{B}}-\sum_{z,\sigma_{B\tilde{B}}} p_Z(z) E_{A\tilde{A}|z} \otimes  p_S(\sigma_{B\tilde{B}}) \sigma_{B\tilde{B}}\big)   \Big)\\
 &= d_{\tilde{A}} d_B^r \big(\epsilon(d_B^r,d_A,n)+\bar{\epsilon}(d_B,r,n)\big)  \Tr \Big( J_{\tilde{A}B}^N \otimes \Phi_{A\tilde{B}}\Big)\\
\end{split}
 \end{equation*}
with $\epsilon(d_B^r,d_A,n)$ given in \eqref{epsilon} and $\bar{\epsilon}(d_B,r,n)$ given in \eqref{epsilon2}.

Then, because $\lim_{n\rightarrow \infty}\epsilon(d_B^r,d_A,n)=0$ and $\lim_{n\rightarrow \infty} \tilde{\epsilon}(d_B,r,n)=0$, the above tends to zero, and consequently the objective value of the optimization problem \ref{L_n} converges to the optimal value of the optimization problem \ref{L*}.
\end{proof}

\begin{remark}
In contrast to the permutation-based version, the linear constraints are not fulfilled exactly, but approximately. If separability is achieved after a low number of steps, stabilizer-ness cannot be guaranteed. However, since the convergence towards stabilizer states is exponential, it is considerably faster than the convergence towards separability.
\end{remark}


\section{Remark on numerical tests}
\label{section-numerics}

One central goal of establishing a convergent hierarchy like the one in Section \ref{section-hierarchy} is to implement and apply it to some numerical computations. Most naturally, the simila-rities to the convergent hierarchy in \cite{BBFS18} could be exploited to obtain a direct comparison between the maximum channel fidelity for arbitrary encoder and decoder, $F(N,d_M)$ and the maximum channel fidelity with Clifford operations $F_C(N,d_M)$. However, due to the high number of parameters and time constraints, the numerical implementation was not completed within this thesis.

As a preliminary exercise, we were interested in finding the optimal Clifford decoder for the 3-qubit bitflip code. This means that separability between encoder and decoder is given, and we are not making use of the hierarchy above, but a simplified version of it to reduce the number of parameters. Even for two copies (i.e. no orthogonal symmetry, just swapping the subsystems), a home computer did not have sufficient memory for this task. However, the problem was not yet optimized by using symmetry to reduce the number of variables, which may make the problem more tractable.

For studies making use of the hierarchy, it must be noted that only some particular combinations of the number of subsystems $n$, the local dimension $d$ and the number of qudits $r$ that the Clifford unitaries act on could be of interest. In the case of qubits, i.e. $d=2$, the stabilizer de Finetti theorem for qubits only holds for $n\geq 6$. 
For qudits and combinations $d=3$ and $n=2,3$, the stochastic orthogonal group is equal to the permutation group. Consequently, the first potentially interesting case may appear for $d=3$, $n=4$.

All of these combinations already imply a large number of parameters. However, the number of parameters can be somewhat reduced because of the symmetry of the problem. Permutation invariant states can be defined in terms of fewer parameters than regular states (using Clebsch-Gordan coefficients). In addition, depending on the problem, the symmetry of the channel can be used to simplify the optimization as well (as noted in \cite{BBFS18}).

As with many SDP problems, something may also be gained from looking at the corresponding dual problem. It may reduce the number of variables (which is unlikely in our case), or show some structure that allows for a good guess at a feasible solution. More importantly, it could be argued that the dual is the more natural optimization for the original problem because the dual directly gives an upper bound on the optimal value for any feasible solution, thereby giving an upper bound to the upper bound on the maximum channel fidelity achieved at this level. The primal problem may only give an upper bound of the maximum channel fidelity if it attains the optimal value, as it is a maximization rather than a minimization. In other words, any feasible solution of the dual provides an upper bound on $F_C^{(n)}(N,d_M)$, while $F_C^{(n)}(N,d_M)$ itself is an upper bound on $F_C(N,d_M)$. 

To obtain the dual to optimization problem \ref{L_n}, it is helpful to rephrase the problem in terms of maps:

\begin{optimize}\label{primal-L_n}
\begin{equation*} \begin{split}
F_C^{(n)}(N,d_M)=\text{maximize} \ & d_{\tilde{A}} d_B \Tr \Big( \big(J_{\tilde{A}B}^N \otimes \Phi_{A\tilde{B}} \big) \big(  \rho_{A\tilde{A}B\tilde{B}} \big) \Big)\\
\text{subject to }&  \rho_{A\tilde{A}(B\tilde{B})_1^n}  \geq 0, \ \Tr(\rho_{A\tilde{A}(B\tilde{B})_1^n})=1 \\
& \mathcal{U}_O(\rho_{A\tilde{A}(B\tilde{B})_1^n})=0 \ \forall O \in \mathcal{O}_n \\
  &\mathcal{C}_A(\rho_{A(B\tilde{B})_1^n})=0 \\
& \mathcal{C}_{B_n}(\rho_{A\tilde{A}(B\tilde{B})_1^{n-1} B_n})=0 \\
\end{split}
\end{equation*}
with the following maps:
\[\mathcal{U}_O(x)=OxO^{\dagger}-x\]
\[\mathcal{C}_A(x)=\big( \mathcal{T}_{HW}^{A}\otimes \mathbbm{1}_{(B\tilde{B})_1^n}\big)x \big( (\mathcal{T}_{HW}^{A})^{\dagger} \otimes \mathbbm{1}_{(B\tilde{B})_1^n}\big)-x\]
\[\mathcal{C}_{B_n}(x)=\big(\mathbbm{1}_{A\tilde{A}(B\tilde{B})_1^{n-1}}\otimes \mathcal{T}_{HW}^{B_n}\big) x\big( \mathbbm{1}_{A\tilde{A}(B\tilde{B})_1^{n-1}}\otimes (\mathcal{T}_{HW}^{B_n})^{\dagger}\big) -x\]
with $\mathcal{T}_{HW}^{A}$ being the twirling map from \eqref{twirlingeq} applied to subsystem $A$, and $\mathcal{T}_{HW}^{B_n}$ the twirling of subsystem $B_n$.
\end{optimize}

Then, the dual of optimization problem \ref{primal-L_n} (and thereby \ref{L_n}) is given by the following optimization problem:

\begin{optimize}\label{dual-L_n}
\begin{equation*} \begin{split}
F_C^{(n)}(N,d_M)=\text{minimize} \ & \lambda \\
\text{subject to }&  \big(J_{\tilde{A}B_1}^N \otimes \Phi_{A\tilde{B}_1} \big) \otimes \mathbbm{1}_{(B\tilde{B})_2^n} +\sum_O \mathcal{U}_O(P_{A\tilde{A}(B\tilde{B})_1^n}^O) \\&\ \ \ \ \ +\mathcal{C}_A(Y_{A(B\tilde{B})_1^n})\otimes \mathbbm{1}_{\tilde{A}}+\mathcal{C}_{B_n} (Z_{A\tilde{A}(B\tilde{B})_1^{n-1} B})\otimes \mathbbm{1}_{\tilde{B}_{n}}\leq  \lambda \mathbbm{1}_{A\tilde{A}(B\tilde{B})_1^n} \\
\end{split}
\end{equation*}
\end{optimize}

As to be expected, while the primal problem contains one SDP variable ($\rho_{A\tilde{A}(B\tilde{B})_1^n}$) and many constraints, the dual contains many SDP variables ($\lambda$, $P_{A\tilde{A}(B\tilde{B})_1^n}^O$, $Y_{A(B\tilde{B})_1^n}$, $Z_{A\tilde{A}(B\tilde{B})_1^{n-1} B}$) and only one constraint. If the constraints in the primal are relaxed to inequality constraints instead of equalities, the dual contains additional constraints on the positivity of $P_{A\tilde{A}(B\tilde{B})_1^n}^O$, $Y_{A(B\tilde{B})_1^n}$ and $Z_{A\tilde{A}(B\tilde{B})_1^{n-1} B}$.


While numerical tests are possible in principle, and something can be gained from exploiting the symmetry of the problem to reduce the number of parameters, any reasonable test would still require a computer with the ability to keep track of large numbers of parameters. First attempts were already beyond the capabilities of a home computer. None-theless, with additional time and effort, numerical results which allow direct comparison to \cite{BBFS18} are likely attainable.

\newpage
\chapter{Summary and Outlook}

In accordance with the multitudes of applications relying on the original de Finetti theorem with permutation invariance, the stabilizer de Finetti theorem in \ref{StabDeFinetti} also has various interesting applications. Although the results are similar to some degree, they are nonetheless interesting, for two reasons: on the one hand, the stabilizer de Finetti theorem has exponential convergence, and on the other hand, it leads to an approximation by stabilizer states instead of arbitrary quantum states. 
\\
\\
The convergence of the theorem makes it particularly interesting for QKD, where a large number of subsystems must usually be traced out to lead to any reasonable error bound on the communication - a notorious problem for real-world applications. Using the postselection technique, it can be shown that the security of general attacks can be inferred from the security of collective attacks, with a multiplicative factor on the error which depends on the dimension of a permutation-invariant subspace. Here, we have shown that the postselection technique can be generalized to arbitrary symmetry groups, and a bound on the diamond norm of the difference of two CPTP maps can be obtained from it, where this bound depends chiefly on the dimension of the invariant subspace. Then, a computation of the dimension of an invariant subspace associated with the symmetry appearing in the stabilizer de Finetti theorem - stochastic orthogonal symmetry - was performed, finding that the resulting multiplicative factor for QKD applications is significantly smaller and independent of the number of subsystems $n$.

The natural next step would be using this postselection theorem to obtain key rates and achievable key lengths for known protocols. The permutation-symmetry based postselection theorem has already been applied to determine key rates for a variety of protocols, including qubits \cite{Sheridan10_1} and qudits \cite{Sheridan10_2} in the BB84 protocol, 6-state protocol \cite{Mertz13} and $N$-party 6-state protocol \cite{Grasseli18}. However, finding a protocol with the desired symmetry is not trivial. For BB84, the technique of using uncertainty relations for QKD \cite{Koashi06} outperforms most de Finetti type considerations, and it is unlikely that our result will constitute an improvement. The 6-state protocol, on the other hand, is a more promising candidate. If the 6-state protocol has stochastic orthogonal invariance, the corresponding key rates can be determined by replacing one summand in the key lengths found in \cite{Mertz13} and \cite{Grasseli18} - precisely the summand corresponding to the difference between collective and general attack key lengths (corresponding to the dimension of the permutation-invariant subspace). Furthermore, as another potential protocol, one might use the protocol appearing in the context of continuous variable de Finetti theorems with orthogonal symmetry \cite{LKGC09} and restrict it to finite dimensions, but the practicality of such an endeavour is unclear.
\\
\\
Because it leads to an approximation by tensor powers of stabilizer states, the stabilizer de Finetti theorem is also interesting for the study of applications tied to stabilizer codes and Clifford operations. Here, we find that it can be employed for benchmarking the success of a QEC procedure in terms of its maximum channel fidelity, where either the encoder or decoder (or both) are Clifford operations. Computing maximum channel fidelity contains a bilinear optimization for finding the optimal encoder and decoder for a given noisy channel, which is in general not directly possible. However, it can be approximated by a converging hierarchy of SDP relaxations, which we also find to be possible for maximum channel fidelity with restriction to Clifford decoders (or encoders, or both).

This directly leads to the next, most logical step: performing numerical tests for some examples. As a preliminary, one could look at a simplified problem of finding the optimal Clifford decoder for 3-qubit repitition code with bitflips, before moving on to more complicated examples which can be compared to the numerical results in \cite{BBFS18}. Although we did take first steps in this direction, full fledged numerical computations are outside the scope of this work.
\\
\\
While this thesis focuses on two aspects of applications emerging from the stabilizer de Finetti theorem, it could also be interesting to investigate many others. In particular, it could be beneficial to look at other results relying on the permutation-based de Finetti theorem and postsselection technique, including applications pertaining to tomography \cite{ChrRen12} and Shannon reverse coding \cite{Berta11}.


\newpage

\addcontentsline{toc}{chapter}{References}
\bibliographystyle{linksen}
\bibliography{byb2}
\newpage

\appendix
\addappheadtotoc

\chapter{Proof of the Bound on the Diamond Norm }

\section{Resolution of Identity}
\label{appendix-resolutionofidentity}

The postselection technique can only be applied for a symmetry group $\mathcal{S}$ if the following condition, termed resolution of identity, holds: 

 \begin{equation}
\operatorname{Cond0}=\Big\{ \exists d_{\mathcal{S}} (\cdot)\  \operatorname{s.t.} \ \int \rho^{\otimes n} {d}_{\mathcal{S}} (\rho)= \frac{1}{g_{n,d}} \mathbbm{1}_{  {N} } \Big\} 
\end{equation}

Here, we state two necessary, but not sufficient conditions for Cond0, and thereby for resolution of identity.
Thus, if these conditions are met, it is sufficient to consider the particular state $\tau_{T}$ when computing the diamond norm, instead of performing an optimization over a large number of states:

\begin{gather*}
\operatorname{Cond1}= \Big\{ \exists \ket{\phi^{(k)}} \in \mathcal{H}\otimes \mathcal{H} \ \operatorname{s.t.} \ 
\mathcal{N} \  \operatorname{is\ spanned\ by}\ \ket{\phi^{(k)}}^{\otimes n} \Big\}
\\
{\Longleftrightarrow} \operatorname{Cond2}=
\Big\{ \exists X^{(k)} \in L(\mathcal{H},\mathcal{H}) \  \operatorname{s.t.\ the\ commutant\ of} \ \mathcal{S}\  \operatorname{is\ spanned\ by} \ (X^{(k)})^{\otimes n}
\Big\} 
\end{gather*}
where $ L(\mathcal{H},\mathcal{H})$ denotes the space of linear maps $X^{(k)}:\mathcal{H}\mapsto\mathcal{H}$, and $\mathcal{N}=(\mathcal{H}^{\otimes n} \otimes \mathcal{H}^{\otimes n})^{(s\otimes\overline{s})}$.

These alternative formulations provide an advantage over the original condition Cond0 because they are conditions in linear algebra that are less specific to our particular problem. 

Firstly, we show the equivalence of the two conditions Cond1 and Cond2 by finding a one-to-one map between the two sets they describe.

\begin{proof}[Proof: Cond1 $\Longleftrightarrow$ Cond2]
Firstly, note the definition of the space $\mathcal{N}$:
\begin{equation*}
\mathcal{N}=(\mathcal{H}^{\otimes n} \otimes \mathcal{H}^{\otimes n})^{(s\otimes\overline{s})} =\{
\ket{\Phi}\in(\mathcal{H}^{\otimes n} \otimes \mathcal{H}^{\otimes n}) | \ket{\Phi}=s\otimes\overline{s} \ket{\Phi} \ \forall s\in{\mathcal{S}} 
\}
\end{equation*}
and the definition of the commutant:
\begin{equation*}
S' =\{
X \in L(\mathcal{H}^{\otimes n}, \mathcal{H}^{\otimes n}) | X=sXs^{\dagger} \ \forall s\in{\mathcal{S}} 
\}
\end{equation*}
There is a one-to-one map between these two spaces such that the $\ket{\Phi}$ and the $X$ can be transformed into one another.

Let $\{\ket{i}\}$ be a basis of $\mathcal{H} $. Then, $\{\ket{i} \otimes \ket{j} \}$ is a basis of $\mathcal{H}\otimes \mathcal{H} $. The map we propose transforms vectors on two copies of $\mathcal{H}$ to linear maps between the two spaces: $\mathcal{H} \otimes \mathcal{ H} \rightarrow L(\mathcal{H} ,\mathcal{H} )$, and acts in the following way: $\ket{i}\otimes \ket{j} \mapsto \ket{i} \bra{j}$.

First, let us expand $\ket{\Phi}$ in this basis and perform the mapping.

\begin{equation*}
\ket{\Phi}=
\sum_{\substack{i_1,...,i_n \\  j_1,...,j_n}}
 \Phi_{i_1,...,i_n , j_1,...,j_n} \ket{i_1,...,i_n } \otimes \ket{ j_1,...,j_n }
\mapsto
\sum_{\substack{i_1,...,i_n \\  j_1,...,j_n}}  \Phi_{i_1,...,i_n , j_1,...,j_n} \ket{i_1,...,i_n } \bra{ j_1,...,j_n } \equiv X
\end{equation*}

Now we show that the restrictions for $\ket{\Phi}\in\mathcal{N} $ and $X\in S'$ translate into one another.

\begin{align*}
\ket{\Phi}&=
(s\otimes\overline{s})\ket{\Phi}
=
(s\otimes\overline{s})\Big(\sum_{\substack{i_1,...,i_n \\  j_1,...,j_n}}
 \Phi_{i_1,...,i_n , j_1,...,j_n} \ket{i_1,...,i_n }\otimes \ket{ j_1,...,j_n }\Big)
\\&
=
\sum_{\substack{i_1,...,i_n \\  j_1,...,j_n}} \Phi_{i_1,...,i_n , j_1,...,j_n} \big(s \ket{i_1,...,i_n }\big) \otimes \Big(\overline{s}  \ket{ j_1,...,j_n }\Big)
\\&
=
\sum_{\substack{i_1,...,i_n \\  j_1,...,j_n}} \Phi_{i_1,...,i_n , j_1,...,j_n} \big(s \ket{i_1,...,i_n }\big) \otimes
\\ & \ \ \ \ \ \ \ \ \ \  \ \ \ \ \ \ \ \ \ \ 
 \bigg( \sum_{\substack{m_1,...,m_n \\  n_1,...,n_n}}\Big(\bra{m_1,...,m_n} \overline{s} \ket{ n_1,...,n_n}\Big) \ket{ m_1,...,m_n }\bra{ n_1,...,n_n }\bigg)  \ket{ j_1,...,j_n }
\\&
=
\sum_{\substack{i_1,...,i_n \\  j_1,...,j_n\\m_1,...,m_n\\n_1,...,n_n}} \Phi_{i_1,...,i_n , j_1,...,j_n} \big(s \ket{i_1,...,i_n }\big) \otimes \\
& \ \ \ \ \ \ \ \ \ \  \ \ \ \ \ \ \ \ \ \ 
 \big(\bra{m_1,...,m_n} \overline{s} \ket{ n_1,...,n_n}\big) \ket{ m_1,...,m_n } \big(\bra{ n_1,...,n_n } \ket{ j_1,...,j_n }\big)
\\&
=
\sum_{\substack{i_1,...,i_n \\  j_1,...,j_n\\m_1,...,m_n}} \Phi_{i_1,...,i_n , j_1,...,j_n} \big(s \ket{i_1,...,i_n }\big) \otimes \big(\bra{m_1,...,m_n} \overline{s} \ket{ j_1,...,j_n}\big) \ket{ m_1,...,m_n }
\\&
\mapsto
\sum_{\substack{i_1,...,i_n \\  j_1,...,j_n\\m_1,...,m_n}} \Phi_{i_1,...,i_n , j_1,...,j_n} \big(s \ket{i_1,...,i_n }\big)  \big(\bra{m_1,...,m_n} \overline{s} \ket{ j_1,...,j_n}\big)   \bra{ m_1,...,m_n }
\\&
=
\sum_{\substack{i_1,...,i_n \\  j_1,...,j_n\\m_1,...,m_n}} \Phi_{i_1,...,i_n , j_1,...,j_n} \big(s \ket{i_1,...,i_n }\big) \big(\bra{j_1,...,j_n} (\overline{s})^T \ket{ m_1,...,m_n}\big)   \bra{ m_1,...,m_n }
\\&
=
\sum_{\substack{i_1,...,i_n \\  j_1,...,j_n}} \Phi_{i_1,...,i_n , j_1,...,j_n} \big(s \ket{i_1,...,i_n }\big)  \Big(\bra{j_1,...,j_n} \overline{s}^{\dagger} \big( \sum_{m_1,...,m_n} \ket{ m_1,...,m_n}\big)   \bra{ m_1,...,m_n }\Big)
\\&
=
s\Big(\sum_{\substack{i_1,...,i_n \\  j_1,...,j_n}} \Phi_{i_1,...,i_n , j_1,...,j_n} \ket{i_1,...,i_n } \bra{ j_1,...,j_n }\Big)s^{\dagger} =sXs^{\dagger}=X
\end{align*}
Each $\ket{\Phi}\in \mathcal{N} $ therefore defines an $X\in S'$, and since the proposed map is invertible and one-to-one, each $X\in S'$ defines one $\ket{\Phi}\in  \mathcal{N}$.

Now, Cond1 implies
\begin{equation*}
\mathcal{N} =\{\ket{\Phi}\in(\mathcal{H}^{\otimes n} \otimes \mathcal{H}^{\otimes n}) | \ket{\Phi}=\sum_k c_k \ket{\phi^{(k)}}^{\otimes n} \}.
\end{equation*}

Let the map be applied to each summand of $\ket{\Phi}$ (i.e. for each $k$):

\begin{equation*}
\ket{\phi^{(k)}}^{\otimes n} =... \mapsto .... =(X^{(k)})^{\otimes n}
\end{equation*}

Therefore, for each possible $\ket{\Phi}\in \mathcal{N}$, we obtain \[\ket{\Phi}=\sum_k c_k \ket{\phi^{(k)}}^{\otimes n} \mapsto\sum_k c_k (X^{(k)})^{\otimes n}=X,\]
defining all possible $X\in S' $. In consequence, the commutant is spanned by $(X^{(k)})^{\otimes n}$, proving Cond1 $\Rightarrow$ Cond2.

Since the map is one-to-one and invertible, the other direction (Cond2 $\Rightarrow$ Cond1) is analogous with the inverse map $X\mapsto \ket{\Phi}$.

\end{proof}

Having proven that the two proposed conditions are equivalent, we move on to show that Cond1 (and thus the equivalent Cond2) is a necessary condition for the original statement Cond0. By contraposition, this can be phrased as: Whenever Cond0 is true, Cond1 is true, which is the statement that will be proven in the following.

\begin{proof}[Proof: Cond0 $\Rightarrow$ Cond1]
We assume that a measure $ d_{\mathcal{S}} (\cdot)$ exists,
and $\int \rho^{\otimes n} {d}_{\mathcal{S}} (\rho)= \frac{1}{g_{n,d}} \mathbbm{1}_{N} $ holds.

Since we are only concerned with finite spaces, the integral can be written as a sum: \[\int \rho^{\otimes n} {d}_{\mathcal{S}} (\rho)\rightarrow \sum_{k} ({\rho^{(k)}})^{\otimes n} {d}_{\mathcal{S}} ({\rho^{(k)}})\]
As this is equal to the identity on $\mathcal{N}$ by assumption, this implies that the space $\mathcal{N}$ is spanned by the $({\rho^{(k)}})^{\otimes n}$. Because each ${\rho^{(k)}}$ is pure, there exists a vector $\ket{\phi^{(k)}}$ such that ${\rho^{(k)}}=\ket{\phi^{(k)}} \bra{\phi^{(k)}}$, and therefore $ ({\rho^{(k)}})^{\otimes n}= (\ket{\phi^{(k)}} \bra{\phi^{(k)}})^{\otimes n}$. Thus, $\exists \ket{\phi^{(k)}}$ such that $\mathcal{N}$ is spanned by $ \ket{\phi^{(k)}} ^{\otimes n}$ (Cond1).

\end{proof}

Thus Cond1 (and the equivalent Cond2) are necessary, but not sufficient conditions of the original statement Cond0, which in turn is a necessary and sufficient condition for the postselection technique to be applicable for a symmetry $\mathcal{S}$. 

\section{Preliminary Lemmata}
\label{appendix-PostselecLemma}

The postselection theorem \ref{ImportantThm} states that the diamond norm of a difference of $\mathcal{S}$-invariant CPTP maps, which entails an optimization problem over all states within a certain space, can be bound using one specific state, which is a purification of the de Finetti state given in \eqref{deFinettistate}.
To simplify the proof of this bound on the diamond norm in Theorem \ref{ImportantThm}, three lemmata are useful.
First, Lemma \ref{purif} establishes a connection between states on $\mathcal{H}^{\otimes n}$ that are invariant under the symmetry $\mathcal{S}$ and a state with support on the invariant subspace $(\mathcal{H}^{\otimes n} \otimes \mathcal{H}^{\otimes n})^{(s\otimes \overline{s})}$. This lemma will appear in the proof of Lemma \ref{lemmasym}, where it is shown to be sufficient to consider states with support on the invariant subspace $\mathcal{N}=(\mathcal{H}^{\otimes n} \otimes \mathcal{H}^{\otimes n})^{(s\otimes \overline{s})}$ for computing the diamond norm of an invariant map. Thirdly, in the context of the diamond norm, we state and prove Lemma \ref{lemma2} which connects states with support on the invariant subspace $\mathcal{N}$ and the de Finetti state from (\ref{deFinettistate}).

\begin{lemma}
\label{purif}
Any state $\rho_{T}$ on $\mathcal{H}_T=\mathcal{H}^{\otimes n}$ that is invariant under a symmetry group $\mathcal{S}$ admits a purification $\rho_{TE}$ with support on $ \mathcal{N}=(\mathcal{H}^{\otimes n} \otimes \mathcal{H}^{\otimes n})^{(s\otimes \overline{s})}\subseteq \mathcal{H}^{\otimes n} \otimes \mathcal{H}^{\otimes n}=\mathcal{H}_T\otimes \mathcal{H}_E$.
\end{lemma}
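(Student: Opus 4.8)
The plan is to exhibit an explicit purification, namely the canonical (maximally entangled) purification, and to check directly that it is a vector in the invariant subspace $\mathcal{N}$. Fix the computational basis $\{\ket{i}\}$ of $\mathcal{H}_T=\mathcal{H}^{\otimes n}$, let $\mathcal{H}_E$ be a second copy of $\mathcal{H}^{\otimes n}$ with basis $\{\ket{i}\}$, and set $\ket{\Omega}=\sum_i \ket{i}_T\otimes\ket{i}_E$ to be the unnormalised maximally entangled vector. Writing $\sqrt{\rho_T}$ for the positive square root of $\rho_T$ obtained by functional calculus, I would define
\[
\ket{\psi}=(\sqrt{\rho_T}\otimes\mathbbm{1})\ket{\Omega}\in\mathcal{H}_T\otimes\mathcal{H}_E,
\]
and claim that $\rho_{TE}=\ket{\psi}\bra{\psi}$ is the desired purification.

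First I would record that $\rho_{TE}$ purifies $\rho_T$. Using $\operatorname{tr}_E\ket{\Omega}\bra{\Omega}=\mathbbm{1}_T$, one computes $\operatorname{tr}_E\ket{\psi}\bra{\psi}=\sqrt{\rho_T}\,\mathbbm{1}_T\,\sqrt{\rho_T}=\rho_T$, so the marginal on $T$ is correct. The core of the argument is the invariance claim, for which I would use the ``ricochet'' identity $(M\otimes\mathbbm{1})\ket{\Omega}=(\mathbbm{1}\otimes M^T)\ket{\Omega}$, valid for every operator $M$ on $\mathcal{H}^{\otimes n}$ (verified by expanding both sides in the basis). Applying it in the transposed form $(\mathbbm{1}\otimes\overline{s})\ket{\Omega}=(\overline{s}^{\,T}\otimes\mathbbm{1})\ket{\Omega}=(s^\dagger\otimes\mathbbm{1})\ket{\Omega}$ gives
\[
(s\otimes\overline{s})\ket{\psi}=(s\sqrt{\rho_T}\otimes\overline{s})\ket{\Omega}=(s\sqrt{\rho_T}\,s^\dagger\otimes\mathbbm{1})\ket{\Omega}.
\]

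The step I expect to be the crux is passing from invariance of $\rho_T$ to invariance of $\sqrt{\rho_T}$. Since the representatives $s$ are unitary, the hypothesis $s\rho_T s^\dagger=\rho_T$ is equivalent to $[s,\rho_T]=0$; because $\sqrt{\rho_T}$ is a continuous function of $\rho_T$, the functional calculus then yields $[s,\sqrt{\rho_T}]=0$ as well, whence $s\sqrt{\rho_T}\,s^\dagger=\sqrt{\rho_T}$. Substituting this into the displayed equation gives $(s\otimes\overline{s})\ket{\psi}=(\sqrt{\rho_T}\otimes\mathbbm{1})\ket{\Omega}=\ket{\psi}$ for every $s\in\mathcal{S}$, so $\ket{\psi}\in(\mathcal{H}^{\otimes n}\otimes\mathcal{H}^{\otimes n})^{(s\otimes\overline{s})}=\mathcal{N}$. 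Consequently $\rho_{TE}=\ket{\psi}\bra{\psi}$ has support on $\mathcal{N}$, which completes the proof. I would note in passing that the only properties of $\mathcal{S}$ used are that it acts by unitaries and that $\rho_T$ is invariant, so no further structure of the symmetry group is required.
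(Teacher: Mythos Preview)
Your proof is correct and follows essentially the same approach as the paper: both construct the canonical purification (the paper writes it as $\ket{\Phi}=\sum_\lambda\sqrt{\lambda}\sum_{x_\lambda}\ket{x_\lambda}\otimes\ket{x_\lambda}$, which is exactly your $(\sqrt{\rho_T}\otimes\mathbbm{1})\ket{\Omega}$ expressed in the eigenbasis), and both use the ricochet identity to convert $(s\otimes\overline{s})$ into $ss^\dagger=\mathbbm{1}$ on one tensor factor. Your phrasing is marginally cleaner, since applying ricochet on the full $\ket{\Omega}$ and then invoking functional calculus for $s\sqrt{\rho_T}s^\dagger=\sqrt{\rho_T}$ sidesteps the paper's implicit use of the fact that $s$ (and hence $\overline{s}$) preserves each eigenspace, which is what makes the restricted ricochet on $\ket{\Phi_\lambda}$ valid there.
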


\begin{proof}[Proof of Lemma \ref{purif}]
The state $\rho_{T}$ is $\mathcal{S}$-invariant, which means $\rho_{T}=s\rho_{T}s^{\dagger}$ $\forall s \in \mathcal{S}$. 
Let $\Lambda$ be the set of eigenvalues of $\rho_{T}$, and let $\{ \ket{x} \}_{x\in\mathcal{X}}$ be its eigenbasis. For each eigenvalue $\lambda\in\Lambda$, consider the associated eigenvectors $\{ \ket{x_{\lambda}} \}_{x_{\lambda}\in\mathcal{X}_{\lambda}}$ (with $\mathcal{X}_{\lambda} \subseteq \mathcal{X}$). Then, for each $\lambda\in \Lambda$, $\rho_{T} \ket{x_{\lambda}} =\lambda \ket{x_{\lambda}}$ $\forall x_{\lambda}\in\mathcal{X}_{\lambda}$. With these eigenvalues and the eigenbasis, we can write the state as follows:
\begin{equation*}
\rho_{T}=\sum_{\lambda\in \Lambda} \sum_{x_{\lambda}\in\mathcal{X}_{\lambda}} \lambda \ket{x_{\lambda}} \bra{x_{\lambda}}
\end{equation*}

Construct the following state for each $\lambda\in \Lambda$:
\begin{equation*}
\ket{\Phi_{\lambda}} = \sum_{x_{\lambda}\in\mathcal{X}_{\lambda}} \ket{x_{\lambda}} \otimes 
{\ket{x_{\lambda}}}.
\end{equation*}

Then, we claim that $\ket{\Phi}\bra{\Phi} \in\mathfrak{S}( \mathcal{H}_T \otimes \mathcal{H}_E)$ with $\mathcal{H}_E=\mathcal{H}_T$ and with

\begin{equation*}
\ket{\Phi} = \sum_{\lambda\in \Lambda}  \sqrt{\lambda} \ket{\Phi_{\lambda}}
\end{equation*}
is a purification of $\rho_{T} \in \mathfrak{S}( \mathcal{H}_T)$ with the desired characteristics.

To test this claim, we firstly show that $\ket{\Phi}\bra{\Phi}$ is indeed a purification of $\rho_{T}$, which can be shown by a quick calculation:
\begin{equation*}
\trace_{E} \ket{\Phi}\bra{\Phi} =\trace_E ( \sum_{\lambda,\lambda'\in \Lambda} \sqrt{\lambda} \sqrt{\lambda'} \sum_{\substack{x_{\lambda}\in\mathcal{X}_{\lambda} \\ x_{\lambda'}\in\mathcal{X}_{\lambda'}}} (\ket{x_{\lambda}} \otimes 
{\ket{x_{\lambda}}}) (\bra{x_{\lambda'}} \otimes 
{\bra{x_{\lambda'}}}))
\end{equation*}
\begin{equation*}
 =\sum_{\lambda,\lambda'\in \Lambda} \sum_{\substack{x_{\lambda}\in\mathcal{X}_{\lambda} \\ x_{\lambda'}\in\mathcal{X}_{\lambda'}}}
 \sqrt{\lambda} \sqrt{\lambda'} \trace_{E}(\ket{x_{\lambda}} \bra{x_{\lambda'}} \otimes
{\ket{x_{\lambda}}}  
{\bra{x_{\lambda'}}})=\sum_{\lambda\in \Lambda} \sum_{x_{\lambda}\in\mathcal{X}_{\lambda}}
 \lambda \ket{x_{\lambda}} \bra{x_{\lambda}}=\rho_{T}
\end{equation*}

Secondly, we show that $\ket{\Phi}\bra{\Phi}$ has support on $\mathcal{N}$. Since $\ket{\Phi_{\lambda}}$ has a structure similar to a kind of ``maximally entangled state", the following property holds for any operator $A$:
\begin{equation*} \begin{split}
\mathbbm{1}_{T} \otimes A \ket{\Phi_{\lambda}}& =
\mathbbm{1}_{T} \otimes A \sum_{x_{\lambda}\in\mathcal{X}_{\lambda}} \ket{x_{\lambda}} \otimes 
{\ket{x_{\lambda}}}=
\sum_{x_{\lambda}\in\mathcal{X}_{\lambda}} \ket{x_{\lambda}} \otimes 
{A \ket{x_{\lambda}}}
\\&
=\sum_{x_{\lambda} \in\mathcal{X}_{\lambda}} \ket{x_{\lambda}} \otimes 
{\Big(\sum_{y_{\lambda} \in\mathcal{X}_{\lambda}} \ket{y_{\lambda}} \bra{y_{\lambda}}\Big) A\ket{x_{\lambda}}}
\\&
=\sum_{x_{\lambda},y_{\lambda} \in\mathcal{X}_{\lambda}} \ket{x_{\lambda}} \otimes 
\ket{y_{\lambda}} \bra{y_{\lambda}} A\ket{x_{\lambda}}
=\sum_{x_{\lambda},y_{\lambda} \in\mathcal{X}_{\lambda}} \ket{x_{\lambda}} \otimes 
\ket{y_{\lambda}} \bra{x_{\lambda}} A^T\ket{y_{\lambda}}
\\&
=\sum_{y_{\lambda} \in\mathcal{X}_{\lambda}} (\sum_{x_{\lambda} \in\mathcal{X}_{\lambda}}  \ket{x_{\lambda}}\bra{x_{\lambda}}) A^T\ket{y_{\lambda}} \otimes 
\ket{y_{\lambda}}
=A^T \otimes \mathbbm{1}_{E} \sum_{y_{\lambda} \in\mathcal{X}_{\lambda}} \ket{y_{\lambda}} \otimes 
\ket{y_{\lambda}} 
= A^T \otimes \mathbbm{1}_{E} \ket{\Phi_{\lambda}}
\end{split}
\end{equation*}
With this property, we find for symmetry transformations $s\in\mathcal{S}$:

\begin{equation*}
s \otimes \overline{s} \ket{\Phi_{\lambda}}= s(\overline{s})^T \otimes \mathbbm{1}_{E} \ket{\Phi_{\lambda}} = s s^{\dagger} \otimes \mathbbm{1}_{E} \ket{\Phi_{\lambda}}= \ket{\Phi_{\lambda}}
\end{equation*}
since we can assume $s$ to be a unitary representation with $ ss^{\dagger}=\mathbbm{1}_{T}$. By linearity, the above statement is also true for a linear combination of $\ket{\Phi_{\lambda}}$, and thus for $\ket{\Phi}$. In conclusion, $\ket{\Phi}\bra{\Phi}$ is a purification of $\rho_{T}$, and it has support on  $\mathcal{N}$.

\end{proof}
For the permutation group and previous versions of a quantum de Finetti theorem, this lemma and its proof appear in \cite{RennerPhD, Lev16}.

With this established, we can investigate the diamond norm. To begin, we move from arbitrary states to states with support on the $\mathcal{S}$-invariant subspace $\mathcal{N}$. The following lemma implies that it is sufficient to consider such states when computing the diamond norm of a map that is invariant under $\mathcal{S}$.

\begin{repeatedlemma1}

Let $\Delta$ be a linear map from $\operatorname{End}(\mathcal{H}^{\otimes n})$ to $\operatorname{End}(\mathcal{H}')$ that is invariant under the symmetry $\mathcal{S}$.
For any finite-dimensional space $\mathcal{M}$ and any (arbitrary) density operator $\sigma_{TM}$, the following holds:

\begin{equation*}
\big\|  (\Delta \otimes \mathbbm{1})  \sigma_{TM} \big\|_{\tr}
 \leq
 \big\| (\Delta \otimes \mathbbm{1}) \rho_{TE} \big\|_{\tr}
\end{equation*}

where $ \rho_{TE}$  is a state with support on $\mathcal{N}=(\mathcal{H}^{\otimes n} \otimes \mathcal{H}^{\otimes n})^{(s\otimes \overline{s})}$.
\end{repeatedlemma1}

\begin{proof}[Proof of Lemma \ref{lemmasym}]

We introduce an additional space $\mathcal{L}$ with dimension $|\mathcal{S}|$ equal to the number of elements of the symmetry group $\mathcal{S}$. Then, we can write its orthonormal basis as $\{ \ket{s} \}_{s\in\mathcal{S}}$.

Using this, the following transformation of the state is possible:
\begin{equation*}
\big\| (\Delta \otimes \mathbbm{1}_{{M}})  \sigma_{TM} \big\|_{\tr} = \big\| (\Delta \otimes \mathbbm{1}_{ML})  (\sigma_{TM} \otimes \mathbbm{1}_{{L}})  \big\|_{\tr}
\end{equation*}
\begin{equation*}
= \big\| (\Delta \otimes \mathbbm{1}_{ML})  (\sigma_{TM} \otimes \frac{1}{|\mathcal{S}|} \sum_{s\in\mathcal{S}} \ket{s} \bra{s}_{{L}})  \big\|_{\tr}
= \big\|   \frac{1}{|\mathcal{S}|} \sum_{s\in\mathcal{S}} (\Delta \otimes \mathbbm{1}_{ML}) (\sigma_{T{M}} \otimes \ket{s} \bra{s}_{{L}})  \big\|_{\tr}
\end{equation*}

Since every $s$ is trace non-increasing, it can be inserted in the following way:

\begin{equation*}
\big\|   \frac{1}{|\mathcal{S}|} \sum_{s\in\mathcal{S}} (\Delta \otimes \mathbbm{1}_{ML}) (\sigma_{TM} \otimes \ket{s} \bra{s}_{{L}})  \big\|_{\tr} =\big\|   \frac{1}{|\mathcal{S}|} \sum_{s\in\mathcal{S}} (s \circ \Delta \otimes \mathbbm{1}_{ML}) (\sigma_{TM} \otimes \ket{s} \bra{s}_{L})  \big\|_{\tr}
\end{equation*}

By assumption, the CPTP map $\Delta$ is invariant under the symmetry $\mathcal{S}$, meaning $\Delta=s \circ \Delta \circ s^{\dagger}\  \forall s \in \mathcal{S}$. Thus, equivalently, $s \circ \Delta= \Delta \circ s$. Therefore:
\begin{equation*}
\big\|   \frac{1}{|\mathcal{S}|} \sum_{s\in\mathcal{S}} (s \circ \Delta \otimes \mathbbm{1}_{ML}) (\sigma_{TM} \otimes \ket{s} \bra{s}_{{L}})  \big\|_{\tr}
= \big\|   \frac{1}{|\mathcal{S}|} \sum_{s\in\mathcal{S}} (\Delta \circ s \otimes \mathbbm{1}_{ML}) (\sigma_{TM} \otimes \ket{s} \bra{s}_{{L}})  \big\|_{\tr}
\end{equation*}

\begin{equation*}
= \big\| (\Delta \otimes \mathbbm{1}_{ML})  \frac{1}{|\mathcal{S}|} \sum_{s\in\mathcal{S}}  (s \otimes \mathbbm{1}_{ML}) (\sigma_{TM} \otimes \ket{s} \bra{s}_{{L}})  \big\|_{\tr} = \big\|  (\Delta \otimes \mathbbm{1}_{ML})  (\sigma_{TML})  \big\|_{\tr}
\end{equation*}
Thus we have constructed a state $\sigma_{TML}$. 
For this state, its marginal $\rho_{T}=\trace_{{M} {L} } (\sigma_{TML})$ is invariant under all $s$, thus invariant under the symmetry group $\mathcal{S}$ (by construction).

For any state $\rho_{T}$ that is invariant under the symmetry group $\mathcal{S}$, Lemma \ref{purif} states that there exists a purification $\rho_{TE}\in \mathfrak{S}(\mathcal{H}_T\otimes \mathcal{H}_E)$, $\mathcal{H}_E=\mathcal{H}_T'$ with support on $\mathcal{N}$.

All purifications are equal up to an isometry, and thus there exists a CPTP map $\mathcal{Z}: \operatorname{End}(\mathcal{H}^{\otimes n}) \rightarrow \operatorname{End}(\mathcal{M}\mathcal{L})$ such that $\sigma_{TML}=(\mathbbm{1}_{T} \otimes Z) \rho_{TE}$.

Using this and the fact that $\mathcal{Z}$ is trace-preserving (and thus trace-nonincreasing), we find:

\begin{equation*}
\big\|  (\Delta \otimes \mathbbm{1}_{ML})  (\sigma_{TML})  \big\|_{\tr} =\big\|  (\Delta \otimes \mathbbm{1}_{ML}) (\mathbbm{1}_{T} \otimes \mathcal{Z}) \rho_{TE}  \big\|_{\tr} \leq  \big\| (\Delta \otimes \mathbbm{1}_{ML}) \rho_{TE}  \big\|_{\tr}
\end{equation*}

In summary:
\begin{equation*}
\big\|  (\Delta \otimes \mathbbm{1}_{{M}})  \sigma_{TM} \big\|_{\tr} \leq  \big\| \  (\Delta \otimes \mathbbm{1}_{E}) \rho_{TE}  \big\|_{\tr}
\end{equation*}


\end{proof}


As a second step in bounding the diamond norm, we establish a connection between states with support on the $\mathcal{S}$-invariant subspace $ (\mathcal{H}^{\otimes n})^s$, and a purification $\tau_{TEN}$ of our specific de Finetti state $\tau_{T}$ (\ref{deFinettistate}).

Here, the specific structure of  $\tau_{T}$ becomes important and  the assumption of resolution of identity for $\tau_{TE}$, as discussed in Section \ref{appendix-resolutionofidentity}, is required.

\begin{repeatedlemma2}
Suppose we have a state $\rho_{TE}$ with support on the subspace $\mathcal{N}=  (\mathcal{H}^{\otimes n} \otimes \mathcal{H}^{\otimes n})^{(s\otimes\overline{s})}\subseteq \mathcal{H}^{\otimes n} \otimes \mathcal{H}^{\otimes n}$. For any such state, there exists a linear completely positive trace-nonincreasing map $\mathcal{C}: \operatorname{End}(\mathcal{N}) \to \mathbbm{C}$ such that
\begin{equation*}
\rho_{TE}= g_{n,d} (\mathbbm{1}_{TE} \otimes \mathcal{C}) (\tau_{TEN})
\end{equation*}
with $\operatorname{tr}_{\mathcal{N}} \tau_{TEN} = \tau_{TE}=\frac{1}{g_{n,d}} \mathbbm{1}_{ N}$ and $g_{n,d}=\dim(\mathcal{N}) $.
\end{repeatedlemma2}

\begin{proof}[Proof of Lemma \ref{lemma2}]
Let $\{ \ket{i} \}_i$ be an eigenbasis of $\rho_{TE}$.
Since $\tau_{TEN}$ is a purification of $ \tau_{TE} \propto \mathbbm{1}_{ TEN }$, it can be written as $\tau_{TEN}=\ket{\Psi}\bra{\Psi} $ with a pure state $\ket{\Psi}= \frac{1}{\sqrt{g_{n,d}}} \sum_i \ket{i}\otimes\ket{i}$ with 
\begin{equation} 
g_{n,d}=\dim (\mathcal{H}^{\otimes n} \otimes \mathcal{H}^{\otimes n})^{(s\otimes\overline{s})}=\dim(\mathcal{N}).
 \label{dimN}
\end{equation}

Let $\mathcal{C}: \sigma_{\mathcal{N}} \mapsto \operatorname{tr}(\sigma_{\mathcal{N}} \rho^T_\mathcal{N}$). Then we find
\begin{equation*}
g_{n,d} (\mathbbm{1}_{TE} \otimes \mathcal{C}) (\tau_{TEN})= (\mathbbm{1}_{TE} \otimes \mathcal{C}) \sum_{i,j} (\ket{i}\otimes\ket{i}) (\bra{j}\otimes\bra{j})=\sum_{i,j} (\ket{i} \bra{j}\otimes \mathcal{C}(\ket{i} \bra{j})
\end{equation*}
\begin{equation*}
= \sum_{i,j} (\ket{i} \bra{j}\otimes \operatorname{tr}(\ket{i} \bra{j}  \rho^T_{\mathcal{N}})
=\sum_{i,j} \ket{i} \bra{j} ( \bra{j}  \rho^T_{\mathcal{N}}\ket{i})=\sum_{i,j} \ket{i} \bra{j}  (\rho^T_{\mathcal{N}})_{j,i} =\sum_{i,j} \ket{i} \bra{j}\otimes  (\rho_{TE})_{i,j} =\rho_{TE}.
\end{equation*}

Thus it is demonstrated that for any $\rho_{TE}$ on $ (\mathcal{H}^{\otimes n} \otimes \mathcal{H}^{\otimes n})^{(s\otimes\overline{s})}$, a map $\mathcal{C}$ exists such that the Lemma holds. 

\end{proof}

\begin{remark}
This proof can be interpreted as a teleportation. Since $ \tau_{TE} \propto \mathbbm{1}_{ N}$, its purification will be a maximally entangled state in the corresponding space. This maximally entangled state can then be used as a resource to teleport the state $\rho$ from $\mathcal{N}$ to $\mathcal{H}^{\otimes n} \otimes \mathcal{H}^{\otimes n}$.
\end{remark}







\chapter{Proof of the de Finetti Theorem with Linear Constraints}
 \label{appendix-proofforhierarchy}

In this part of the appendix, the de Finetti theorem with linear constraints from Section \ref{section-stabdefinettithm} will be proven:

\begin{repeatedtheorem}[De Finetti Theorem with Linear Constraints] 
Let $\rho_{AB_1^n}$ be a quantum state that is permutation invariant with respect to permutations of the $n$ subsystems $B_1^n$, let $\Lambda_{A\rightarrow C_A}$ and $\Gamma_{B\rightarrow C_B}$ be linear maps, and $X_{C_A}$ and $Y_{C_B}$ be operators such that the following two linear constraints hold:
\[ \Lambda_{A\rightarrow C_A}(\rho_{AB_1^n})=X_{C_A}\otimes \rho_{B_1^n} ,\]
\[ \Gamma_{B_{n} \rightarrow C_B}(\rho_{B_1^n})=\rho_{B_1^{n-1}} \otimes Y_{C_B}.\]
Then, there exists an $m\in [0,n-1]$ and
 a probability distribution $\{p_Z(z_1^m)\}_{z_1^m\in Z}$ such that 
\[\Big\| \rho_{AB_{m+1}} - \sum_{z_1^m} p_Z(z_1^m) \rho_{A|z_1^m} \otimes \rho_{B_{m+1}|z_1^m} \Big\|_{\tr} \leq\epsilon(d_B,d_A,n)\]
with \[\epsilon(d_B,d_A,n):=\min\Big\{d_B^2(d_B+1), 18\sqrt{d_A d_B}\Big\} \sqrt{\frac{2\ln(2)\ln(d_A)}{n}} \]
and \[\Lambda_{A\rightarrow C_A}(\rho_{A|z_1^m})=X_{C_A},\ \Gamma_{B_{m+1} \rightarrow C_B}(\rho_{B_{m+1}|z_1^m})= Y_{C_B}.\]
\end{repeatedtheorem}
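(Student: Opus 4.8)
The plan is to prove this by the information-theoretic (``measured'') de Finetti route, in which the permutation symmetry of $\rho_{AB_1^n}$ is converted into a smallness statement about conditional mutual information after measuring part of the symmetric register. First I would fix an informationally complete POVM $\{M_z\}$ on a single $B$-system and apply it to all $n$ copies $B_1^n$, producing classical registers $Z_1^n$ and a state $\rho_{AZ_1^n}$ which, by permutation invariance, is symmetric under permutations of the $Z_i$. The product structure we ultimately want on $B_{m+1}$ will be obtained by first proving a product structure on the classical register $Z_{m+1}$ and then transporting it back to the quantum system.

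The quantitative core is a chain-rule-plus-Pinsker argument. Since $Z$ is classical, $I(A:Z_1^n)_\rho \le S(A) \le \log d_A$, and the chain rule gives $I(A:Z_1^n) = \sum_{m=0}^{n-1} I(A:Z_{m+1}\mid Z_1^m)$; hence some $m\in[0,n-1]$ satisfies $I(A:Z_{m+1}\mid Z_1^m) \le \tfrac{1}{n}\log d_A$. Writing this conditional mutual information as the average over outcomes $z_1^m$, with weights $p_Z(z_1^m)$ equal to the measurement statistics on $B_1^m$, of $I(A:Z_{m+1})_{\rho_{\cdot|z_1^m}}$, and applying Pinsker's inequality together with concavity of the square root, I obtain $\sum_{z_1^m} p_Z(z_1^m)\big\| \rho_{AZ_{m+1}|z_1^m} - \rho_{A|z_1^m}\otimes \rho_{Z_{m+1}|z_1^m}\big\|_{\tr} \le \sqrt{2\ln(2)\ln(d_A)/n}$, which is the base factor in $\epsilon$ (the exact constant is routine once entropies are kept in bits).

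The hard part will be lifting this measured closeness back to the quantum system $B_{m+1}$ while controlling the dimension prefactor, and this is exactly where the minimum of the two constants enters. Since $\operatorname{id}_A\otimes M_{B_{m+1}}$ sends $\rho_{AB_{m+1}|z_1^m}$ to $\rho_{AZ_{m+1}|z_1^m}$ and maps product states to product states, informational completeness supplies a linear reconstruction inverse $R$ with $R\circ M=\operatorname{id}$, whose trace-norm distortion, stable under tensoring with $\operatorname{id}_A$, is $d_B^2(d_B+1)$ for the chosen POVM; multiplying the measured bound by this distortion gives the first term in the min. The second term $18\sqrt{d_Ad_B}$ I would import from the sharper local-measurement de Finetti estimate (of Brandão--Harrow type, using a $2$-design measurement and the symmetric-subspace dimension), and then simply take whichever of the two arguments yields the smaller bound, fixing the corresponding $m$.

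Finally I would verify the two linear constraints, which is the conceptually essential but computationally light step that motivates the tensored form of the hypotheses. Applying $\Lambda_{A\to C_A}$ to $\rho_{A|z_1^m}$ and using $\Lambda(\rho_{AB_1^n}) = X_{C_A}\otimes\rho_{B_1^n}$: after tracing out $B_{m+1}^n$ the constraint reads $\Lambda(\rho_{AB_1^m}) = X_{C_A}\otimes\rho_{B_1^m}$, so measuring $B_1^m$ and normalizing by $p_Z(z_1^m)=\tr[\Pi_{z_1^m}\rho_{B_1^m}]$ cancels the $B_1^m$-factor and leaves $\Lambda(\rho_{A|z_1^m}) = X_{C_A}$. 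The analogous identity $\Gamma_{B_{m+1}\to C_B}(\rho_{B_{m+1}|z_1^m}) = Y_{C_B}$ follows from permutation symmetry of $\rho_{B_1^n}$ together with $\Gamma(\rho_{B_1^n}) = \rho_{B_1^{n-1}}\otimes Y_{C_B}$, since the measurement acts only on $B_1^m$ and leaves $B_{m+1}$ untouched. It is the $X_{C_A}\otimes\rho_{B_1^n}$ form of the constraint, rather than a localized version, that makes these cancellations exact, precisely as stressed in the Remark.
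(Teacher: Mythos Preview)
Your proposal is correct and follows essentially the same route as the paper's proof: the chain-rule/Pinsker argument on the measured classical registers (the paper's Lemma~\ref{lemma-expectationvalue}), lifted back to the quantum $B_{m+1}$ system via two measurement-distortion bounds (Lemmas~\ref{lemma-distortion1} and~\ref{lemma-distortion2}) whose minimum gives the prefactor, followed by exactly the same verification of the linear constraints. One minor remark: you have the attributions of the two distortion constants swapped---the $d_B^2(d_B+1)$ factor is the one coming from the $2$-design measurement on $B$ alone (Lemma~\ref{lemma-distortion2}), while $18\sqrt{d_Ad_B}$ comes from a product measurement $M_A\otimes M_B$ on both systems (Lemma~\ref{lemma-distortion1}, from Brand\~ao--Harrow).
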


To prove this theorem, there are three helpful preliminary lemmata. All of these lemmata are given and proven in either \cite{BBFS18} or \cite{BH16}. However, the first lemma, Lemma \ref{lemma-expectationvalue}, is stated to be requiring permutation invariance, which we have found to be an unnecessary assumption that can be omitted. We give here a more detailed proof than the original.

The first lemma connects the entire system's quantum state $\rho_{AZ_1^n}$ with post-measurement states (conditioned on some measurement outcomes), and provides a bound on the expectation value of their trace distance.

\begin{lemma}[See \cite{BBFS18}, Lemma 3.1]
\label{lemma-expectationvalue}
Let $\rho_{AZ_1^n}$ be a quantum state with the $Z_1^n$ systems classical. Then, there exists $0\leq m<n-1$, such that
\[  \mathbb{E}_{z_1^m} \Big\{ \Big\|\rho_{AZ_{m+1}|z_1^m} - \rho_{A|z_1^m}\otimes \rho_{Z_{m+1}|z_1^m} \Big\|_{\tr}^2\Big\}  \leq \frac{2\ln(2) \ln(d_A)}{n}.\]
\end{lemma}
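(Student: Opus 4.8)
The plan is to prove this as a purely information-theoretic de Finetti bound, assembled from the chain rule for quantum mutual information together with Pinsker's inequality; crucially, this argument never invokes permutation invariance of $\rho_{AZ_1^n}$, which is precisely why the symmetry hypothesis stated in \cite{BBFS18} can be dropped. Throughout I would work with the mutual information $I(A:Z_1^m) = H(A) + H(Z_1^m) - H(AZ_1^m)$ measured in bits, and with the conditional mutual information $I(A:Z_{m+1}|Z_1^m)$ defined in the usual way.

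First I would bound the total correlation between $A$ and all the classical registers. Since each $Z_i$ is classical, the conditional entropy $H(A|Z_1^n) = \sum_{z_1^n} p(z_1^n)\, H(A|z_1^n)$ is a convex mixture of von Neumann entropies of genuine states on $A$ and is therefore nonnegative, while $H(A) \le \log_2 d_A$. Hence $I(A:Z_1^n) = H(A) - H(A|Z_1^n) \le \log_2 d_A$. This is the only place the dimension $d_A$ enters, and it relies solely on the classicality of the $Z$ registers, \emph{not} on any symmetry.

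Next I would apply the chain rule $I(A:Z_1^n) = \sum_{m=0}^{n-1} I(A:Z_{m+1}|Z_1^m)$. All $n$ summands are nonnegative and sum to at most $\log_2 d_A$, so at least one of them lies below the average: there exists $m \in [0,n-1]$ with $I(A:Z_{m+1}|Z_1^m) \le \frac{\log_2 d_A}{n}$. I would then rewrite the conditional mutual information as $I(A:Z_{m+1}|Z_1^m) = \mathbb{E}_{z_1^m}\big\{ D\big(\rho_{AZ_{m+1}|z_1^m} \,\big\|\, \rho_{A|z_1^m} \otimes \rho_{Z_{m+1}|z_1^m}\big) \big\}$, each relative entropy being taken in bits.

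Finally I would invoke Pinsker's inequality state by state. In the convention $\|\cdot\|_{\tr} = \tfrac12\|\cdot\|_1$ used in this paper, Pinsker reads $D(\omega\|\eta) \ge \frac{2}{\ln 2}\|\omega-\eta\|_{\tr}^2$, so each conditional state obeys $\big\|\rho_{AZ_{m+1}|z_1^m} - \rho_{A|z_1^m}\otimes\rho_{Z_{m+1}|z_1^m}\big\|_{\tr}^2 \le \tfrac{\ln 2}{2}\, D(\cdots)$. Averaging over $z_1^m$ and substituting the bound on the conditional mutual information yields $\mathbb{E}_{z_1^m}\{\|\cdots\|_{\tr}^2\} \le \tfrac{\ln 2}{2}\cdot\tfrac{\log_2 d_A}{n} = \tfrac{\ln d_A}{2n}$, comfortably inside the stated $\tfrac{2\ln 2\,\ln d_A}{n}$. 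The main point requiring care is the bookkeeping of constants — the logarithm base in the relative entropy, the Pinsker constant, and the factor $\tfrac12$ in the trace-norm convention — and, for the purpose of the accompanying remark, making explicit that none of the three steps (the classicality bound, the chain rule, or Pinsker) uses invariance of $\rho_{AZ_1^n}$ under permutations of the $Z_i$.
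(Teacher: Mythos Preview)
Your proposal is correct and follows the same route as the paper: bound $I(A:Z_1^n)=D(\rho_{AZ_1^n}\|\rho_A\otimes\rho_{Z_1^n})\le\log d_A$, telescope via the chain rule to find an index with small $I(A:Z_{m+1}|Z_1^m)=\mathbb{E}_{z_1^m}D(\rho_{AZ_{m+1}|z_1^m}\|\rho_{A|z_1^m}\otimes\rho_{Z_{m+1}|z_1^m})$, and apply Pinsker termwise. The paper carries out the middle identification by an explicit trace expansion rather than invoking conditional mutual information by name, but the argument is identical; your constant bookkeeping is in fact tighter than the paper's (its Pinsker line drops a factor coming from the $\tfrac12$ in the trace-norm convention).
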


\begin{proof}[Proof of Lemma \ref{lemma-expectationvalue}]
Using the fact that the quantum relative entropy of two quantum states is bounded by the dimension of a subsystem, we can find a bound on the expectation value of quantum relative entropy of the quantum state $\rho_{AZ_1^n}$ and a separable, conditional state $\rho_{A|z_1^m}\otimes \rho_{Z_{m+1}|z_1^m} $, before employing Pinzker's inequality to relate this to a bound on the distance of these two quantum states.

Quantum relative entropy is defined for arbitrary quantum states $\rho$ and $\sigma$ in the following way:
\[D(\rho|| \sigma) := \Tr\big(\rho(\log(\rho)-\log(\sigma)\big)\]
and is bound by the dimensions of the systems.

We compare the quantum state $\rho_{AZ_1^n}$ and a related separable state $ \Tr_{Z_1^n}(\rho_{AZ_1^n}) \otimes \Tr_A(\rho_{AZ_1^n})= \rho_A \otimes \rho_{Z_1^n}$, to find that the following always holds:
\[ D(\rho_{AZ_1^n} \| \rho_A \otimes \rho_{Z_1^n}) \leq \log(d_A) .\]

The quantum relative entropy above can be recast as a sum over the quantum relative entropy of different subsystems.
\[D(\rho_{AZ_1^n} \| \rho_A \otimes \rho_{Z_1^n})=\sum_{m=0}^{n-1} \Big( D(\rho_{AZ_1^{m+1}} \| \rho_A \otimes \rho_{Z_1^{m+1}})- D(\rho_{AZ_1^m} \| \rho_A \otimes \rho_{Z_1^m}) \Big) \leq \log(d_A)\]
which can be related to a bound on one individual term of the sum: Because the $Z$-systems are classical, each term is positive, and therefore there exists an $m\in[0,n-1]$ such that
\[D(\rho_{AZ_1^{m+1}} \| \rho_A \otimes \rho_{Z_1^{m+1}})- D(\rho_{AZ_1^m} \| \rho_A \otimes \rho_{Z_1^m})\leq \frac{\log(d_A)}{n}.
\]
Using the definition of the quantum relative entropy, we can write this out in terms of traces:
\smallskip\noindent
\begin{align*}
&
D(\rho_{AZ_1^{m+1}} \| \rho_A \otimes \rho_{Z_1^{m+1}})- D(\rho_{AZ_1^m} \| \rho_A \otimes \rho_{Z_1^m}) 
\\
&= \Tr \Big(\rho_{AZ_1^{m+1}} (\log(\rho_{AZ_1^{m+1}}) -\log(\rho_{A} \otimes \rho_{Z_1^{m+1}})\Big) - \Tr \Big(\rho_{AZ_1^{m}} (\log(\rho_{AZ_1^{m}}) -\log(\rho_{A} \otimes \rho_{Z_1^{m}})\Big) 
\\
&=
\Tr \Big(\rho_{AZ_1^{m+1}} \big(\log(\rho_{AZ_1^{m+1}}) -\log(\rho_{A} \otimes \rho_{Z_1^{m+1}} )- \log(\rho_{AZ_1^{m}})\otimes \mathbbm{1}_{Z_{m+1}} -\log(\rho_{A} \otimes \rho_{Z_1^{m}})\otimes \mathbbm{1}_{Z_{m+1}} \big) \Big)
\end{align*}
Now, two characteristics of the matrix logarithm are needed: On the one hand, if two matrices $A$ and $B$ commute, their logarithm is additive: $\log(AB)= \log(A)+\log(B)$. On the other hand, $\log(A\otimes \mathbbm{1})=\log(A)\otimes \mathbbm{1}$. Using these two identities, we obtain 
\smallskip\noindent
\begin{align*}
&D(\rho_{AZ_1^{m+1}} \| \rho_A \otimes \rho_{Z_1^{m+1}})- D(\rho_{AZ_1^m} \| \rho_A \otimes \rho_{Z_1^m})  = \\&
\Tr \Big(\rho_{AZ_1^{m+1}} \big(\log(\rho_{AZ_1^{m+1}}) -\log(\rho_{A} )\otimes \mathbbm{1}_{Z_1^{m+1}} -
\mathbbm{1}_{A} \otimes \log(\rho_{Z_1^{m+1}}) - \log(\rho_{AZ_1^{m}})\otimes \mathbbm{1}_{Z_{m+1}}
 \\& \ \ \ \ \ \ \ \ \ \ \ \
-\log(\rho_{A}) \otimes \mathbbm{1}_{Z_1^{m+1}} -
\mathbbm{1}_{A} \otimes \log(\rho_{Z_1^{m}})\otimes \mathbbm{1}_{Z_{m+1}} 
\big) \Big).
\end{align*}
Some of these terms cancel. In addition, we can make use of the fact that the $Z$ systems are classical, which means the state can be written as
\[\rho_{AZ_1^{m+1}} =  \sum_{z_1^{m+1}} p_Z(z_1^{m+1}) \rho_{A|z_1^{m+1}}\otimes \ket{z_1^m} \bra{z_1^m} = \sum_{z_1^{m}}  p_Z(z_1^{m}) \ket{z_1^m} \bra{z_1^m} \otimes \rho_{AZ_{m+1}|z_1^{m+1}},
\]
and its logarithm is
\[\log(\rho_{AZ_1^{m+1}})  = \sum_{z_1^{m}} \ket{z_1^m} \bra{z_1^m} \otimes \big(\log( p_Z(z_1^{m})) \mathbbm{1}_{AZ_{m+1}} + p_Z(z_1^{m}) \log(\rho_{AZ_{m+1}|z_1^m}) \big).\]
Inserting this, we arrive at the following:
\smallskip\noindent
\begin{align*}
&
D(\rho_{AZ_1^{m+1}} \| \rho_A \otimes \rho_{Z_1^{m+1}})- D(\rho_{AZ_1^m} \| \rho_A \otimes \rho_{Z_1^m}) 
\\
& =\Tr \bigg( \Big(
\sum_{z_1^{m+1}} p_Z(z_1^{m+1}) \ket{z_1^m} \bra{z_1^m} \otimes \rho_{AZ_{m+1}|z_1^{m+1}}\Big) \Big( 
\sum_{\tilde{z}_1^{m+1}} \ket{\tilde{z}_1^m} \bra{\tilde{z}_1^m} \otimes \big(
\log(p_{Z}(\tilde{z}_1^m) \otimes \mathbbm{1}_{AZ_{m+1}} 
\\&  \ \ \ \ \ \ \ \ \ \ \ \
+  p_Z(\tilde{z}_1^{m}) \log(\rho_{AZ_{m+1}|\tilde{z}_1^m}) - \log(p_{Z}(\tilde{z}_1^m) \otimes \mathbbm{1}_{AZ_{m+1}}  - p_Z(\tilde{z}_1^{m}) \mathbbm{1}_A \otimes \log(\rho_{Z_{m+1}|\tilde{z}_1^m})\\&  \ \ \ \ \ \ \ \ \ \ \ \
- \log(p_{Z}(\tilde{z}_1^m) \otimes \mathbbm{1}_{AZ_{m+1}}  - p_Z(\tilde{z}_1^{m})  \log(\rho_{A|\tilde{z}_1^m}) \otimes \mathbbm{1}_{Z_{m+1}}
+\log(p_{Z}(\tilde{z}_1^m) \otimes \mathbbm{1}_{AZ_{m+1}} \\&  \ \ \ \ \ \ \ \ \ \ \ \
+p_Z(\tilde{z}_1^{m}) \mathbbm{1}_A \otimes \log(\mathbbm{1}_{AZ_{m+1}})
\big)
 \Big)
\bigg)
\\&
=\Tr \bigg( 
\sum_{z_1^{m+1}} p(z_{m+1}|z_{1}^{m}) \ket{z_1^m} \bra{z_1^m} \otimes \Big(\rho_{AZ_{m+1}|z_1^{m+1}} \big(
  p_Z(z_1^{m}) \log(\rho_{AZ_{m+1}|z_1^m}) -  p_Z(z_1^{m}) \mathbbm{1}_A \otimes \log(\rho_{Z_{m+1}|z_1^m})\\&  \ \ \ \ \ \ \ \ \ \ \ \
 - p_Z(z_1^{m})  \log(\rho_{A|z_1^m}) \otimes \mathbbm{1}_{Z_{m+1}}
\big)
 \Big)
\bigg)
\\&
=\Tr \bigg(
\sum_{z_1^{m}} p_Z(z_1^{m+1}) p_Z(z_1^{m})
\ket{z_1^m} \bra{z_1^m} \otimes \Big( \rho_{AZ_{m+1}|z_1^{m+1}}  \big(
 \log(\rho_{AZ_{m+1}|z_1^m}) -  \log(\rho_{A|z_1^m} \otimes \rho_{Z_{m+1}|z_1^m})
\big)
 \Big)
\bigg)
\\&
= 
\sum_{z_1^{m}}  p_Z(z_1^{m}) \Tr \bigg( \sum_{z_{m+1}}
p_Z(z_1^{m+1}) p_Z(z_1^{m})\ket{z_1^m} \bra{z_1^m} \otimes \Big( \rho_{AZ_{m+1}|z_1^{m+1}}  \big(
 \log(\rho_{AZ_{m+1}|z_1^m}) -  \log(\rho_{A|z_1^m} \otimes \rho_{Z_{m+1}|z_1^m})
\big)
\Big)
\bigg)
\\&
= 
\sum_{z_1^{m}}  p_Z(z_1^{m}) \Tr \Big( \rho_{AZ_{m+1}|z_1^{m+1}}  \big(
 \log(\rho_{AZ_{m+1}|z_1^m}) -  \log(\rho_{A|z_1^m} \otimes \rho_{Z_{m+1}|z_1^m})
\big)
\Big)
\\&
=\mathbb{E}_{z_1^m} \Big\{ D( \rho_{AZ_{m+1}|z_1^m} \| \rho_{A|z_1^m}\otimes \rho_{Z_{m+1}|z_1^m}) \Big\} \leq \frac{\log(d_A)}{n}
\end{align*}
Thereby, we obtain a bound on the quantum relative entropy of the states we want to compare.
Now, we can use Pinsker's inequality \cite{BH16} to relate the quantum relative entropy to the trace distance of the states via
\[D(\rho||\sigma)\geq \frac{1}{2\ln(2)} \big( \Tr (|\sigma-\rho|) \big)^2 =  \frac{1}{2\ln(2)} \| \rho-\sigma\|_{\tr}^2 . \]
Then, we obtain the statement of the claim for a distance between density operators.
\end{proof}


To facilitate the use of Lemma \ref{lemma-expectationvalue}, we need a means to connect the physical systems $B_1^n$ to classical systems $Z_1^n$, which is done via measurement of the physical systems. There are two competing strategies to move between the systems, which relate to different changes in the trace distance (called measurement distortion).

\begin{lemma}[See \cite{BBFS18}, Lemma 3.2, and \cite{BH16}, Lemma 16]
\label{lemma-distortion1}
There\ exists\ a\ product\ mea-surement $M_A\otimes M_B$ with finitely many outcomes such that for any Hermitian and traceless operator $\xi_{AB}$, we have
\[ \Big\| (M_A\otimes M_B) \xi_{AB}\Big\|_{\tr} \geq  \frac{1}{18\sqrt{d_Ad_B}} \Big\|\xi_{AB} \Big\|_{\tr}.\]
\end{lemma}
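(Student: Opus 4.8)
The plan is to exhibit an explicit product measurement and bound its trace-norm distortion directly, rather than argue abstractly about the existence of some good measurement. The natural candidate is the product of the \emph{uniform} (Haar) POVMs on $A$ and on $B$, with elements proportional to $\ket{\psi}\bra{\psi}$ and $\ket{\phi}\bra{\phi}$; to meet the finitely-many-outcomes requirement I would replace the Haar measures by complex-projective $4$-designs on each factor, which reproduce the first four moments exactly while having finite support. Since both sides of the claim carry the same normalisation $\|\cdot\|_{\tr}=\tfrac12\|\cdot\|_1$, it suffices to argue with $\|\cdot\|_1$. Writing $h(\psi,\phi)=\bra{\psi}\otimes\bra{\phi}\,\xi_{AB}\,\ket{\psi}\otimes\ket{\phi}$ for the signed outcome density, the object to control is $\|(M_A\otimes M_B)\xi_{AB}\|_1=d_Ad_B\,\mathbb{E}_{\psi,\phi}\lvert h(\psi,\phi)\rvert$, and the target reduces to lower-bounding this in terms of the Hilbert--Schmidt norm $\|\xi_{AB}\|_2$; the elementary inequality $\|\xi_{AB}\|_1\le\sqrt{d_Ad_B}\,\|\xi_{AB}\|_2$ then supplies the dimensional factor $\sqrt{d_Ad_B}$.

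The engine of the $L^1$-versus-$L^2$ bound is a reverse Hölder (Paley--Zygmund) step: with conjugate exponents $(p,q)=(3/2,3)$ one gets $\mathbb{E}[h^2]\le(\mathbb{E}\lvert h\rvert)^{2/3}(\mathbb{E}[h^4])^{1/3}$, hence $\mathbb{E}\lvert h\rvert\ge(\mathbb{E}[h^2])^{3/2}/(\mathbb{E}[h^4])^{1/2}$. The second moment I would evaluate with the $2$-design identity $\mathbb{E}\ket{\psi}\bra{\psi}^{\otimes2}=(\mathbbm{1}+F)/\big(d(d+1)\big)$ on each factor, which writes $\mathbb{E}[h^2]$ as a sum of $\tr(\xi_{AB}^2)$ and partial-trace terms over $d_A(d_A+1)\,d_B(d_B+1)$; here tracelessness kills the $(\tr\xi_{AB})^2$ contribution and the remaining cross terms are nonnegative, so $\mathbb{E}[h^2]\ge\|\xi_{AB}\|_2^2/\big(4d_A^2d_B^2\big)$. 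The fourth moment I would expand with the Haar ($4$-design) formula on each factor, a sum over $S_4\times S_4$ of traces $\tr[(P_{\pi_A}\otimes P_{\pi_B})\xi_{AB}^{\otimes4}]$ divided by $(d_A)_4(d_B)_4$; bounding each such trace by $\|\xi_{AB}\|_2^4$ via Cauchy--Schwarz yields $\mathbb{E}[h^4]\le C'\|\xi_{AB}\|_2^4/\big(d_A^4d_B^4\big)$. Feeding these into the reverse-Hölder bound gives $\mathbb{E}\lvert h\rvert\gtrsim\|\xi_{AB}\|_2/(d_Ad_B)$ up to an absolute constant, and multiplying by $d_Ad_B$ produces $\|(M_A\otimes M_B)\xi_{AB}\|_1\ge\|\xi_{AB}\|_2/c$ with $c$ universal.

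The hard part is entirely in the constants of the fourth-moment step. The $S_4\times S_4$ expansion carries many terms, and to land on the clean factor $18$ (i.e.\ $c\sqrt{d_Ad_B}=18\sqrt{d_Ad_B}$ after combining with $\|\xi_{AB}\|_1\le\sqrt{d_Ad_B}\|\xi_{AB}\|_2$) one must estimate the product permutation traces carefully rather than by crude term-counting, distinguishing which $\tr[(P_{\pi_A}\otimes P_{\pi_B})\xi_{AB}^{\otimes4}]$ are genuinely of order $\|\xi_{AB}\|_2^4$ and which are suppressed by extra factors of $1/d$. A secondary point needing care is confirming that the partial-trace cross terms in $\mathbb{E}[h^2]$ are nonnegative, so that dropping them is legitimate; this follows by rewriting each as the trace of a square of a partial contraction of $\xi_{AB}$. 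Once the absolute constant from these two moment computations is pinned down, the stated inequality with factor $1/(18\sqrt{d_Ad_B})$ follows, and I would quote the single-system data-hiding bound underlying \cite{BH16} as a consistency check on the constant.
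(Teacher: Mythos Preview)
The paper does not give its own proof of this lemma: it simply records the statement and writes ``This lemma has been proven in \cite{BH16}, Section~3.1.'' So there is no in-paper argument to compare against; you have supplied a proof where the thesis defers entirely to the literature.

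Your route---uniform POVMs on each factor (discretised to finite $4$-designs), then a Paley--Zygmund/reverse H\"older bound $\mathbb{E}|h|\ge(\mathbb{E}h^2)^{3/2}/(\mathbb{E}h^4)^{1/2}$, second moment via the $2$-design identity, fourth moment via the $S_4\times S_4$ expansion, and finally $\|\xi\|_1\le\sqrt{d_Ad_B}\,\|\xi\|_2$---is a standard and sound strategy for measurement-distortion bounds of this type, and is in fact the moral content of the argument in the cited references. Two points deserve flagging. First, your crude fourth-moment bound, i.e.\ upper bounding every $\tr[(P_{\pi_A}\otimes P_{\pi_B})\xi^{\otimes4}]$ by $\|\xi\|_2^4$ and then counting $24^2$ terms against $(d_A)_4(d_B)_4$, will give a universal constant but not $18$; to hit the stated constant one must exploit that many of those terms vanish (tracelessness kills all permutations with a fixed point on either side) or carry extra $1/d$ suppressions, exactly as you acknowledge. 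Second, the nonnegativity of the cross terms in $\mathbb{E}[h^2]$ does follow from writing them as $\tr\big((\tr_A\xi)^2\big)$, $\tr\big((\tr_B\xi)^2\big)$, so your lower bound $\mathbb{E}[h^2]\ge\|\xi\|_2^2/(d_A(d_A{+}1)d_B(d_B{+}1))$ is correct (your stated $4d_A^2d_B^2$ in the denominator is a harmless relaxation). With those caveats your sketch is a correct proof outline; the thesis itself relies on the citation rather than reproducing any of this.
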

This lemma has been proven in \cite{BH16}, Section 3.1. 

\begin{lemma}[See \cite{BBFS18}, Lemma 3.3]
\label{lemma-distortion2}
Consider a state two-design on $B$, i.e. a set of rank-one projectors $\{P_z\}$ such that $\frac{1}{n} \sum_{z=1}^Z P_z \otimes P_z =\frac{2P_{Symm}}{d_B(d_B+1)}$, where $P_{Symm}$ is the projector on the symmetric subspace of $B\otimes B$. Let $M_B$ be the measurement defined by
\[ M_B(x)=\sum_z \frac{d_B}{Z} \trace(P_z x) \ket{z}\bra{z} .\]
Then, for any Hermitian opertor $\xi_{AB}$, 
\[ \Big\| (\mathbbm{1}_A\otimes M_B) \xi_{AB}\Big\|_{\tr} \geq  \frac{1}{d_B^2(d_B+1)} \Big\|\xi_{AB} \Big\|_{\tr}.\]
\end{lemma}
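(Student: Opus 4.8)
The plan is to exploit the single structural fact that the rank-one projectors $\{P_z\}$ form a state $2$-design, which makes the measurement $M_B$ informationally complete and therefore invertible up to a controlled loss in the trace norm. First I would rewrite the left-hand side. Because the pointer states $\ket{z}$ are orthonormal, the image $(\mathbbm{1}_A\otimes M_B)\xi_{AB}=\sum_z \frac{d_B}{Z}\,\eta_z\otimes\ket{z}\bra{z}$, with $\eta_z:=\Tr_B((\mathbbm{1}_A\otimes P_z)\xi_{AB})$, is block-diagonal in $z$, so its trace norm is simply $\big\|(\mathbbm{1}_A\otimes M_B)\xi_{AB}\big\|_{\tr}=\frac{d_B}{Z}\sum_z\|\eta_z\|_{\tr}$. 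Thus it suffices to lower bound $\sum_z\|\eta_z\|_{\tr}$ by $\|\xi_{AB}\|_{\tr}$ up to the claimed polynomial factor.

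The central tool is a reconstruction identity derived from the $2$-design property. Writing $P_{\mathrm{Symm}}=\tfrac12(\mathbbm{1}+\mathbb{F})$ with $\mathbb{F}$ the swap on $B\otimes B$, and applying the swap-trick identities $\Tr_2((\mathbbm{1}\otimes x)\mathbb{F})=x$ and $\Tr_2((\mathbbm{1}\otimes x)\mathbbm{1})=\Tr(x)\mathbbm{1}$ to the design relation $\tfrac1Z\sum_z P_z\otimes P_z=\tfrac{2P_{\mathrm{Symm}}}{d_B(d_B+1)}$, I would obtain, for every operator $x$ on $B$,
\[ x = d_B(d_B+1)\,\tfrac1Z\sum_z \Tr(P_z x)\,P_z-\Tr(x)\,\mathbbm{1}_B. \]
Applying this map to the $B$-subsystem of $\xi_{AB}$ then reconstructs the full operator as $\xi_{AB}=d_B(d_B+1)\,\tfrac1Z\sum_z\eta_z\otimes P_z-\xi_A\otimes\mathbbm{1}_B$, where $\xi_A=\Tr_B\xi_{AB}$.

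From here the bound follows by the triangle inequality together with $\|\eta_z\otimes P_z\|_{\tr}=\|\eta_z\|_{\tr}$ (as $P_z$ is a rank-one projector) and $\|\xi_A\otimes\mathbbm{1}_B\|_{\tr}=d_B\|\xi_A\|_{\tr}$. The main obstacle is precisely the additive term $\xi_A\otimes\mathbbm{1}_B$ arising from the $\Tr(x)\mathbbm{1}_B$ piece of the reconstruction: estimating $\|\xi_A\|_{\tr}$ crudely by $\|\xi_{AB}\|_{\tr}$ would destroy the inequality, since it enters multiplied by $d_B$. The resolution is to feed $\xi_A$ back through the same measurement data: the $1$-design consequence $\tfrac1Z\sum_z P_z=\tfrac{1}{d_B}\mathbbm{1}_B$ gives $\xi_A=\tfrac{d_B}{Z}\sum_z\eta_z$, whence $\|\xi_A\|_{\tr}\le\tfrac{d_B}{Z}\sum_z\|\eta_z\|_{\tr}$. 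Substituting both estimates collects everything into a single multiple of $\tfrac1Z\sum_z\|\eta_z\|_{\tr}$, and tracking the coefficient $d_B(d_B+1)$ from the reconstruction against the normalization $\tfrac{d_B}{Z}$ of $M_B$ yields the claimed factor $\tfrac{1}{d_B^2(d_B+1)}$ (in fact a slightly stronger constant, which a fortiori implies the stated bound). I would close by remarking that neither permutation invariance nor tracelessness of $\xi_{AB}$ enters the argument, matching the full generality of the statement.
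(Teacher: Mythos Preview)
Your argument is correct. The paper does not give its own proof of this lemma but simply cites \cite{BBFS18}; your approach via the $2$-design reconstruction identity $x=d_B(d_B+1)\tfrac{1}{Z}\sum_z\Tr(P_zx)P_z-\Tr(x)\mathbbm{1}_B$ is the standard one, and the way you control the $\xi_A\otimes\mathbbm{1}_B$ term by re-expressing $\xi_A$ through the $1$-design consequence is exactly the right move. Tracking the constants as you describe yields $\|(\mathbbm{1}_A\otimes M_B)\xi_{AB}\|_{\tr}\geq\tfrac{1}{2d_B+1}\|\xi_{AB}\|_{\tr}$, which for $d_B\geq 2$ is strictly stronger than the stated $\tfrac{1}{d_B^2(d_B+1)}$, so the lemma follows a fortiori.
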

This lemma has been proven in \cite{BBFS18}, Section 3.

It depends on the subsystem's underlying dimensions which of these distortions has a greater impact; we are interested in the minimum of the two.

Having established these three lemmata, we can state the proof for the original de Finetti statement with linear constraints, Theorem \ref{LinDeFinetti}.
\begin{proof}[Proof of Theorem \ref{LinDeFinetti}]

This proof has three key steps: First, we will show that bounding the left-hand side is related to bounding an expectation value of the squared trace distance of two states.
Then, Lemmas \ref{lemma-expectationvalue}, \ref{lemma-distortion1} and \ref{lemma-distortion2} are employed to bound this expectation value. Lastly, it remains to be shown that the linear constraints are obeyed.

For some $m\in[0,n-1]$, note that

\begin{equation*} \begin{split}
\Big\| \rho_{AB_{m+1}} - \sum_{z_1^m} p_Z(z_1^m) \rho_{A|z_1^m} \otimes \rho_{B_{m+1}|z_1^m} \Big\|_{\tr} & = \Big\| \rho_{AB_{m+1}} - \mathbb{E}_{z_1^m} \Big\{ \rho_{A|z_1^m} \otimes \rho_{B_{m+1}|z_1^m} \Big\} \Big\|_{\tr} \\
&\leq \mathbb{E}_{z_1^m} \Big\{ \Big\| \rho_{AB_{m+1}|z_1^m} -  \rho_{A|z_1^m} \otimes \rho_{B_{m+1}|z_1^m} \Big\|_{\tr}\Big\} \\
&\leq \sqrt{\mathbb{E}_{z_1^m} \Big\{ \Big\| \rho_{AB_{m+1}|z_1^m} -  \rho_{A|z_1^m} \otimes \rho_{B_{m+1}|z_1^m} \Big\|_{\tr}^2 \Big\}}.
\end{split}
\end{equation*} 

This follows from the fact that $\mathbb{E}_{z_1^m}\rho_{AB_{m+1}|z_1^m}=\rho_{AB_{m+1}}$, the convexity of the norm, and the convexity of the square function. Note that a bound on this expectation value of squared trace distance will result in a bound on the original statement, so we will now be bounding this expression instead.

To make use of Lemma \ref{lemma-expectationvalue}, which provides a bound on an expectation value of squared trace distance, the system on Bob's side must be classical. To this end, a measurement needs to be performed to map the $(m+1)$-th system from $B_{m+1}$ to $Z_{m+1}$ at the cost of some distortion. Here, either Lemma \ref{lemma-distortion1} or Lemma \ref{lemma-distortion2} could be applied, at different costs.

In the first case, using Lemma \ref{lemma-distortion1}, we obtain

\smallskip\noindent
\begin{align*}
&\mathbb{E}_{z_1^m} \Big\{ \Big\| \rho_{AB_{m+1}|z_1^m} -  \rho_{A|z_1^m} \otimes \rho_{B_{m+1}|z_1^m} \Big\|_{\tr}^2 \Big\}\\
& \leq ( 18\sqrt{d_A d_B})^2 \mathbb{E}_{z_1^m} \Big\{ \Big\| (M_A \otimes M_{B_{m+1}})\rho_{AB_{m+1}|z_1^m} -  \rho_{A|z_1^m} \otimes \rho_{B_{m+1}|z_1^m} \Big\|_{\tr}^2\Big\} 
 \\
& = (18\sqrt{d_A d_B})^2  \mathbb{E}_{z_1^m} \Big\{ \Big\| \rho_{AZ_{m+1}|z_1^m} -  \rho_{A|z_1^m} \otimes \rho_{Z_{m+1}|z_1^m} \Big\|_{\tr}^2 \Big\} 
 \\
& \leq (18\sqrt{d_A d_B})^2 {\frac{2\ln(2)\ln(d_A)}{n}}.
\end{align*} 

In the second case, using Lemma \ref{lemma-distortion2}, we obtain:

\smallskip\noindent
\begin{align*}
&\mathbb{E}_{z_1^m} \Big\{ \Big\| \rho_{AB_{m+1}|z_1^m} -  \rho_{A|z_1^m} \otimes \rho_{B_{m+1}|z_1^m} \Big\|_{\tr}^2 \Big\} \\
& \leq (d_B^2(d_B+1))^2 \mathbb{E}_{z_1^m} \Big\{ \Big\| (\mathbbm{1}_A \otimes M_{B_{m+1}})\rho_{AB_{m+1}|z_1^m} -  \rho_{A|z_1^m} \otimes \rho_{B_{m+1}|z_1^m} \Big\|_{\tr}^2 \Big\}
 \\
& =(d_B^2(d_B+1))^2  \mathbb{E}_{z_1^m} \Big\{ \Big\| \rho_{AZ_{m+1}|z_1^m} -  \rho_{A|z_1^m} \otimes \rho_{Z_{m+1}|z_1^m} \Big\|_{\tr}^2 \Big\} 
 \\
& \leq ( d_B^2(d_B+1))^2 {\frac{2\ln(2)\ln(d_A)}{n}}.
\end{align*}

Since we are interested in the best possible upper bound, it should be as small as possible; which upper bound is smaller depends on the underlying dimensions $d_A$ and $d_B$, which means that the best possible bound can be chosen depending on the setting of interest. In general, we take the minimum of the two possibilities:

\smallskip\noindent
\begin{align*}
&\mathbb{E}_{z_1^m} \Big\{ \Big\| \rho_{AB_{m+1}|z_1^m} -  \rho_{A|z_1^m} \otimes \rho_{B_{m+1}|z_1^m} \Big\|_{\tr}^2 \Big\} 
 \\
& \leq\min\Big\{(d_B^2(d_B+1))^2, (18\sqrt{d_A d_B})^2 \Big\} {\frac{2\ln(2)\ln(d_A)}{n}}
\end{align*}

Lastly, the two previous steps are combined in taking the square root of the above expectation value to obtain

\begin{equation*} \begin{split}
&\Big\| \rho_{AB_{m+1}} - \sum_{z_1^m} p_Z(z_1^m) \rho_{A|z_1^m} \otimes \rho_{B_{m+1}|z_1^m} \Big\|_{\tr} \\
& \leq \sqrt{\mathbb{E}_{z_1^m} \Big\{ \Big\| \rho_{AB_{m+1}|z_1^m} -  \rho_{A|z_1^m} \otimes \rho_{B_{m+1}|z_1^m} \Big\|_{\tr}^2 \Big\}}\\
& \leq\min\Big\{d_B^2(d_B+1), 18\sqrt{d_A d_B} \Big\} \sqrt{\frac{2\ln(2)\ln(d_A)}{n}} \\
&=\epsilon(d_B,d_A,n).
\end{split}
\end{equation*} 

Finally, it must be checked that the additional linear constraints are upheld, by showing that the initial condition $ \Lambda_{A\rightarrow C_A}(\rho_{AB_1^n})=X_{C_A}\otimes \rho_{B_1^n}$ implies $\Lambda_{A\rightarrow C_A}(\rho_{A|z_1^m})=X_{C_A}$. We find

\begin{equation*} \begin{split}
 \Lambda_{A\rightarrow C_A}(\rho_{A|z_1^m}) & =
\Lambda_{A\rightarrow C_A} \bigg(\dfrac{\Tr_{Z_1^m B_{m+1}^n}  \big((\mathbbm{1}_A\otimes \Pi_{z_1^m} \otimes \mathbbm{1}_{B_{m+1}^n}) \rho_{AB_1^n} \big)}{\Tr_{A Z_1^m B_{m+1}^n}  \big((\mathbbm{1}_A\otimes \Pi_{z_1^m} \otimes \mathbbm{1}_{B_{m+1}^n} ) \rho_{AB_1^n} \big)}\bigg)\\
&  = \dfrac{\Tr_{Z_1^m B_{m+1}^n}  \big(\Lambda_{A\rightarrow C_A}\big((\mathbbm{1}_A\otimes \Pi_{z_1^m} \otimes \mathbbm{1}_{B_{m+1}^n})  \rho_{AB_1^n} )\big)}{\Tr_{A Z_1^m B_{m+1}^n}  \big((\mathbbm{1}_A\otimes \Pi_{z_1^m} \otimes \mathbbm{1}_{B_{m+1}^n}) \rho_{AB_1^n} \big)}\\
&= \dfrac{\Tr_{Z_1^m B_{m+1}^n}  \big( X_{C_A} \otimes \big((\Pi_{z_1^m} \otimes \mathbbm{1}_{B_{m+1}^n})\rho_{B_1^n} \big)\big)}{\Tr_{A Z_1^m B_{m+1}^n}  \big((\mathbbm{1}_A\otimes \Pi_{z_1^m} \otimes \mathbbm{1}_{B_{m+1}^n}) \rho_{AB_1^n} \big)}\\
&= X_{C_A} \dfrac{\Tr_{Z_1^m B_{m+1}^n}  \big(( \Pi_{z_1^m} \otimes \mathbbm{1}_{B_{m+1}^n}) \rho_{B_1^n} )\big)}{\Tr_{A Z_1^m B_{m+1}^n}  \big((\mathbbm{1}_A\otimes \Pi_{z_1^m} \otimes \mathbbm{1}_{B_{m+1}^n}) \rho_{AB_1^n} \big)}\\
&=X_{C_A} \dfrac{\Tr_{Z_1^m B_{m+1}^n}  \big(( \Pi_{z_1^m} \otimes \mathbbm{1}_{B_{m+1}^n}) ( \Tr_A (\rho_{AB_1^n}) )\big)}{\Tr_{A Z_1^m B_{m+1}^n}  \big((\mathbbm{1}_A\otimes \Pi_{z_1^m} \otimes \mathbbm{1}_{B_{m+1}^n}) \rho_{AB_1^n} \big)}\\
&=X_{C_A} \dfrac{\Tr_{A Z_1^m B_{m+1}^n}  \big((\mathbbm{1}_A\otimes \Pi_{z_1^m} \otimes \mathbbm{1}_{B_{m+1}^n})\rho_{AB_1^n} )\big)}{\Tr_{A Z_1^m B_{m+1}^n}  \big((\mathbbm{1}_A\otimes \Pi_{z_1^m} \otimes \mathbbm{1}_{B_{m+1}^n}) \rho_{AB_1^n} \big)}\\
&=X_{C_A}
\end{split}
\end{equation*}

and analogous for $ \Gamma_{B_{m+1} \rightarrow C_B}$. Note that this is the first and only time that the assumption of permutation invariance is needed: to specify that $ \Gamma_{B_{m+1} \rightarrow C_B}$ acts in the same way as $ \Gamma_{B_n \rightarrow C_B}$ for any $m$. Then, 

\begin{equation*} \begin{split}
 \Gamma_{B_{m+1} \rightarrow C_B}(\rho_{B_{m+1}|z_1^m})& =
\Gamma_{B\rightarrow C_B} \bigg(\dfrac{\Tr_{Z_1^m B_{m+2}^n}  \big((  \Pi_{z_1^m}\otimes \mathbbm{1}_{B_{m+1}^n}) \rho_{B_1^n} \big)}{\Tr_{Z_1^m B_{m+1}^n}  \big((  \Pi_{B_1^m} \otimes \mathbbm{1}_{B_{m+1}^n} ) \rho_{B_1^n} \big)}\bigg)\\
&  = \dfrac{\Tr_{Z_1^m B_{m+2}^n}  \big(( \Pi_{z_1^m} \otimes \mathbbm{1}_{B_{m+1}^n}) \Gamma_{B\rightarrow C_B}( \rho_{B_1^n} )\big)}{\Tr_{Z_1^m B_{m+1}^n}  \big(( \Pi_{z_1^m} \otimes \mathbbm{1}_{B_{m+1}^n} ) \rho_{B_1^n} \big)}\\
&= \dfrac{\Tr_{Z_1^m B_{m+2}^n}  \big(( \Pi_{z_1^m} \otimes \mathbbm{1}_{B_{m+1}^n}) \rho_{B_1^m}\otimes  Y_{C_B}\otimes \rho_{B_{m+2}^n} )\big)}{\Tr_{Z_1^m B_{m+1}^n}  \big(( \Pi_{z_1^m} \otimes \mathbbm{1}_{B_{m+1}^n} ) \rho_{B_1^n} \big)}\\
&=Y_{C_B} \dfrac{\Tr_{Z_1^m B_{m+2}^n}  \big((\Pi_{z_1^m} \otimes \mathbbm{1}_{B_{m+1}^n}) ( \Tr_{B_{m+1}} (\rho_{B_1^n}) )\big)}{\Tr_{Z_1^m B_{m+1}^n}  \big(( \Pi_{z_1^m} \otimes \mathbbm{1}_{B_{m+1}^n} ) \rho_{B_1^n} \big)}\\
&=Y_{C_B} \dfrac{\Tr_{Z_1^m B_{m+1}^n}  \big(( \Pi_{z_1^m} \otimes \mathbbm{1}_{B_{m+1}^n} )\rho_{B_1^n} )\big)}{\Tr_{Z_1^m B_{m+1}^n}  \big(( \Pi_{z_1^m} \otimes \mathbbm{1}_{B_{m+1}^n} ) \rho_{B_1^n} \big)}\\
&=Y_{C_B}.
\end{split}
\end{equation*} 

\end{proof}

\newpage
\chapter{Stabilizer de Finetti Theorem with Linear Constraints with Stochastic Orthogonal Invariance on Both Sides}
\label{appendix-bothsides}

It may happen that one is interested in a problem where Alice and Bob both have access to a system with stochastic orthogonal invariance, or where two systems should each be approximated by Clifford operations. 
Therefore, one could also be interested in studying maximum channel fidelity for such cases via an SDP hierarchy: 

\begin{optimize}\label{Fidelity-bothsides}
\begin{equation*} \begin{split} 
F_{CC}(N,d_M)=\text{maximize} \ & F_s\Big(\Phi_{\tilde{B}R}, \big( ( \mathcal{D}_{B\rightarrow\tilde{B}} \circ \mathcal{N}_{\tilde{A}\rightarrow {B}} \circ \mathcal{E}_{A\rightarrow\tilde{A}})\otimes \mathbbm{1}_R \big) (\Phi_{AR})\Big)
\\
\text{subject to }
&  \mathcal{E}_{A\rightarrow\tilde{A}}, \mathcal{D}_{B\rightarrow\tilde{B}} \text{ are Clifford channels}
\end{split}
\end{equation*}
\end{optimize}

Therefore, and for completeness' sake, it should also be mentioned that the states in previous stabilizer de Finetti theorems with linear constraints can easily be extended to be approximated by stabilizer states on both sides, using the triangle inequality, which still preserves separability in the cut between Alice and Bob.

\begin{theorem}[Stabilizer de Finetti Theorem with Linear Constraints, with stochastic orthogonal invariance on both sides]
\label{mainthm-both}
Let $\rho_{A_1^n B_1^n}$ be a quantum state on the Hilbert space $\mathcal{H}_A^{\otimes n}\otimes\mathcal{H}_B^{\otimes n}$ that commutes with the action of $O_n$ acting on the systems $A_1^n=A_1 A_2 \cdots A_n$, and $B_1^n=B_1 B_2 \cdots B_n$, respectively. $\mathcal{H}_A$ is a Hilbert space containing $r_A$ qudits with local dimension $d_A$, and $\mathcal{H}_B$ contains $r_B$ qudits with local dimension $d_B$. 
Let $\Lambda_{A\rightarrow C_A}$ and $\Gamma_{B\rightarrow C_B}$ be linear maps, and $X_{C_A}$ and $Y_{C_B}$ be operators such that the following two linear constraints hold:
\[ \Lambda_{A\rightarrow C_A}(\rho_{AB_1^n})=X_{C_A}\otimes \rho_{B_1^n} ,\]
\[ \Gamma_{B_n \rightarrow C_B}(\rho_{B_1^n})=\rho_{B_1^{n-1}} \otimes Y_{C_B} .\]
Then, there exists a probability distribution $\{p_Z(z)\}_{z\in Z}$ and probability distributions $\{p_{S_A}(\sigma_A)\}$ and $\{p_{S_B}(\sigma_B)\}$ over the set of mixed stabilizer states on $r$ qudits on the respective spaces such that

 \begin{align*}
&\Big\|\rho_{AB}-\sum_{z,\sigma_A,\sigma_B} p_Z(z) p_{S_A}(\sigma_A) p_{S_B}(\sigma_B) \sigma_A \otimes  \sigma_B \Big\|_{\tr} \\
& \ \ \ \ \ \ \ \ \ \ \leq  \min\Big\{\epsilon(d_B^r,d_A^r,n),  \epsilon(d_A^r,d_B^r,n)\Big\} +\bar{\epsilon}(d_A,r_A,n)+\bar{\epsilon}(d_B,r_B,n)\\
 \end{align*}
where $\sigma_A$ are the mixed stabilizer states of $r_A$ $d_A$-dimensional qudits on $\mathcal{H}_A={(\mathbbm{C}^{d_A})}^{\otimes r}$ and $\sigma_B$ are the mixed stabilizer states of $r_B$ $d_B$-dimensional qudits on $\mathcal{H}_B={(\mathbbm{C}^{d_B})}^{\otimes r}$. $\epsilon(d_B^r,d_A^r,n)$ and $\epsilon(d_A^r,d_B^r,n)$ are defined in \eqref{epsilon}, and $\bar{\epsilon}(d_A,r_A,n)$ and $\bar{\epsilon}(d_B,r_B,n)$ are defined in \eqref{epsilon2}. In addition,

\[\Lambda_{A\rightarrow C_A}(\rho_{A|z})=X_{C_A},\ \Gamma_{B_{n} \rightarrow C_B}(\rho_{B|z})= Y_{C_B},\]

\[
\Big\| \sum_{z} p_Z(z)  \big(\rho_{A|z}- \sum_{\sigma_A} p_{S_A}(\sigma_A) \sigma_{A}\big) \Big\|_{\tr} \leq \bar{\epsilon}(d_A,r_A,n),
\]
\[
\Big\| \sum_{z} p_Z(z)  \big(\rho_{B|z_1^k}- \sum_{\sigma_B} p_{S_B}(\sigma_B) \sigma_{B}\big) \Big\|_{\tr} \leq \bar{\epsilon}(d_B,r_B,n).
\]
\end{theorem}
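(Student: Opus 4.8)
The plan is to reduce the two-sided statement to the already-proven one-sided Theorem \ref{mainthm} together with one further application of the bare stabilizer de Finetti theorem \ref{StabDeFinetti}, glued by the triangle inequality. Because the three distributions in the target ensemble $\sum_{z,\sigma_A,\sigma_B} p_Z(z)p_{S_A}(\sigma_A)p_{S_B}(\sigma_B)\,\sigma_A\otimes\sigma_B$ are independent, it collapses to a product $\bar{\sigma}_A\otimes\bar{\sigma}_B$ of two stabilizer mixtures, so it suffices to show that $\rho_{AB}$ is close to such a product. Throughout I assume, as is inherited from Theorem \ref{StabDeFinetti}, that both $d_A$ and $d_B$ are odd primes.

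First I would invoke Theorem \ref{mainthm}. Since $\rho_{A_1^n B_1^n}$ commutes with $\mathcal{O}_n$ on the $B$-register and the linear constraints hold, Theorem \ref{mainthm} (with the substitution $d_A\mapsto d_A^{r_A}$, as Alice's block now carries $r_A$ qudits) produces a distribution $p_Z$, conditional states $\rho_{A|z}$ and a stabilizer distribution $p_{S_B}$ with error $\epsilon(d_B^{r_B},d_A^{r_A},n)+\bar{\epsilon}(d_B,r_B,n)$. The $\rho_{A|z}$ are the post-measurement states of Definition \ref{rhoAz} for a POVM $\{\Pi_z\}$ acting on (ancillas of) the $B$-systems, and this measurement is disjoint from $A$; either because $m=0$ is selected in the proof of Theorem \ref{mainthm}, or, more robustly, because averaging the post-measurement states over the $B$-measurement returns the marginal, one has $\sum_z p_Z(z)\rho_{A|z}=\rho_A$. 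Writing $\bar{\sigma}_B=\sum_{\sigma_B}p_{S_B}(\sigma_B)\sigma_B$, step one then reads
\[ \big\|\rho_{AB}-\rho_A\otimes\bar{\sigma}_B\big\|_{\tr}\le \epsilon(d_B^{r_B},d_A^{r_A},n)+\bar{\epsilon}(d_B,r_B,n). \]

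Next I would treat Alice's side. Tracing out $B$ commutes with the $\mathcal{O}_n$-action on $A_1^n$, so $\rho_{A_1^n}$ is $\mathcal{O}_n$-invariant, and Theorem \ref{StabDeFinetti} applied with $k=1$ yields a stabilizer distribution $p_{S_A}$ with $\|\rho_A-\bar{\sigma}_A\|_{\tr}\le \bar{\epsilon}(d_A,r_A,n)$, where $\bar{\sigma}_A=\sum_{\sigma_A}p_{S_A}(\sigma_A)\sigma_A$. Because $\bar{\sigma}_B$ is a normalised state, tensoring preserves the trace distance, $\|\rho_A\otimes\bar{\sigma}_B-\bar{\sigma}_A\otimes\bar{\sigma}_B\|_{\tr}=\|\rho_A-\bar{\sigma}_A\|_{\tr}$, so the triangle inequality combines the two steps into
\[ \big\|\rho_{AB}-\bar{\sigma}_A\otimes\bar{\sigma}_B\big\|_{\tr}\le \epsilon(d_B^{r_B},d_A^{r_A},n)+\bar{\epsilon}(d_B,r_B,n)+\bar{\epsilon}(d_A,r_A,n). \]
Since Theorem \ref{mainthm} is symmetric under exchanging $A$ and $B$ (as remarked after its proof), I could equally run step one with Alice as the stabilizer side, replacing the first term by $\epsilon(d_A^{r_A},d_B^{r_B},n)$; taking whichever is smaller produces the stated $\min\{\cdot,\cdot\}$. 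The auxiliary marginal bounds on $\rho_{A|z}$ and $\rho_{B|z}$ follow verbatim from the corresponding statements in Theorem \ref{mainthm}, via the map $\mathcal{M}'$ of Observation \ref{CPTPmapLemma2} and its Alice-analogue.

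The main obstacle I anticipate is bookkeeping rather than substance: one must ensure the two stabilizer approximations and the conditioning distribution are defined consistently on the single underlying state, so that the errors genuinely add without the conditional-state construction on $B$ interfering with the $\mathcal{O}_n$-invariance used to approximate $A$. The decoupling hinges entirely on the observation that averaging over the disjoint $B$-measurement returns the exact $A$-marginal, so that the $A$-approximation may be carried out independently on $\rho_{A_1^n}$; verifying this identity, together with the preservation of trace distance under tensoring with the fixed state $\bar{\sigma}_B$ and the need for the odd-prime hypothesis on \emph{both} local dimensions, are the only genuinely necessary checks.
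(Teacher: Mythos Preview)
Your proposal is correct and follows essentially the same route as the paper: apply the one-sided Theorem \ref{mainthm} to handle Bob's side, apply the bare stabilizer de Finetti Theorem \ref{StabDeFinetti} to Alice's marginal, combine via the triangle inequality, and use the $A\leftrightarrow B$ symmetry to obtain the minimum. Your explicit use of the identity $\sum_z p_Z(z)\,\rho_{A|z}=\rho_A$ is in fact a slight streamlining of the paper's argument, which instead routes through the map $\mathcal{M}'$ of Observation \ref{CPTPmapLemma2} and has to worry about whether ``the measurement\ldots should be the same''; your observation makes that bookkeeping unnecessary.
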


\begin{proof}[Proof of Theorem \ref{mainthm-both}]
With all the constraints in place, it is possible to apply Theorem \ref{mainthm}. Note that $\dim(\mathcal{H}_A)=d_A^{r_A}$ and $\dim(\mathcal{H}_B)=d_B^{r_B}$. Therefore, there exists an $m\in[0,n-1]$ such that 
\begin{equation} \begin{split} \label{both-proof-step1}
&\Big\|  \rho_{AB_{m+1}} - \sum_{z_1^m} p_Z(z_1^m) \rho_{A|z_1^m} \otimes \sum_{\sigma} p_S(\sigma) \sigma_{B_{m+1}}) \Big\|_{\tr}  \\
& \leq \epsilon(d_B^{r_B},d_A^{r_A},n)  + 2d_B^{2(r_B+1)^2}d_B^{-\frac{1}{2} (n-{(m+1)})} 
\end{split} \end{equation}
with $\sigma_B$ being the mixed stabilizer states of $r$ qudits on $\mathcal{H}_B={(\mathbbm{C}^{d_B})}^{\otimes r_B}$, and
\[\Lambda_{A\rightarrow C_A}(\rho_{A|z})=X_{C_A},\ \Gamma_{B_{n} \rightarrow C_B}(\rho_{B|x})= Y_{C_B},\]

\[
\Big\| \sum_{z_1^k} p_Z(z_1^k)  \big(\rho_{B|z_1^k}- \sum_{\sigma_B} p_S(\sigma_B) \sigma_{B}\big) \Big\|_{\tr} \leq 2d_B^{2(r_B+1)^2}d_B^{-\frac{1}{2} (n-(m+1)} .
\]

In comparison with the one-sided version of this theorem, there is now an additional constraint on Alice's side: $\rho_{A_1^n B_1^n}$ commutes with the action of $O_n$ acting on the systems $A=A_1^n=A_1 A_2 \cdots A_n$, which directly implies that $\rho_{A_1^n}=\Tr_{B_1^n}(\rho_{A_1^n B_1^n})$ is invariant under the stochastic orthogonal group. Therefore, the stabilizer de Finetti theorem in Theorem \ref{StabDeFinetti}, holds for the subsystems $A_1^n$:

\begin{equation} \label{both-proof-step2}
 \Big\|\rho_{A_1^{k+1}} - \sum_{\sigma_A} p_{S_A}(\sigma_A) \sigma_A^{\otimes (k+1)}\Big\|_{\tr} \leq 2d_A^{2(r_A+1)^2}d_A^{-\frac{1}{2} (n-(k+1))}  
\end{equation}

Using Observation \ref{CPTPmapLemma2}, and the same argumentation as in the final steps of the proof of Theorem \ref{mainthm} (see \eqref{proof-steplast}), there exists a CPTP map $\mathcal{M}'$, which transforms \eqref{both-proof-step2} into

\smallskip\noindent
\begin{equation} \begin{split}\label{both-proof-step3}
 2d_A^{2(r_A+1)^2}d_A^{-\frac{1}{2} (n-(k+1))} &\geq \Big\|\rho_{A_1^{k+1}} - \sum_{\sigma_A} p_{S_A}(\sigma_A) \sigma_{A_{k+1}} \Big\|_{\tr}\\
& \geq \Big\| \mathcal{M}' \big(\rho_{A_1^{k+1}} - \sum_{\sigma_A} p_{S_A}(\sigma_A)  \sigma_{A_{k+1}} \big) \Big\|_{\tr}\\
&=\Big\| \sum_{z_1^k} p_Z(z_1^k)  \big(\rho_{A_{k+1}|z_1^k}- \sum_{\sigma_A} p_{S_A}(\sigma_A) \sigma_{A_{k+1}}\big) \Big\|_{\tr}.
 \end{split}
\end{equation}

It is integral that the measurement contained in the maps used in \eqref{both-proof-step2} and here should be the same, to ensure that the resulting probability distribution $p_Z$ is the same.

Then, \eqref{both-proof-step1} and \eqref{both-proof-step3} can be combined via the triangle inequality:

\smallskip\noindent
\begin{align*}
&\Big\|\rho_{AB}-\sum_{z,\sigma_A,\sigma_B} p_Z(z) p_{S_A}(\sigma_A) p_{S_B}(\sigma_B) \sigma_A \otimes  \sigma_B \Big\|_{\tr}\\
&=\Big\| \sum_{z_1^k} p_Z(z_1^k)  \big(\rho_{B|z_1^k}- \sum_{\sigma_B} p_{S_B}(\sigma_B) \sigma_{B}\big) \\&\ \ \ \ \ \ \ \ \ \ \ \ \ \ \ \ \ \ \ \ + \Big( \sum_{z_1^k} p_Z(z_1^k)  \big(\rho_{A_{k+1}|z_1^k}- \sum_{\sigma_A} p_{S_A}(\sigma_A) \sigma_{A_{k+1}}\big) \Big) \otimes \sum_{\sigma_B} p_{S_B}(\sigma_B) \sigma_B \Big\|_{\tr}\\
&\leq \Big\| \sum_{z_1^k} p_Z(z_1^k)  \big(\rho_{B|z_1^k}- \sum_{\sigma_B} p_{S_B}(\sigma_B) \sigma_{B}\big)\Big\|_{\tr} \\& \ \ \ \ \ \ \ \ \ \ \ \ \ \ \ \ \ \ \ \ + \Big\| \Big( \sum_{z_1^k} p_Z(z_1^k)  \big(\rho_{A_{k+1}|z_1^k}- \sum_{\sigma_A} p_{S_A}(\sigma_A) \sigma_{A_{k+1}}\big) \Big) \otimes \sum_{\sigma_B} p_{S_B}(\sigma_B) \sigma_B \Big\|_{\tr}\\
&\leq \Big\| \sum_{z_1^k} p_Z(z_1^k)  \big(\rho_{B|z_1^k}- \sum_{\sigma_B} p_{S_B}(\sigma_B) \sigma_{B}\big)\Big\|_{\tr} + \Big\|  \sum_{z_1^k} p_Z(z_1^k)  \big(\rho_{A_{k+1}|z_1^k}- \sum_{\sigma_A} p_{S_A}(\sigma_A) \sigma_{A_{k+1}}\big)  \Big\|_{\tr}\\
&\leq \epsilon(d_B^{r_B},d_A^{r_A},n) + 2d_B^{2(r_B+1)^2}d_B^{-\frac{1}{2} (n-{(m+1)})}  +2d_A^{2(r_A+1)^2}d_A^{-\frac{1}{2} (n-(k+1))}
\end{align*}

Because permutations are a subgroup of stochastic orthogonal group, we can choose $m=k=0$.

Here, it has to be noted that the proof could also be done the other way around, switching out $A$ and $B$, and first using the orthogonal invariance of $A$ to get a statement about separability and closeness to convex combinations of stabilizer states, and then using the invariance on the $B$ side. If this was switched, the dimensions $d_A$ and $d_B$ would also be switched in the bound. In total, we are interested in the minimum bound, given by taking the minimum over all the available options. Therefore, there are four terms in the final statement of which one must choose the minimum. 

Finally, the linear constraint on Alice's side can be obtained using Observation \ref{CPTPmapLemma}, in analogy to the final step in the proof of \ref{mainthm}.
\end{proof}

Using Theorem \ref{mainthm-both}, the hierarchy described in \eqref{L_n} can be extended to a hierarchy where the stochastic orthogonal invariance of both Alice's and Bob's side implies the closeness to a separable state with stabilizer states on both sides.

\begin{optimize}
\begin{equation*} \begin{split}
\text{maximize} \ & d_{\tilde{A}} d_B^{r_B} \Tr \Big( \big(J_{\tilde{A}B}^N \otimes \Phi_{A\tilde{B}} \big) \big(  \rho_{A\tilde{A}B\tilde{B}} \big) \Big)\\
\text{subject to }&  \rho_{(A\tilde{A})_1^n (B\tilde{B})_1^n}  \geq 0, \ \Tr(\rho_{(A\tilde{A})_1^n (B\tilde{B})_1^n})=1 \\
& \rho_{(A\tilde{A})_1^n(B\tilde{B})_1^n} \text{ is invariant under the action of } O_n \text{ with respect to }A_1^n\text{ and }B_1^n\\
  & \Tr_{\tilde{A_1}}( \rho_{(A\tilde{A})_1^n(B\tilde{B})_1^n} ) =\frac{\mathbbm{1}_{A_1}}{d_{A}^{r_A}}\otimes \rho_{(A\tilde{A})_2^n(B\tilde{B})_1^n}\\&
\Tr_{\tilde{B}_n}( \rho_{(A\tilde{A})_1^n(B\tilde{B})_1^n})  = \rho_{(A\tilde{A})_1^n(B\tilde{B})_1^{n-1}} \otimes \frac{\mathbbm{1}_{B_{n}}}{d_B^{r_B}} \\
\end{split}
\end{equation*}
\end{optimize}

In the one-sided hierarchy, only the decoder (or, symmetrically, only the encoder) is approximated by Choi matrices of stabilizer states, which are Clifford operations. With stochastic orthogonal invariance on both sides, both the encoder and the decoder are approximated by Clifford operations.

It is also possible to extend stabilizer de Finetti theorem for qubits (Theorem \ref{mainthm-qubit}) such that both Alice's and Bob's part of the state are invariant under all permutations and anti-identity, leading to an approximation for qubits by stabilizer state tensor powers with separability between Alice and Bob.

\newpage
\pagenumbering{gobble}
\thispagestyle{empty}
\cleardoublepage
\newpage

\fancyhead{}
\section*{Acknowledgement}

For this thesis to come into existence, many parties had to communicate over large distances via a complicated protocol. The goal of the protocol was to hand in this thesis. The state of my brain was prepared via local interaction in Cologne, transmitted to Zurich, transformed via local operations, and then (surprisingly) sent to Constance to work on the desired output. 
During this process, I have interacted with many people without whom the output would have been considerably more noisy.

I owe particular thanks to Professor David Gross for letting me join his group, getting me started on this project and offering me valuable support and life advice. In addition, I want to extend my thanks to the whole Gross group, especially Felipe Montealegre-Mora for helping me understand some representation theory stuff and Mariami Gachechiladze for trying to understand QKD with me.

I also owe many thanks to Professor Renato Renner for accepting me into his group, NCCR QSIT for giving me the opportunity to come to Zurich for one semester, and the whole Renner group, in particular Joe Renes for working with me on the SDP hierarchy. I am especially grateful to have had the opportunity to participate in the NCCR QSIT Winter School and General Meeting, where I learned a lot.

Furthermore, I want to thank my family and friends who supported me locally and remotely (especially in the final month) throughout my whole studies.
The output of this protocol also benefitted greatly from error correction performed by Joe Renes, Felipe Monetalegre-Mora, Wolfgang Belzig and Carsten Speckmann.

I also want to thank the Bonn-Cologne Graduate School for support of my master studies, Petra Neubauer-Günther for support and advice pertaining to the feasibility of this joint project, and everyone who facilitated this exchange.

\newpage
\pagenumbering{gobble}
\thispagestyle{empty}
\cleardoublepage
\newpage

\fancyhead{}
\section*{Eidesstattliche Erklärung}

Hiermit versichere ich an Eides statt, dass ich die vorliegende Arbeit selbstständig und ohne die Benutzung anderer als der angegebenen Hilfsmittel angefertigt habe. Alle Stellen, die wörtlich oder sinngemäß aus veröffentlichten oder nicht veröffentlichten Schriften entnommen wurden, sind als solche kenntlich gemacht. Die Arbeit ist in gleicher oder ähnlicher Form oder auszugsweise im Rahmen einer anderen Prüfung noch nicht vorgelegt worden. Ich versichere, dass die eingereichte elektronische Fassung der eingereichten Druckfassung vollständig entspricht.
\vspace{1cm}

\begin{tabular}{lp{12em}l}
 \hspace{5cm}   && \hspace{5cm} \\\cline{1-1}\cline{3-3}
 &&\\
 Ort, Datum     && Unterschrift
\end{tabular}

\end{document}